\documentclass[11pt]{article}
\usepackage[T1]{fontenc}
\usepackage[margin = 1in]{geometry}
\usepackage{textcomp}
\usepackage{graphicx}

\usepackage{amsmath,amssymb,amsthm,hyperref,color}
\usepackage{float}
\usepackage[font=small]{caption}
\usepackage[font=footnotesize]{subcaption}
\usepackage[inline]{enumitem}
\usepackage{filecontents}
\usepackage{tikz}
\usepackage{pgfplots}
\usetikzlibrary{arrows,shapes.geometric,calc,decorations,backgrounds}
\hypersetup{colorlinks=true,citecolor=blue, linkcolor=blue,
urlcolor=blue}

\newtheorem{theorem}{Theorem}
\newtheorem{lemma}[theorem]{Lemma}

\newtheorem{remark}[theorem]{Remark}
\newtheorem{example}[theorem]{Example}

\newtheorem{definition}[theorem]{Definition}

\def\calI{\mathcal{I}}
\def\Gc{\mathcal{G}}
\def\Tc{\mathcal{T}}
\def\Yc{\mathcal{Y}}
\def\Ac{\mathcal{A}}
\def\qb{\mathbf{q}}
\def\Ising{\mathsf{Ising}}
\def\AvgCut{\textsc{Avg-Cut}}
\def\MaxCut{\textsc{Max-Cut}}
\def\Cut{\mathrm{Cut}}
\def\bit{\mathrm{bits}}
\newcommand{\pl}{\textup{\texttt{+}}}
\newcommand{\mi}{\textup{\texttt{-}}}
\newcommand{\plm}{\textup{\texttt{\textpm}}}
\newcommand{\mip}{\textup{\texttt{\raisebox{.1em}{%
	\reflectbox{\rotatebox[origin=c]{180}{\textpm}}}}}}

\def\Zin{Z^{\mathsf{in}}}
\def\Zout{Z^{\mathsf{out}}}

\def\Lb{\mathbf{L}}

\def\lambdab{\ensuremath{\boldsymbol{\lambda}}}

\def\Mc{\ensuremath{\mathcal{M}}}
\def\Eb{\ensuremath{\mathbf{E}}}

\def\Uc{\ensuremath{\mathcal{U}}}
\def\Dc{\ensuremath{\mathcal{D}}}
\def\T{\ensuremath{\intercal}}
\newcommand{\EST}{\mathsf{EST}}

\newcommand{\fptas}{\mathsf{FPTAS}}
\newcommand{\fpras}{\mathsf{FPRAS}}

\newcommand{\eps}{\epsilon}

\makeatletter
\def\prob#1#2#3{\goodbreak\begin{list}{}{\labelwidth\z@ \itemindent-\leftmargin
                        \itemsep\z@  \topsep6\p@\@plus6\p@
                        \let\makelabel\descriptionlabel}
                \item[\it Name]#1
               \item[\it Instance]                #2
                \item[\it Output]#3
                \end{list}}
\makeatother

\begin{filecontents}{\jobname.bib}

@article {SST,
    AUTHOR = {Sinclair, Alistair and Srivastava, Piyush and Thurley, Marc},
     TITLE = {Approximation algorithms for two-state anti-ferromagnetic spin
              systems on bounded degree graphs},
   JOURNAL = {J. Stat. Phys.},
  FJOURNAL = {Journal of Statistical Physics},
    VOLUME = {155},
      YEAR = {2014},
    NUMBER = {4},
     PAGES = {666--686}
}

@InProceedings{maxcut,
author="Alimonti, Paola
and Kann, Viggo",
title="Hardness of approximating problems on cubic graphs",
booktitle="Algorithms and Complexity",
year="1997",
pages="288--298",
}

@article{JS:ising,
  title={Polynomial-time approximation algorithms for the {I}sing model},
  author={Jerrum, Mark and Sinclair, Alistair},
  journal={SIAM Journal on Computing},
  volume={22},
  number={5},
  pages={1087--1116},
  year={1993},
  publisher={SIAM}
}

@inproceedings{SSS,
author={Leonard J. Schulman and Alistair Sinclair and Piyush Srivastava},
title={Symbolic Integration and the Complexity of Computing Averages},
booktitle={Proceedings of the 56th Annual IEEE Symposium on Foundations of Computer Science (FOCS)},
year={2015},
pages={1231--1245}
}

@article{SinclairSrivastava,
 author={Sinclair, Alistair and Srivastava, Piyush},
title={Lee-{Y}ang Theorems and the Complexity of Computing Averages},
journal={Communications in Mathematical Physics},
volume={329},
number={3},
pages={827--858},
year={2014}
}

@article {MSW07,
author = {Martinelli, Fabio and Sinclair, Alistair and Weitz, Dror},
title = {Fast mixing for independent sets, colorings, and other models on trees},
journal = {Random Structures \& Algorithms},
volume = {31},
number = {2},
pages = {134-172},
doi = {10.1002/rsa.20132},
year = {2007}
}

@article{Peters,
author = "Peters, Han and Regts, Guus",
doi = "10.1307/mmj/1541667626",
fjournal = "Michigan Mathematical Journal",
journal = "Michigan Math. J.",
number = "1",
pages = "33--55",
title = "On a conjecture of {S}okal concerning roots of the independence polynomial",
volume = "68",
year = "2019"
}

@inproceedings{GGS,
  author    = {Andreas Galanis and
               Leslie Ann Goldberg and
               Daniel \v{S}tefankovi\v{c}},
  title     = {Inapproximability of the Independent Set Polynomial Below the {S}hearer
               Threshold},
  booktitle = {44th International Colloquium on Automata, Languages, and Programming,
               {ICALP} 2017},
  pages     = {28:1--28:13},
  year      = {2017}
}

@article{caijcss,
  author    = {Jin{-}Yi Cai and
               Andreas Galanis and
               Leslie Ann Goldberg and
               Heng Guo and
               Mark Jerrum and
               Daniel Stefankovic and
               Eric Vigoda},
  title     = {{\#}{BIS}-hardness for 2-spin systems on bipartite bounded degree graphs
               in the tree non-uniqueness region},
  journal   = {J. Comput. Syst. Sci.},
  volume    = {82},
  number    = {5},
  pages     = {690--711},
  year      = {2016}
}

@article{ALO20,
    title={Spectral Independence in High-Dimensional Expanders and Applications to the Hardcore Model},
    author={Anari, Nima and Liu, Kuikui and Oveis Gharan, Shayan},
    journal={arXiv preprint arXiv:2001.00303},
    year={2020}
}

@article{glauber,
author = {Restrepo, Ricardo and \v{S}tefankovi\v{c}, Daniel and Vera, Juan C. and Vigoda, Eric and Yang, Linji},
title = {Phase Transition for {G}lauber Dynamics for Independent Sets on Regular Trees},
journal = {SIAM Journal on Discrete Mathematics},
volume = {28},
number = {2},
pages = {835--861},
year = {2014}
}

@article{Sinclair,
author="Sinclair, Alistair
and Srivastava, Piyush
and Thurley, Marc",
title="Approximation Algorithms for Two-State Anti-Ferromagnetic Spin Systems on Bounded Degree Graphs",
journal="Journal of Statistical Physics",
year="2014",
volume="155",
number="4",
pages="666--686"
}

 @inproceedings{LLY,
  author    = {Liang Li and
               Pinyan Lu and
               Yitong Yin},
  title     = {Correlation Decay up to Uniqueness in Spin Systems},
  booktitle = {Proceedings of the 24th Annual {ACM-SIAM} Symposium on Discrete
               Algorithms (SODA)},
  pages     = {67--84},
  year      = {2013}
}

@inproceedings{Weitz,
 author = {Weitz, Dror},
 title = {Counting Independent Sets Up to the Tree Threshold},
 booktitle = {Proceedings of the 38th Annual ACM Symposium on Theory of Computing (STOC)},
 year = {2006},
 pages = {140--149}
}

@article{PR,
author = {Patel, Viresh and Regts, Guus},
title = {Deterministic polynomial-time approximation algorithms for partition functions and graph polynomials},
journal = {SIAM Journal on Computing},
volume = {46},
number = {6},
pages = {1893--1919},
year = {2017}
}

@article{SlySun,
author = "Sly, Allan and Sun, Nike",
doi = "10.1214/13-AOP888",
fjournal = "The Annals of Probability",
journal = "Ann. Probab.",
month = "11",
number = "6",
pages = "2383--2416",
publisher = "The Institute of Mathematical Statistics",
title = "Counting in two-spin models on d-regular graphs",
url = "http://dx.doi.org/10.1214/13-AOP888",
volume = "42",
year = "2014"
}

@inproceedings{Sly10,
 author = {Sly, Allan},
 title = {Computational Transition at the Uniqueness Threshold},
 booktitle = {Proceedings of the 51st Annual IEEE Symposium on Foundations of Computer Science (FOCS)},
 year = {2010},
 pages = {287--296}
}

@article{GGhypergraphs,
title = "The complexity of approximately counting in 2-spin systems on k-uniform bounded-degree hypergraphs ",
journal = "Information and Computation ",
volume = "251",
number = "",
pages = "36 - 66",
year = "2016",
note = "",
issn = "0890-5401",
doi = "http://dx.doi.org/10.1016/j.ic.2016.07.003",
url = "http://www.sciencedirect.com/science/article/pii/S0890540116300426",
author = "Andreas Galanis and Leslie Ann Goldberg",
keywords = "Approximate counting",
keywords = "Bounded-degree hypergraphs",
keywords = "2-Spin systems",
keywords = "Counting constraint satisfaction "
}

@article{GSVIsing,
title = "Inapproximability of the Partition Function for the Antiferromagnetic {I}sing and Hard-Core Models",
doi = "10.1017/S0963548315000401",
journal = "Combinatorics, Probability and Computing",
publisher = "Cambridge University Press",
address = "Cambridge, UK",
author = "Andreas Galanis and Daniel \v{S}tefankovi\v{c} and Eric Vigoda",
volume = "25",
number = "4",
pages = "500-559",
year = "2016"
}

@article{GSV:colorings,
author = {Galanis, Andreas and \v{S}tefankovi\v{c}, Daniel and Vigoda, Eric},
title = {Inapproximability for Antiferromagnetic Spin Systems in the Tree Nonuniqueness Region},
year = {2015},
volume = {62},
number = {6},
doi = {10.1145/2785964},
journal = {J. ACM},
articleno = {Article 50},
numpages = {60}
}

@inproceedings{complex,
author = {Bez\'{a}kov\'{a}, Ivona and Galanis, Andreas and Goldberg, Leslie Ann and \v{S}tefankovi\v{c}, Daniel},
title = {Inapproximability of the Independent Set Polynomial in the Complex Plane},
year = {2018},
booktitle = {Proceedings of the 50th Annual ACM SIGACT Symposium on Theory of Computing},
pages = {1234–1240},
numpages = {7},
series = {STOC 2018}
}

\end{filecontents}

\title{The complexity of approximating averages on bounded-degree graphs}
\author{
Andreas Galanis\thanks{
  Department of Computer Science, University of Oxford, Wolfson Building, Parks Road, Oxford, OX1~3QD, UK.}
\and
Daniel \v{S}tefankovi\v{c}\thanks{Department of Computer Science, University of Rochester, USA. Research supported in part by NSF grant CCF-1563757.}
\and
Eric Vigoda\thanks{Department of Computer Science, University of California, Santa Barbara, CA 93106, USA. Research supported in part by NSF grant CCF-1563838.}
}
\date{July 19, 2020}

\begin{document}

\maketitle
\thispagestyle{empty}
\begin{abstract} 
We prove that, unless P$=$NP, there is no polynomial-time algorithm to approximate within
some multiplicative constant the average size of an independent set in graphs of maximum degree~$6$.
This is a special case of a more general result for the hard-core model defined on 
independent sets weighted by a parameter $\lambda>0$.     In the general setting, we prove that, unless P$=$NP,
for all $\Delta\geq 3$, all $\lambda>\lambda_c(\Delta)$,
there is no FPTAS which applies to all graphs of maximum degree $\Delta$ for
computing the average size of the independent set in the Gibbs distribution,
where $\lambda_c(\Delta)$ is the critical point for the uniqueness/non-uniqueness phase transition
on the $\Delta$-regular tree.  Moreover, we prove that for $\lambda$  in a dense set of this
non-uniqueness region the problem is NP-hard to approximate within some constant factor.
Our work extends to the antiferromagnetic Ising model and generalizes to all 2-spin antiferromagnetic models,
establishing hardness of computing the average magnetization in the tree non-uniqueness region.

Previously, Schulman, Sinclair and Srivastava (2015) showed that it is \#P-hard to compute the average magnetization exactly, but no hardness of approximation
results were known. Hardness results of Sly (2010) and Sly and Sun (2014) for  approximating the partition function do not imply hardness of computing averages.
The new ingredient in our reduction is an intricate construction of pairs of rooted trees whose marginal distributions at the root agree but their derivatives disagree. The main technical contribution is controlling what marginal distributions and derivatives are achievable and using Cauchy's functional equation to argue existence of the gadgets.
\end{abstract}

\newpage

\clearpage
 \setcounter{page}{1}
\section{Introduction}

This paper addresses the computational problem of computing averages of simple functions over 
combinatorial structures of a graph.  Can we estimate elementary statistics of combinatorial structures in polynomial-time?  This genre of problems is nicely illustrated for the example of independent sets.

Given a graph $G=(V,E)$ can we efficiently estimate the average size of an independent set in~$G$?
For graph $G=(V,E)$, let $\calI_G$ denote the set of independent sets of $G$, and let
$\mu_G$ denote the uniform distribution over $\calI_G$.  Denote the average independent set size by $\Mc(G)=\Eb_{\sigma\sim \mu}\big[|\sigma|\big]$.

Schulman \emph{et al.}~\cite{SSS} (see also~\cite{SinclairSrivastava}) proved that exactly computing the average independent set size is \#P-hard for bounded-degree  graphs.  
We investigate approximation algorithms for the problem.  For a constant $C>1$ we say there is a $C$-approximation algorithm for
the average independent set size if the algorithm outputs an estimate $\EST$ for which $\tfrac{1}{C}\times\Mc(G)\leq\EST\leq C\times \Mc(G)$.
An $\fptas$ is an 
algorithm which, for any input $G=(V,E)$ and $\eps>0$, achieves a $(1+\eps)$-approximation factor in time $poly(|V|,1/\eps)$.

Weitz~\cite{Weitz} presented an $\fptas$ for estimating $|\calI_G|$, the number of independent sets,
in graphs of maximum degree $\leq 5$, and this immediately yields an $\fptas$ for the average independent set size in graphs of maximum degree $\leq 5$, see also \cite{ALO20,PR,Peters} for new algorithmic approaches. 
We prove that this result is optimal. In fact, we prove that approximating the average independent set size in graphs of maximum degree $6$ is
hard within a constant factor.

\begin{theorem}\label{thm:mainind1}
There is a constant $C>1$ such that, for all integers $\Delta\geq 6$, for  graphs $G$ of maximum degree $\Delta$ there is no polynomial-time $C$-approximation algorithm for computing the average independent-set size of $G$, unless \emph{P=NP}.
\end{theorem}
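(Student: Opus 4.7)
The plan is to reduce from Max-Cut on cubic graphs, which is APX-hard by~\cite{maxcut}, adapting the partition-function inapproximability frameworks of Sly~\cite{Sly10}, Sly--Sun~\cite{SlySun}, and the bipartite refinements of~\cite{GSVIsing,caijcss}. Those works encode a Max-Cut instance $H$ into a graph $G$ of maximum degree $\Delta$ by replacing each edge of $H$ with a suitable random bounded-degree ``phase gadget''; for any $\lambda>\lambda_c(\Delta)$ the hard-core partition function $Z_G(\lambda)$ concentrates on two symmetric clusters of configurations that correspond to the two non-trivial fixed points of the $\Delta$-regular tree recursion, and the cluster weights, chained across the edges of $H$, encode $\MaxCut(H)$ with a constant-factor gap. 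Thus a constant-factor approximation of $Z_G(\lambda)$ would solve Max-Cut on $H$.

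The obstruction to porting this reduction to $\Mc(G)$ is that the two dominant clusters are related by the global $+/-$ symmetry of the hard-core model, and so they contribute the \emph{same} expected independent-set size; chaining these contributions across $H$, the value $\Mc(G)$ ends up depending to leading order only on $|V_H|$ and $|E_H|$, and the Max-Cut information is washed out. The plan for breaking this symmetry is to attach pendant rooted trees of maximum degree $\Delta$ at selected vertices of the Sly--Sun construction, using a pair of trees $(T_1,T_2)$ satisfying: (a) the root-marginal ratio $Z_{T_i}^{\mathsf{in}}/Z_{T_i}^{\mathsf{out}}$ is the same for $i=1,2$, but (b) the conditional expected internal size $\Eb[\,|\sigma_{T_i}|\mid\sigma(r)\,]$ differs between $i=1$ and $i=2$. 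Property~(a) implies that exchanging $T_1$ for $T_2$ at any attachment vertex multiplies $Z_G$ by a single constant factor that is independent of the external configuration; in particular the Sly--Sun cluster analysis and all external marginals are preserved. Property~(b) implies that the same exchange shifts $\Mc(G)$ asymmetrically between the two phases. Attaching $T_1$ on one side and $T_2$ on the other side of each phase gadget then calibrates $\Mc(G)$ to be, up to lower-order terms, an affine function of $\MaxCut(H)$ with a constant-factor gap, which is the hardness that is wanted.

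The main technical obstacle is to establish the existence of such tree pairs for every $\Delta\geq 3$ and every $\lambda>\lambda_c(\Delta)$. Encoding each rooted tree $T$ of maximum degree $\Delta$ by the pair $(q_T,\partial_\lambda q_T)$ with $q_T=Z_T^{\mathsf{in}}/Z_T^{\mathsf{out}}$, both coordinates evolve under tree composition (appending subtrees at a new root) by explicit rational recurrences. The goal is to show that the set of achievable pairs is \emph{not} contained in the graph of any function $\partial_\lambda q=f(q)$: otherwise two trees with equal root marginals would necessarily have equal derivatives, blocking the construction. If such an $f$ existed, the composition recurrences would impose on $f$ a Cauchy-type functional equation on a dense set of inputs; I plan to invoke Cauchy's functional equation, together with a measurability/continuity argument, to pin $f$ down to a very restricted form, and then rule this form out by an explicit tree construction. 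Once the tree pairs $(T_1,T_2)$ are in hand, plugging them into the modified Sly--Sun reduction of the previous paragraph yields a constant-factor inapproximability gap for $\Mc(G)$ on graphs of maximum degree $\Delta$; specializing to $\lambda=1$ and $\Delta\geq 6$, where $\lambda_c(\Delta)<1$, proves Theorem~\ref{thm:mainind1}.
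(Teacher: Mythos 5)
Your high-level strategy coincides with the paper's: a Max-Cut reduction built on the Sly--Sun bipartite phase gadget, broken out of its $\pm$ symmetry by attaching pendant rooted trees with equal root-marginal ratios (``effective field'') but unequal conditional internal expectations (``magnetization gap''), with the existence of such trees argued via Cauchy's functional equation. This is essentially the paper's Theorem~\ref{thm:vanilla} combined with the gadget machinery of Section~\ref{sec:gadget}.

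However, there is a genuine gap in the claim that for \emph{every} $\Delta\geq 3$ and every $\lambda>\lambda_c(\Delta)$ one can produce a pair of trees with \emph{exactly} equal root-marginal ratios and different magnetization gaps, from which you then ``specialize to $\lambda=1$.'' This is false: as the paper observes just before Theorem~\ref{thm:const1}, for a fixed pair of trees the effective fields agree at only finitely many $\lambda$ (unless the magnetization gaps agree for all $\lambda$), so the set of $\lambda$ admitting an exactly-matched pair is an algebraic (hence measure-zero) set. The Cauchy-functional-equation machinery, which is the paper's Theorem~\ref{thm:const1}, yields only \emph{approximately} equal effective fields at a general $\lambda$; this suffices for the weaker $1\pm \kappa/\log n$ inapproximability of Theorem~\ref{thm:maingen2}, but not for the \emph{constant-factor} hardness of Theorem~\ref{thm:mainind1}. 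Exact equality is recovered only after a further bootstrap (Theorem~\ref{thm:const2}) to a dense set of algebraic $\lambda$, which need not contain $\lambda=1$. To actually prove Theorem~\ref{thm:mainind1}, the paper instead exhibits an explicit pair of trees for $\lambda=1$ with $R_{\Tc_1}=R_{\Tc_2}=\tfrac23$ and $M_{\Tc_1}=\tfrac56\neq\tfrac34=M_{\Tc_2}$ (Example~\ref{e4f334}), verified by direct computation, and feeds these into Theorem~\ref{thm:vanilla}. Your plan would work if you replaced the ``establish existence for every $\lambda$, then specialize'' step with an explicit verification at $\lambda=1$ (or an argument that $\lambda=1$ lies in the dense algebraic set produced by the bootstrap), but as written the existence claim on which the constant-factor gap rests is not established.
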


This theorem is a special case of a more general result for the hard-core model, which is a statistical physics model of
particular combinatorial interest.  The hard-core model is defined on independent sets weighted by a
parameter $\lambda>0$, known as the fugacity.  An independent set $\sigma\in\calI_G$ has weight $w(\sigma) = \lambda^{|\sigma|}$.
For a graph $G$ and fugacity $\lambda>0$, the Gibbs distribution is defined as
$\mu_{G;\lambda}(\sigma)=w(\sigma)/Z_{G;\lambda}$ where the normalizing factor $Z_{G;\lambda} = \sum_{\tau\in\calI_G} w(\tau)$ is known as the partition function.

The earlier case of unweighted independent sets corresponds to the hard-core model with $\lambda=1$.  
Hence we use the same notation $\Mc(G)=\Eb_{\sigma\sim \mu}\big[|\sigma|\big]$ to denote
the average independent set size in the Gibbs distribution.

On the $\Delta$-regular tree, the hard-core model undergoes a phase transition at the critical point 
$\lambda_c(\Delta)=\tfrac{(\Delta-1)^{\Delta-1}}{(\Delta-2)^{\Delta}}$.   When $\lambda\leq\lambda_c(\Delta)$ there is a unique
infinite-volume Gibbs measure on the $\Delta$-regular tree (roughly, this corresponds to the decay of the ``influence'' of the leaves on the root),
whereas when $\lambda>\lambda_c(\Delta)$ there is non-uniqueness, i.e., there are multiple infinite-volume Gibbs measures.

There is an interesting computational phase transition for graphs of maximum degree $\Delta$ that occurs at this same tree threshold.
For all constant $\Delta$, all $\lambda<\lambda_c(\Delta)$, all graphs of maximum degree $\Delta$,
there is an $\fptas$ for the partition function~\cite{Weitz}.   On the other side, for all $\Delta\geq 3$, all $\lambda>\lambda_c(\Delta)$,
there is no $\fpras$ for approximating the partition function on graphs of maximum degree $\Delta$, unless NP$=$RP~\cite{Sly10,SlySun,GSVIsing}.
However, hardness of computing partition functions does not imply hardness of computing averages in the Gibbs distribution; see the case of  the antiferromagnetic Ising model with no external field discussed in Section~\ref{sec:Ising}.

We prove that computing the average independent set size also undergoes a computational phase transition at
the tree critical point $\lambda_c(\Delta)$.  As before, Weitz's algorithmic result~\cite{Weitz}  
yields, for all constant~$\Delta$, all~$\lambda<\lambda_c(\Delta)$, all graphs~$G$ of maximum degree~$\Delta$, 
an $\fptas$ for the average independent set size $\Mc(G)$ (more generally, an approximate sampling algorithm implies an algorithm for approximating averages).   We prove that this result is optimal: when $\lambda>\lambda_c(\Delta)$ there
is no $\fptas$ for the average independent set size.

\begin{theorem}\label{thm:mainind2}
Let $\Delta\geq 3$ be an integer and $\lambda>\lambda_c(\Delta)$. Then, for graphs $G$ of maximum degree $\Delta$, there is no $\fptas$ for computing the average independent-set size in the hard-core distribution $\mu_{G;\lambda}$, unless \emph{P=NP}. 
\end{theorem}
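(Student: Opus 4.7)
The plan is to follow the template of inapproximability results for the partition function in the tree non-uniqueness region (Sly, Sly--Sun, Galanis--\Stefankovic--Vigoda), but to adapt them so that the gap created is in the \emph{average} independent set size rather than in the partition function itself. Concretely, I would reduce from the NP-hardness of approximating \MaxCut{} on bounded-degree graphs. Given such an instance, construct a graph $G^*$ of maximum degree $\Delta$ by starting from a bipartite $\Delta$-regular base graph $H$ that encodes the \MaxCut{} instance (in the style of the previous bipartite hardness constructions) and attaching a tree gadget at every vertex. At $\lambda>\lambda_c(\Delta)$ the hard-core distribution on such a bipartite base is exponentially concentrated on two phases in which one side of the bipartition is dense and the other sparse, and appropriately chosen gadgets bias the distribution toward specific phase configurations.

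The crucial new component is the construction, for every $\lambda>\lambda_c(\Delta)$, of pairs of rooted trees $(T_1,T_2)$ of maximum degree $\Delta$ whose root marginals agree, $p_{T_1}=p_{T_2}$, but whose contributions to the expected size differ. If such a tree $T$ is attached at a vertex $v$ of the base graph, the effect of $T$ on the rest of the graph is controlled entirely by $p_T$: tracing through the partition function shows that the effective fugacity felt by $v$ in the induced distribution on $V(H)$ is $\lambda\cdot(1-p_T)$. Consequently, swapping $T_1$ for $T_2$ leaves the Gibbs distribution on $V(H)$ unchanged; only the expected count inside the gadget is perturbed, by a controllable amount depending on the internal expected sizes $\Mc(T_1)$ and $\Mc(T_2)$ (and on the boundary-conditioned versions of these quantities).

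Given such pairs, the reduction encodes a candidate cut of the \MaxCut{} instance as a choice of $T_1$ versus $T_2$ at each vertex of $H$. Concentration in the non-uniqueness region guarantees that $\Mc(G^*)$ is, up to exponentially small error, a fixed offset plus a linear function of the cut value, with slope proportional to $\Mc(T_1)-\Mc(T_2)$. Calibrating the parameters so that distinct cut values produce averages separated by at least a $1/\mathrm{poly}$ factor, any $\fptas$ for $\Mc$ could be used to distinguish optimal from suboptimal cuts and hence solve \MaxCut{}, yielding the claimed NP-hardness.

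The main obstacle, and where the bulk of the technical work lies, is establishing the existence of the required pairs $(T_1,T_2)$. I would write down the joint tree recursion that simultaneously evolves the root marginal $p_T$ and a suitable derivative quantity $D_T$ (morally $\partial_\lambda\log Z_T$ normalized so that, at fixed $p_T$, it records the gadget's contribution to $\Mc(G^*)$). A priori one might worry that along the attractor of the recursion $D_T$ is forced to be a function of $p_T$ alone, in which case no admissible pair could exist. Ruling this out is the role of Cauchy's functional equation: assuming such a functional dependence, the recursion step (which combines children's values through the hard-core transformation and adds up their derivative contributions) would force the hypothetical function to satisfy an additive identity whose measurable solutions are a single linear family, and a direct small-tree computation would contradict membership in that family. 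Carrying this out uniformly in $\lambda>\lambda_c(\Delta)$ and within the degree-$\Delta$ constraint is the delicate part of the argument.
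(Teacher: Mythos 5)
Your high-level outline correctly identifies the key ingredients (field gadgets with matching root marginals but different ``derivative'' contributions, a \MaxCut{}-style reduction built on the bipartite non-uniqueness gadget, and Cauchy's functional equation to establish existence of the gadgets), but there are two substantive gaps.

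The most serious is your claim that one can construct, \emph{for every} $\lambda>\lambda_c(\Delta)$, a pair of rooted trees $(T_1,T_2)$ with \emph{exactly} equal root marginals $p_{T_1}=p_{T_2}$. This is impossible for generic $\lambda$. The root marginal of any fixed tree is a rational function of $\lambda$, so for a fixed pair $(T_1,T_2)$ the set of $\lambda$ with $p_{T_1}(\lambda)=p_{T_2}(\lambda)$ is either all of $(0,\infty)$ (in which case the derivative quantities agree identically too, making the pair useless) or a finite algebraic set. Taking the union over the countably many pairs of trees, exact agreement is achievable only on a measure-zero set of $\lambda$, so in particular for no transcendental $\lambda$. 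What is provable, and what must replace your assumption, is an approximate version: one can construct pairs whose effective fields differ by an arbitrarily small $r>0$ while their magnetization gaps remain separated by an absolute constant. But then swapping $T_1$ for $T_2$ no longer leaves the Gibbs distribution on the base graph exactly unchanged, and the $O(r)$ perturbation must be shown not to swamp the $\Theta(1)$ gadget-derivative signal. This requires a quantitative stability bound on the magnetization under small changes of the vertex fields, and it dictates the choice $r\approx 1/\mathrm{poly}(|V|)$; you have omitted this step entirely, and without it the argument does not go through for arbitrary $\lambda>\lambda_c(\Delta)$.

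Second, the reduction as you describe it is garbled. You say the reduction ``encodes a candidate cut \ldots{} as a choice of $T_1$ versus $T_2$ at each vertex of $H$,'' but there are exponentially many candidate cuts and only polynomially many FPTAS calls available, so cuts cannot be enumerated by gadget choices. The correct reduction builds two graphs --- one with $T_1$ attached at every port-path and one with $T_2$ at every port-path --- and subtracts the two returned magnetizations. Because the two trees have (approximately) equal effective fields, the base-graph contribution cancels in the difference, and what remains is proportional to the Gibbs-\emph{averaged} cut size. The cut information is carried by the phases of the bipartite gadgets, which the hard-core measure itself biases toward a near-maximum cut; there is no single ``cut value'' for a given graph, and the magnetization is a linear function of the average cut under the Gibbs measure, which must separately be shown to be within a factor tending to $1$ of $\MaxCut(H)$.
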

In fact, our inapproximability result for general $\lambda>\lambda_c(\Delta)$ is stronger, it actually precludes approximation algorithms with factors of $1\pm \tfrac{\delta}{\log n}$, where $n$ is the number of the vertices of the input graph and $\delta$ is an appropriate constant, see Theorem~\ref{thm:maingen2} below for the precise statement. For a dense set of $\lambda$, we actually obtain constant-factor inapproximability (analogously to Theorem~\ref{thm:mainind1}).
\begin{theorem}\label{thm:mainind3}
Let $\Delta\geq 3$ be an integer. Then, for every real $\lambda>\lambda_c(\Delta)$ and $\epsilon>0$, there is an algebraic number $\widehat{\lambda}$ with $|\widehat{\lambda}-\lambda|\leq \epsilon$ and  a constant $C=C(\widehat{\lambda},\Delta)>1$ such that, for graphs $G$ of maximum degree $\Delta$, there is no poly-time $C$-approximation algorithm for computing the average independent-set size of $G$ in the hard-core distribution $\mu_{G;\widehat{\lambda}}$, unless \emph{P=NP}.
\end{theorem}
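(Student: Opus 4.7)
The plan is to upgrade the $(1 \pm \delta/\log n)$-inapproximability established by Theorem~\ref{thm:maingen2} to a constant-factor inapproximability at a dense set of algebraic $\widehat{\lambda}$. The underlying reduction is a reduction from MAX-CUT on cubic graphs (constant-factor hard by~\cite{maxcut}) to computing $\Mc$ at fugacity $\lambda$: given a MAX-CUT instance $H$, the reduction outputs a graph $G(H)$ of polynomial size whose average independent-set size encodes the cut value of $H$, with a multiplicative separation controlled by the ``derivative gap'' of a pair of tree gadgets that have matching root marginals but differing expected numbers of root occupations. The $\log n$ loss in Theorem~\ref{thm:maingen2} stems from the fact that, for generic real $\lambda$, the Cauchy-functional-equation argument only delivers a limiting sequence of gadget pairs whose marginals merely approach equality, so the pair actually plugged into the reduction pays a $1/\log n$ factor in the achievable derivative gap.

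My plan is to (i) track how the inapproximability factor produced by the reduction depends on the derivative gap of the pair of gadgets actually used, and (ii) exhibit, for a dense set of algebraic $\widehat{\lambda}$, explicit pairs of tree gadgets of bounded topology whose matching-marginal condition holds \emph{exactly} (not approximately) and whose derivative gap is a nonzero algebraic number. At any such $\widehat{\lambda}$, the exact matching removes the $\log n$ slack and propagates through the reduction as a constant-factor separation in $\Mc(G(H); \widehat{\lambda})$ between YES and NO instances of MAX-CUT, yielding an absolute constant $C = C(\widehat{\lambda}, \Delta) > 1$.

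The main technical step is the density of such $\widehat{\lambda}$ in $(\lambda_c(\Delta), \infty)$. The matching-marginal condition for a pair of gadgets of any fixed topology is a polynomial equation in $\widehat{\lambda}$, so the set of $\widehat{\lambda}$ admitting an exact matching pair of that topology is an algebraic subset of $(\lambda_c(\Delta), \infty)$, whose elements are automatically algebraic numbers. The Cauchy-equation-based control of which marginal distributions and derivatives are simultaneously achievable (the same control used for Theorem~\ref{thm:maingen2}) implies that, as the gadget topology varies, the union of these algebraic subsets accumulates in every open subinterval of $(\lambda_c(\Delta), \infty)$: each target $\lambda > \lambda_c(\Delta)$ is a limit of $\widehat{\lambda}$ for which an exact pair exists in some bounded topology. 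On any such $\widehat{\lambda}$, the derivative gap is a nonzero algebraic number, hence bounded away from zero, giving the constant $C(\widehat{\lambda}, \Delta)$. The main obstacle is making this density argument precise: one has to certify that the $\widehat{\lambda}$-projections of the matching-marginal algebraic varieties across all bounded topologies are genuinely dense (rather than merely non-empty), which requires quantifying how the approximating gadgets produced by the Cauchy-equation argument can be perturbed into nearby exact matching pairs whose root parameters solve an algebraic equation in $\widehat{\lambda}$.
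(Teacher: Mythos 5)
Your high-level outline is aligned with the paper: reduce the statement to producing, for a dense set of algebraic $\widehat{\lambda}$, a pair of field gadgets whose effective fields agree \emph{exactly} while their magnetization gaps differ, and then invoke the constant-factor hardness reduction (Theorem~\ref{thm:vanilla}). You correctly observe that for fixed gadget topologies, the matching condition $R_{\Tc_1}(\lambda)=R_{\Tc_2}(\lambda)$ defines an algebraic set of $\lambda$, and that one must argue density of the union of such sets. You also correctly flag that this density step is the crux. However, you leave that step unresolved, and there is a genuine gap: the mechanism you sketch (perturb the \emph{gadgets} found by the Cauchy-equation argument into nearby exact matching pairs as the topology varies) is not what makes the argument go through, and it is not clear how to carry it out.

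The paper's resolution, which you are missing, is to keep the gadgets $\Tc_1,\Tc_2$ produced by Theorem~\ref{thm:const1} \emph{fixed} and perturb $\lambda$ instead, using a crucial structural identity: the magnetization gap $M_{\Tc}$ equals $1+\lambda\,\tfrac{\partial}{\partial\lambda}\log R_{\Tc}(\lambda)$ (see the computation in the proof of Lemma~\ref{lem1}). Concretely, Theorem~\ref{thm:const1} gives $|R_{\Tc_1}(\lambda)-R_{\Tc_2}(\lambda)|\leq 2r$ while $|M_{\Tc_1}(\lambda)-M_{\Tc_2}(\lambda)|\geq\hat M$; the latter, via the identity above, says that $\tfrac{\partial}{\partial\lambda}\bigl(\log R_{\Tc_1}-\log R_{\Tc_2}\bigr)$ is bounded away from zero at $\lambda$. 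Lemma~\ref{lem:34f34vrr} then supplies a uniform (gadget-independent) bound on the second derivatives of $R_{\Tc_i}$ with respect to $\lambda$, so a one-variable intermediate-value/mean-value argument produces $\widehat{\lambda}$ within $O(r)$ of $\lambda$ with $R_{\Tc_1}(\widehat{\lambda})=R_{\Tc_2}(\widehat{\lambda})$, and the same second-derivative bound keeps $M_{\Tc_1}(\widehat{\lambda})\neq M_{\Tc_2}(\widehat{\lambda})$. Because $R_{\Tc_i}(\lambda)$ are rational functions of $\lambda$ with rational coefficients ($\beta=1,\gamma=0$ here), $\widehat{\lambda}$ is algebraic. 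This is Theorem~\ref{thm:const2}, from which Theorem~\ref{thm:mainind3} follows immediately via Theorem~\ref{thm:vanilla}. Without the ``magnetization gap is the $\lambda$-logarithmic-derivative of the effective field'' identity and the uniform second-derivative control of Lemma~\ref{lem:34f34vrr}, your density claim is an assertion rather than an argument.

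A minor secondary point: your attribution of the $\log n$ loss in Theorem~\ref{thm:maingen2} to a shrinking ``derivative gap'' is not quite right. In the paper's Theorem~\ref{thm:occgadget}, the magnetization-gap difference $\hat M$ stays bounded below by a constant even as $r\to 0$; the $\log n$ factor comes instead from the field gadgets having size $\Theta(\log n)$, so that the contribution term $\mathcal{A}'_1-\mathcal{A}'_2$ can be of order $\log n$, forcing the approximation factor needed to distinguish the two magnetizations to be $1\pm O(1/\log n)$.
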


\subsection{Results for the antiferromagnetic Ising model}\label{sec:Ising}
Our results extend to the antiferromagnetic Ising model.  Let $G=(V,E)$ be a graph. For $\beta,\lambda>0$, let $\mu_{G;\beta,\lambda}$ denote the Ising distribution on $G$ with edge activity $\beta$ and external field $\lambda$, i.e., for $\sigma:V\rightarrow \{0,1\}$ we have 
\[\mu_{G;\beta,\lambda}(\sigma)=\frac{\lambda^{|\sigma|}\beta^{m(\sigma)}}{Z_{G;\beta,\lambda}},\]
where $m(\sigma)$ denotes the number of monochromatic edges in $G$ under $\sigma$, i.e., edges whose endpoints have the same spin. The model is called antiferromagnetic if $\beta\in(0,1)$ and ferromagnetic, otherwise. We define the average magnetization of $G$ to be the average number of vertices with spin 1, i.e., \[\Mc_{\beta,\lambda}(G)=\frac{1}{Z_{G;\beta,\lambda}}\sum_{\sigma:V\rightarrow \{0,1\}}|\sigma|\lambda^{|\sigma|}\beta^{m(\sigma)}=\Eb_{\sigma\sim \mu}\big[|\sigma|\big],\]
where for a configuration $\sigma:V\rightarrow \{0,1\}$, we use $|\sigma|$ to denote $\sum_{v\in V(G)} \sigma(v)$, i.e., the total number of vertices with spin $1$.

In the ferromagnetic case, there is an $\mathsf{FPRAS}$ for approximating the magnetization for all $\beta>1$ and $\lambda>0$, due to the algorithm of Jerrum and Sinclair \cite{JS:ising}.  For $\Delta\geq 3$, let $\beta_c(\Delta)=\tfrac{\Delta-2}{\Delta}$. It is known that the antiferromagnetic Ising model with edge activity $\beta$ and external field $\lambda$ has non-uniqueness on the infinite $\Delta$-regular tree iff $\beta\in (0,\beta_c)$ and $\lambda \in (1/\lambda^{\Ising}_c,\lambda^{\Ising}_c)$ for some explicit $\lambda^{\Ising}_c=\lambda^{\Ising}(\beta,\Delta)>1$.  For all constant $\Delta$, in the tree
uniqueness region there is an $\fptas$ for the partition function for graphs of maximum degree~$\Delta$~\cite{SST,LLY}, and, once again,
this implies an $\fptas$ for the magnetization.  In the tree non-uniqueness region, for all $\Delta\geq 3$, unless NP$=$RP  
there is no $\fpras$ for the partition function for graphs of maximum degree~$\Delta$~\cite{SlySun}. We prove that
approximating the magnetization is also intractable in the tree non-uniqueness region, apart from the case $\lambda=1$ (where  the magnetization can be computed trivially for all graphs $G$ since it equals $\tfrac{1}{2}|V(G)|$). 
\begin{theorem}\label{thm:mainIs2}
Let $\Delta\geq 3$ be an integer, $\beta\in (0,\beta_c(\Delta))$ and  $\lambda\in (\tfrac{1}{\lambda_c},\lambda_c)$ with $\lambda\neq 1$, where $\lambda_c=\lambda^{\Ising}_c(\Delta,\beta)$. Then, for graphs $G$ of maximum degree $\Delta$, there is no $\fptas$ for computing the average magnetization in the Ising distribution $\mu_{G;\beta, \lambda}$, unless \emph{P=NP}. 
\end{theorem}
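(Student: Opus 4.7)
The plan is to adapt the framework developed for the hard-core model in Theorems~\ref{thm:mainind1}--\ref{thm:mainind3} to the antiferromagnetic Ising setting. The starting point is the Sly--Sun NP-hardness of approximating the partition function $Z_{G;\beta,\lambda}$ on graphs of maximum degree $\Delta$ in the tree non-uniqueness regime. The key difficulty, as already emphasised in the abstract, is that hardness of $Z$ at a single $(\beta,\lambda)$ does not directly imply hardness of $\Mc_{\beta,\lambda}$: since $\Mc_{\beta,\lambda}(G) = \lambda\,\partial_\lambda \log Z_{G;\beta,\lambda}$, the standard phase-based reductions are insensitive to magnetization. What we need are gadgets that preserve $Z$ up to a cancellable multiplicative factor while shifting $\Mc$ by a noticeable amount.

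Concretely, I would construct pairs of rooted trees $(T^+,T^-)$ of maximum degree at most $\Delta$ such that, in the Ising distribution with parameters $(\beta,\lambda)$:
\begin{enumerate}
\item the root ratios $R_T := \mu_T(\text{root}=1)/\mu_T(\text{root}=0)$ agree, $R_{T^+} = R_{T^-}$;
\item the expected sizes $\Eb_{\mu_{T^+}}\big[|\sigma|\big]$ and $\Eb_{\mu_{T^-}}\big[|\sigma|\big]$ differ by at least an absolute constant.
\end{enumerate}
Condition~(1) guarantees that replacing $T^+$ by $T^-$ at any attachment point of a larger host graph leaves the partition function of the composite graph unchanged up to a known factor (the common marginal weight at the terminal), so $Z$-based reductions cannot tell the two gadgets apart. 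Condition~(2) ensures that a sufficiently accurate estimate of $\Mc$ can count how many $T^+$ versus $T^-$ gadgets have been attached.

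To produce these pairs I would exploit the Ising tree recursion: for a root with children having subtree ratios $R_1,\dots,R_d$, the root ratio is $R = \lambda \prod_i \frac{\beta R_i + 1}{R_i + \beta}$, while a compatible affine recursion computes the conditional expected sizes given the root spin. In the non-uniqueness region the recursion has multiple stable fixed points, and by iterating across trees of varying shapes one realises a dense set of root ratios near the nontrivial fixed points, together with a rich continuum of expected sizes parameterized by tree choice. The heart of the construction, paralleling the Cauchy-equation argument in the hardcore proof, is to show that the expected size is \emph{not} a function of the ratio alone: if it were, the additivity of expected sizes under the operation ``join two subtrees at a common root'' would force the size--ratio correspondence to satisfy an additive functional equation in $\log R$, which by Cauchy's lemma would force it to be affine, contradicting the nonlinear analytic form of the Ising recursion in the non-uniqueness regime. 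This argument yields the required pair with matching ratios but separated sizes.

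With the gadgets in hand, I would plug them into the Sly--Sun hardness reduction: starting from an NP-hard source such as Max-Cut on cubic graphs~\cite{maxcut}, their construction produces a host graph containing many ``phase-encoding'' vertices whose effective marginal is biased according to a boundary choice. Attaching $T^+$ or $T^-$ at each such vertex in accordance with an additional NP-hard bit leaves $Z$ effectively unchanged by~(1) but shifts $\Mc$ by an amount linear in the net count of $T^+$-gadgets by~(2); an $\fptas$ for $\Mc$ would therefore solve the underlying NP-hard problem. The hypothesis $\lambda \neq 1$ is essential since at $\lambda = 1$ spin-flip symmetry forces $\Mc \equiv |V|/2$ trivially, and the case $\lambda \in (1/\lambda_c,1)$ reduces to $\lambda \in (1,\lambda_c)$ by the global spin-flip symmetry of the Ising model. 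The main obstacle, and where the Ising case requires separate care beyond the hardcore proof, is the gadget existence step: the recursion is two-parameter in $(\beta,\lambda)$ and asymmetric between the two stable phases once $\lambda \neq 1$, so the Cauchy-type analysis of the ratio--size map must be carried out carefully near each nontrivial fixed point to verify the relevant non-degeneracy.
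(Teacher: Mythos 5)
Your high-level architecture matches the paper's: pairs of rooted ``field gadgets'' built from the Ising tree recursion, a Cauchy-functional-equation obstruction showing that magnetization information cannot be a function of the root ratio alone, the Sly--Sun phase gadget plugged into a Max-Cut reduction, and the spin-flip symmetry explaining the exclusion of $\lambda=1$. But there are two substantive gaps that would have to be filled.

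First, your condition~(1) requires \emph{exact} equality $R_{T^+} = R_{T^-}$. For a fixed, arbitrary real $\lambda$ in the non-uniqueness window this is too much to ask: the set of $\lambda$ at which any two fixed trees have equal ratio is algebraic (hence measure zero), so a universal exactly-matching pair does not exist. The paper resolves this by proving, for every antiferromagnetic $(\beta,\gamma,\lambda)\ne(\beta,\beta,1)$, the existence of pairs with effective fields within an arbitrarily small $r>0$ of each other but with magnetization gaps separated by an absolute constant (Theorem~\ref{thm:const1}, via the families $\mathcal{F}_{x,t}$ and the Case~I/Case~II dichotomy of Lemmas~\ref{case1}--\ref{case2}). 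One must then bound the error that this $O(r)$ mismatch injects into the magnetization of the composite graph, which is the role of the perturbation estimate in Lemma~\ref{lem:perturb}, and this is exactly what costs you the $\fptas$-only (rather than constant-factor) hardness in Theorem~\ref{thm:maingen2}. Exact equality is only available for a dense algebraic set of $\lambda$ (Theorem~\ref{thm:const2}), which is the regime of Theorem~\ref{thm:mainIs3}, not of the statement you are proving. Your proposal conflates these two regimes.

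Second, your reduction step is not quite right as stated. You propose attaching $T^+$ or $T^-$ ``in accordance with an additional NP-hard bit,'' but a magnetization oracle returns a single scalar per graph and gives no per-vertex access, so this encoding does not directly extract an NP-hard answer. The paper's reduction (Lemma~\ref{lem:maingadget} and the proof of Theorem~\ref{thm:vanilla}/\ref{thm:maingen2}) instead constructs \emph{two} composite graphs $H^{k}_{G,\Tc_1}$ and $H^{k}_{G,\Tc_2}$, identical except for which field gadget is uniformly attached, queries the magnetization oracle on each, and recovers an approximation to $\AvgCut_\mu(H)$ (hence to $\MaxCut(H)$) from the \emph{difference}. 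The point is that matching effective fields forces the phase-gadget contribution to cancel in the difference while the differing magnetization gaps leave a detectable residue proportional to the average cut size. Also, a smaller point: the quantity that survives this cancellation and couples to the cut is the conditional gap $M_{\Tc}=\Eb[|\sigma|\mid\sigma(\rho)=1]-\Eb[|\sigma|\mid\sigma(\rho)=0]$, not the raw expected size $\Eb_{\mu_T}[|\sigma|]$; the latter mixes the gap with the baseline $\Eb[|\sigma|\mid\sigma(\rho)=0]$, which appears separately in the expansion and must be tracked independently.
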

As in Theorem~\ref{thm:mainind2}, the inapproximability factor in Theorem~\ref{thm:mainIs2} is in fact stronger, see Theorem~\ref{thm:maingen2} below for the precise statement. For a dense set of $\lambda$, we again obtain constant-factor inapproximability.
\begin{theorem}\label{thm:mainIs3}
Let $\Delta\geq 3$ be an integer, rational $\beta\in (0,\beta_c(\Delta))$ and  $\lambda_c=\lambda^{\Ising}_c(\Delta,\beta)$. Then, for every $\lambda\in  (\tfrac{1}{\lambda_c},\lambda_c)$ and $\epsilon>0$, there is an algebraic number $\widehat{\lambda}$ with $|\widehat{\lambda}-\lambda|\leq \epsilon$ and  a constant $C=C(\beta,\widehat{\lambda},\Delta)>1$ such that, for graphs $G$ of maximum degree $\Delta$, there is no poly-time $C$-approximation algorithm for computing the average magnetization $\Mc_{\beta,\hat{\lambda}}(G)$, unless \emph{P=NP}.
\end{theorem}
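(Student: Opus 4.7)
The plan is to parallel the proof of Theorem \ref{thm:mainind3} in the antiferromagnetic Ising setting. Starting from the quantitative $1 \pm \delta/\log n$ inapproximability promised by Theorem \ref{thm:maingen2} for arbitrary real $\lambda \in (1/\lambda_c, \lambda_c)$, I would argue that for a dense algebraic subset of this interval, the $1/\log n$ loss can be replaced by a constant, yielding NP-hardness of constant-factor approximation of $\mathcal{M}_{\beta, \widehat{\lambda}}(G)$ on graphs of maximum degree $\Delta$.

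First, I would unwind the reduction behind Theorem \ref{thm:maingen2}: it starts from a constant-factor-hard instance of MAX-CUT on bounded-degree graphs (Alimonti--Kann) and attaches the paper's signature gadgets---pairs of rooted trees $(T_1,T_2)$ of maximum degree $\Delta$ whose root marginals under $\mu_{T_i; \beta, \widehat{\lambda}}$ coincide but whose derivatives in $\widehat{\lambda}$ disagree---to convert cut values into magnetization values in the output graph $G$. For a generic real $\widehat{\lambda}$ the required (marginal, derivative) target is only achieved in the limit along an infinite family of trees, and truncating at polynomial size introduces an additive $1/\mathrm{poly}(n)$ error that shrinks the magnetization gap to $1 \pm \delta/\log n$. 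If instead $\widehat{\lambda}$ is chosen so that a single \emph{finite} pair of bounded-degree trees realizes the marginal equality \emph{exactly}, no truncation is needed and the magnetization gap produced by the reduction is a constant multiple of the MAX-CUT gap, giving constant-factor hardness.

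Accordingly, the second step is to produce a dense set of such algebraic $\widehat{\lambda}$. For fixed rational $\beta$ and fixed bounded-degree trees $(T_1, T_2)$, the root marginals are rational functions of $\widehat{\lambda}$ with rational coefficients, so the condition ``the two marginals agree while their derivatives in $\widehat{\lambda}$ disagree'' cuts out an algebraic set, and every suitable $\widehat{\lambda}$ is automatically algebraic. As in the hard-core case, I would invoke the Cauchy functional equation technique highlighted in the abstract applied to the set of (marginal, derivative) pairs achievable by finite bounded-degree Ising trees, to show this set is dense in an open region of $\mathbb{R}^2$; intersecting with the locus ``marginals equal'' then yields a family of algebraic activities dense in $(1/\lambda_c, \lambda_c)$. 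Choosing any such $\widehat{\lambda}$ within $\epsilon$ of the given $\lambda$ completes the reduction.

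The main obstacle is controlling the achievable (marginal, derivative) pairs uniformly in the Ising parameters. Two complications distinguish this from the hard-core analogue: the external field $\widehat{\lambda}$ ranges over a bounded interval containing $1$ that behaves qualitatively differently on either side, and the edge activity $\beta$ is fixed but must be carried through the two-parameter tree recursions so that the Cauchy-equation machinery still applies at each $\widehat{\lambda}$. I would expect one to need Ising-tailored tree constructions preserving the non-degeneracy of the (marginal, derivative) map, together with a careful verification that the constructed $\widehat{\lambda}$ stays in the non-uniqueness regime---in particular, avoiding the symmetric point $\widehat{\lambda}=1$, at which the magnetization derivative vanishes by symmetry and the gadget-based gap would collapse.
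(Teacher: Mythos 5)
Your high-level strategy matches the paper: specialize the general antiferromagnetic 2-spin result, note that any $\widehat{\lambda}$ at which a single finite pair of field gadgets realizes exact field equality with unequal magnetization gaps yields constant-factor hardness via the \textsc{Max-Cut} reduction, and observe that such $\widehat{\lambda}$ must be algebraic because effective fields of fixed trees are rational functions of $\lambda$ with rational coefficients when $\beta$ is rational. You also correctly flag that $\widehat{\lambda}=1$ must be avoided since the magnetization is trivial there (and the paper does exclude $(\beta,\beta,1)$ from $\mathcal{U}^*_\Delta$). However, your framing starts from Theorem~\ref{thm:maingen2} and tries to ``upgrade'' the $1\pm\kappa/\log n$ loss; the paper instead invokes Theorem~\ref{thm:vanilla} directly, which gives constant-factor hardness outright once an exact-match pair exists.

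The genuine gap is in how you propose to locate the dense set of $\widehat{\lambda}$. You suggest showing the achievable set of (marginal, derivative) pairs is dense in an open region of $\mathbb{R}^2$ and then ``intersecting with the locus marginals equal.'' But density of approximately matching pairs at a fixed $\lambda$ does not, by itself, produce a nearby $\widehat{\lambda}$ with an \emph{exact} match: for fixed trees $\Tc_1,\Tc_2$, the set of $\lambda$ with $R_{\Tc_1}(\lambda)=R_{\Tc_2}(\lambda)$ is a zero set of a rational function, and nothing in a 2D-density argument locates these zeros or shows they are dense in the interval. The paper's Theorem~\ref{thm:const2} uses a cleaner one-variable perturbation: Theorem~\ref{thm:const1} produces a fixed pair $\Tc_1,\Tc_2$ whose effective fields are within $r$ of each other at the given $\lambda$ but whose magnetization gaps (which equal $1+\lambda\,\partial_\lambda \log R_{\Tc_i}$) differ by at least a constant $\hat{M}$; Lemma~\ref{lem:34f34vrr} gives a uniform bound on the second derivatives of the $R_{\Tc_i}$ in $\lambda$; hence $R_{\Tc_1}(\lambda)-R_{\Tc_2}(\lambda)$ has a definite slope at $\lambda$ and small value there, so it crosses zero at some $\widehat{\lambda}$ within $O(r/\hat{M})$ of $\lambda$, and the magnetization gaps remain distinct at $\widehat{\lambda}$ by continuity. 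Your worry about rerunning the Cauchy machinery at each $\widehat{\lambda}$ is therefore misplaced: the Cauchy/functional-equation argument (Lemmas~\ref{case2}--\ref{final}) runs once, at the starting $\lambda$, to build the approximate-match pair; finding the exact match is a separate, elementary, intermediate-value step. Without an argument of that type, your proposal does not actually produce the required $\widehat{\lambda}$.
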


\subsection{Results for general antiferromagnetic 2-spin systems}
While the hard-core model and the Ising model are the most canonical 2-spin models, the results of the previous two sections will be obtained as special cases of the following results for general antiferromagnetic 2-spin systems. This more general perspective will also allow us to give a unified proof of Theorems~\ref{thm:mainind2},~\ref{thm:mainind3},~\ref{thm:mainIs2},~\ref{thm:mainIs3}.

Let $G=(V,E)$ be a graph. For $\beta,\gamma,\lambda>0$, let $\mu_{G;\beta,\gamma,\lambda}$ denote the Gibbs distribution on $G$ with edge activities $\beta,\gamma$ and external field $\lambda$, i.e., for $\sigma:V\rightarrow \{0,1\}$ we have 
\[\mu_{G;\beta,\lambda}(\sigma)=\frac{\lambda^{|\sigma|}\beta^{m_{0}(\sigma)}\gamma^{m_{1}(\sigma)}}{Z_{G;\beta,\gamma,\lambda}},\]
where $m_{0}(\sigma), m_{1}(\sigma)$ denotes the number of edges in $G$ whose endpoints are assigned under $\sigma$ the pair of spins $(0,0)$ and $(1,1)$, respectively. 

The parameter pair $(\beta,\gamma)$ is called \emph{antiferromagnetic} if $\beta\gamma\in[0,1)$ and at least one of $\beta,\gamma$ is non-zero, and it is called ferromagnetic, otherwise. Note that the hard-core model is the case $\beta=1,\gamma=0$ (under the convention that $0^0\equiv 1$) whereas the antiferromagnetic Ising model is the case $0<\beta=\gamma<1$.

We next define the range of parameters $(\beta,\gamma,\lambda)$ where our inapproximability results for the magnetization apply; these are precisely the parameters where the spin system exhibits non-uniqueness on the infinite $\Delta$-regular tree, apart from the case of the antiferromagnetic Ising model with $\lambda=1$, where  as noted in Section~\ref{sec:Ising} the magnetization can be computed trivially for all graphs $G$.
\begin{definition}\label{def:nonuniq}
Let $\Delta\geq 3$ be an integer. We let $\Uc_{\Delta}$ be the set of $(\beta,\gamma,\lambda)$ such that $(\beta,\gamma)$ is antiferromagnetic, and the (unique) fixpoint $x^*>0$ of the function $f(x)=\frac{1}{\lambda} \big(\frac{ \beta x+1}{x+\gamma}\big)^{\Delta-1}$ satisfies $|f'(x^*)|>1$.

We let $\Uc_{\Delta}^{*}=\Uc_{\Delta}\backslash \bigcup_{\beta\in (0,1)}\{(\beta,\beta,1)\}$ be the set of parameters in $\Uc_\Delta$ other than those where computing the magnetization is trivial. 
\end{definition}
We  note that Li, Lu, Yin \cite{LLY} define a notion of ``up-to-$\Delta$'' uniqueness which requires $(\beta,\gamma,\lambda)$ to  be in uniqueness for every $d\leq \Delta$; they obtain an $\fptas$ for the partition function in that region. The complement of their region corresponds to non-uniqueness in the sense of Definition~\ref{def:nonuniq} for some $d\leq \Delta$. Our inapproximability results extend to this bigger region by applying our theorems for smaller values of $\Delta$.

Our first inapproximability result for general antiferromagnetic 2-spin models, which is a generalization/strengthening of Theorems~\ref{thm:mainind2} and~\ref{thm:mainIs2}, is the following. The proof is given in Section~\ref{sec:maingen2}.
\begin{theorem}\label{thm:maingen2}
Let $\Delta\geq 3$ be an integer and $(\beta,\gamma,\lambda)\in \Uc_{\Delta}^*$. Then, for graphs $G\in \mathtt{G}_\Delta$, there is no $\fptas$ for computing the average magnetization in the Gibbs distribution $\mu_{G;\beta, \gamma, \lambda}$, unless P=NP. In fact, there is a constant $\kappa=\kappa(\Delta, \beta,\gamma, \lambda)>0$ such that there is no poly-time $\big(1+ \tfrac{\kappa}{\log n}\big)$-approximation algorithm for computing the average magnetization $\Mc_{\beta,\gamma,\lambda}(G)$, where $n=|V(G)|$.
\end{theorem}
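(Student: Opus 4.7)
The plan is to adapt to the magnetization the hardness framework developed for the partition function of antiferromagnetic $2$-spin systems in the tree non-uniqueness region~\cite{Sly10,SlySun,GSVIsing,caijcss}. That framework reduces from MaxCut on bounded-degree graphs: given a MaxCut instance $H$, one builds a graph $G\in\mathtt{G}_\Delta$ by stitching together rooted-tree ``bit gadgets'' whose root marginal concentrates at one of the two nontrivial fixed points of the tree recursion $f(x)=\lambda\big((\beta x+1)/(x+\gamma)\big)^{\Delta-1}$ from Definition~\ref{def:nonuniq}. The Gibbs distribution on $G$ then concentrates on configurations whose per-gadget ``phase'' pattern $\pm$ optimizes a MaxCut-like functional on $H$, and a sufficiently good approximation of $Z_G$ determines this pattern.

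The central obstacle in transferring this to $\Mc_G$ is that magnetization need not distinguish phase patterns even when the partition function does. The purest obstruction is antiferromagnetic Ising at $\lambda=1$, where spin-flip symmetry forces $\Mc_G=|V(G)|/2$; this case is precisely what we remove by passing to $\Uc_\Delta^*$. In general, one must rule out the subtler possibility that a gadget's internal magnetization is determined by its root marginal, for otherwise swapping phases across the gadgets of $G$ would leave $\Mc_G$ unchanged while shifting $Z_G$ in precisely the way the partition-function reduction exploits.

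The new ingredient is a pair of rooted trees $(T_1,T_2)$ of maximum degree $\Delta$ with two properties: (i) equal root marginals, equivalently, equal root ratios $R$ produced by iterating $f$; and (ii) expected numbers of spin-$1$ vertices differing by some known $\delta\neq 0$. By (i), $T_1$ and $T_2$ are interchangeable as gadgets: swapping one for the other in $G$ rescales $Z_G$ by a tree-dependent constant and leaves all marginals outside the tree invariant. By (ii), the same swap additively shifts $\Mc_G$ by $\delta$. Planting appropriate pairs inside the standard bit gadget so that its two phases see different multiplicities of $T_1$ versus $T_2$ makes each gadget phase contribute distinctly to $\Mc_G$. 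After this modification, $\Mc_G$ faithfully encodes the dominant phase pattern, hence the MaxCut value of $H$, and the random bipartite regular-graph analysis of~\cite{SlySun,GSVIsing} then yields the claimed $(1+\kappa/\log n)$ inapproximability.

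The main obstacle is establishing the existence of such $(T_1,T_2)$, which amounts to controlling the image of the map ``tree of maximum degree $\Delta$'' $\mapsto(R,D)$, where $R$ is the root ratio and $D$ is a suitably normalized expected spin-$1$ count. Realizing a dense set of values of $R$ is a standard consequence of iterating $f$ together with $(\beta,\gamma,\lambda)\in\Uc_\Delta^*$. The non-degeneracy claim, namely that $D$ is not a function of $R$, is handled by a Cauchy functional equation argument: combining the multiplicative behavior of $R$ under concatenation of subtrees at a common parent with the additive behavior of $D$ shows that if $D=\phi(R)$ held across all realizable pairs, then $\phi$ would satisfy a Cauchy-type identity whose solutions are forced to be affine in $\log R$. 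Explicit small trees then produce $(R,D)$ pairs incompatible with any such $\phi$, yielding two distinct values of $D$ for a shared $R$, and hence the desired pair $(T_1,T_2)$. With these gadgets in hand, the remainder of the reduction is a careful adaptation of the Sly--Sun accounting, with the $(1+\kappa/\log n)$ factor inherited from the underlying partition-function hardness.
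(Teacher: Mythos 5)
Your high-level strategy matches the paper's (reduce from MaxCut, use the Sly--Sun phase gadget, construct pairs of rooted trees with equal root ratios whose magnetization statistics disagree, prove existence via Cauchy's functional equation). However, there are two concrete gaps in the mechanism you describe.

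First, the tree statistic you fix in property (ii)---the \emph{unconditional} expected number of spin-$1$ vertices---is not the right invariant. The quantity that actually enters the reduction is the ``magnetization gap'' $M_{\Tc}=\Eb[\,|\sigma|\mid\sigma(\rho)=1]-\Eb[\,|\sigma|\mid\sigma(\rho)=0]$, because after you attach $\Tc$ at a vertex $t$ of the host graph, the contribution of $\Tc$ to the global magnetization is $M_{\Tc}\cdot\Pr[\sigma(t)=1]+\Eb[\,|\sigma|\mid\sigma(\rho)=0]$. Two trees with equal root ratios and unequal unconditional expectations can still have identical magnetization gaps (the discrepancy could live entirely in the additive term $\Eb[\,|\sigma|\mid\sigma(\rho)=0]$), in which case the cut signal would vanish in the subtraction. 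The invariant you need to separate is precisely the derivative $\lambda\,\partial_\lambda \log R_\Tc$, and this is what the paper's Cauchy argument controls.

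Second, and more seriously, the placement of the trees ``inside the standard bit gadget so that its two phases see different multiplicities of $T_1$ versus $T_2$'' does not encode the cut. A fixed graph cannot depend on the phase of a gadget, so any such planting can only produce a magnetization that is a fixed linear functional of the individual-vertex phase probabilities $\Pr[\widehat{\Yc}(\sigma)_v=\pl]$. These individual probabilities equal $1/2$ by the global spin-flip symmetry of the Sly--Sun distribution over phase patterns, so $\Mc_G$ would carry no cut information. What actually carries the cut is the \emph{joint} phase of two gadgets sharing an edge of $H$: the occupation probability of a vertex on a path between two ports differs according to whether the two adjacent gadgets are in the same or in opposite phases. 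The paper therefore plants the field gadgets on the internal vertices of length-$3$ connecting paths, and crucially runs the reduction \emph{twice}, once with $\Tc_1$ and once with $\Tc_2$; since $R_{\Tc_1}=R_{\Tc_2}$ the two runs induce the same joint distribution on the host graph, so subtracting the two magnetizations cancels everything except a term proportional to $(M_{\Tc_1}-M_{\Tc_2})\cdot\AvgCut_\mu(H)$. That difference-of-constructions step, together with the correct placement on inter-gadget paths, is the missing mechanism.

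Additionally, obtaining the $(1+\kappa/\log n)$ inapproximability (rather than merely ``no FPTAS'') is not simply inherited from the partition-function hardness; the paper gets it from a polynomial-size field-gadget construction whose effective field converges to a target exponentially fast, so that the approximation error and the gadget size can be balanced at the $\kappa/\log n$ scale.
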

We remark that a ``no-FPTAS'' result can be strengthened to an inapproximability factor of $\big(1\pm \tfrac{1}{n^{\epsilon}}\big)$ for any constant $\epsilon>0$ via standard powering techniques; the tighter hardness factor of $\big(1\pm \tfrac{\kappa}{\log n}\big)$ in Theorem~\ref{thm:maingen2} requires a substantially more delicate argument. 
\begin{theorem}\label{thm:maingen3}
Let $\Delta\geq 3$ be an integer and $(\beta,\gamma,\lambda)\in \Uc_{\Delta}$ with $\beta,\gamma$ rational numbers.   Then, for every $\epsilon>0$, there is an algebraic number $\widehat{\lambda}$ with $|\widehat{\lambda}-\lambda|\leq \epsilon$ and  a constant $C=C(\beta,\gamma,\widehat{\lambda},\Delta)>1$  such that, for graphs $G$ of maximum degree $\Delta$, there is no poly-time $C$-approximation algorithm for computing the average magnetization $\Mc_{\beta,\gamma,\hat{\lambda}}(G)$, unless \emph{P=NP}.
\end{theorem}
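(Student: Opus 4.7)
The plan is to extend the inapproximability from Theorem~\ref{thm:maingen2} to a constant-factor gap at a dense set of algebraic values $\widehat{\lambda}$, exploiting the hypothesis that $\beta,\gamma \in \mathbb{Q}$. This rationality makes both $Z_{G;\beta,\gamma,\lambda}$ and $\Mc_{\beta,\gamma,\lambda}(G)$ rational functions of $\lambda$ with coefficients in $\mathbb{Q}(\beta,\gamma)$ for every fixed graph $G$, so that all gap quantities arising from the reduction are algebraic functions of $\lambda$.

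The first step would be to isolate, from the proof of Theorem~\ref{thm:maingen2}, the underlying reduction: given an instance of an NP-hard problem on bounded-degree graphs (such as $\MaxCut$ on cubic graphs), the reduction builds a graph $G$ of maximum degree $\Delta$ by attaching copies of the paper's main technical gadgets---pairs of rooted trees with matching root-marginals but distinct root-derivatives. The difference $\Delta\Mc(\widehat{\lambda}) := \Mc_{\beta,\gamma,\widehat{\lambda}}(G_{\mathrm{YES}}) - \Mc_{\beta,\gamma,\widehat{\lambda}}(G_{\mathrm{NO}})$ can be written, via standard signature algebra for tree gadgets, in the form $n\cdot \phi(\widehat{\lambda}) + \text{lower order}$, where $\phi$ is an explicit algebraic function of $\widehat{\lambda}$ controlled by the derivative mismatch between the two gadget trees. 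The $\kappa/\log n$ slack in Theorem~\ref{thm:maingen2} can be traced to a worst-case choice of gadget parameters scaling mildly with $n$, needed to make the argument uniform in $\lambda$ across the entire non-uniqueness region.

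To obtain the constant-factor gap, I would fix the gadget sizes once and for all (as constants depending only on $\beta,\gamma,\Delta$) and restrict attention to those $\widehat{\lambda}$ where $\phi(\widehat{\lambda})\neq 0$. Because $\phi$ is algebraic with coefficients in $\mathbb{Q}(\beta,\gamma)$ and not identically zero on the relevant $\lambda$-interval, its zeros form a finite subset of any bounded interval meeting the $\lambda$-axis of $\Uc_\Delta$. Since algebraic numbers are dense in $\mathbb{R}$, for any target $\lambda$ in the non-uniqueness region and any $\epsilon>0$, we can choose an algebraic $\widehat{\lambda}\in [\lambda-\epsilon,\lambda+\epsilon]\cap \Uc_\Delta$ with $\phi(\widehat{\lambda})\neq 0$. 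At such $\widehat{\lambda}$ the magnetization gap between YES and NO cases is $\Theta(|\phi(\widehat{\lambda})|\cdot n)$, while the total magnetization is $O(n)$, so comparing them gives a constant-factor inapproximability with $C = C(\beta,\gamma,\widehat{\lambda},\Delta)>1$ calibrated by $|\phi(\widehat{\lambda})|$.

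The main obstacle will be verifying that $\phi$ is not identically zero on the relevant interval. This reduces to showing that the paper's tree gadgets produce a genuine derivative mismatch at \emph{some} $\widehat{\lambda}$ in $\Uc_\Delta$, not merely at the fixed reference value $\lambda$. I expect this to follow from the distinctness of the plus and minus semi-translation-invariant Gibbs measures on the $\Delta$-regular tree throughout $\Uc_\Delta$---which makes the two phases macroscopically distinguishable and should in particular produce distinct per-site magnetizations---combined with a transfer from the bulk phase distinction to the root-signature of the finite gadgets, exactly the information encoded by Cauchy-functional-equation construction underlying the gadgets. Rationality of $\beta,\gamma$ then upgrades ``distinct on a set of positive measure'' to ``distinct off a finite set in each bounded $\lambda$-interval,'' which is precisely the density statement required by the theorem.
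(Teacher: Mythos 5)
Your high-level instinct is right — perturb $\lambda$ to a nearby algebraic $\widehat{\lambda}$ where an algebraic condition holds, exploiting the rationality of $\beta,\gamma$ — but you have latched onto the wrong condition, and this breaks the argument.

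The constant-factor inapproximability in the paper does not come from asking for a ``derivative mismatch'' $\phi(\widehat{\lambda})\ne 0$; the magnetization-gap separation $|M_{\Tc_1}-M_{\Tc_2}|>\hat{M}$ is already guaranteed by Theorem~\ref{thm:const1} in a neighbourhood of the reference $\lambda$, and by continuity stays bounded away from zero. The actual bottleneck is the \emph{effective fields}: the reduction via Theorem~\ref{thm:vanilla} (and Lemma~\ref{lem:maingadget}) takes the \emph{difference} of the magnetizations of $H^k_{G,\Tc_1}$ and $H^k_{G,\Tc_2}$, and the contributions of the phase gadgets $G_v$ only cancel cleanly if $R_{\Tc_1}(\widehat{\lambda})=R_{\Tc_2}(\widehat{\lambda})$ \emph{exactly}. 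If the effective fields disagree by some $r>0$, the residual term $\Dc'$ from the phase gadgets is controlled only via Lemma~\ref{lem:perturb}, which gives a bound of order $k\,|V(H)|^2\,|V(G)|\,r$, compared to a ``signal'' of size $\Theta(k\,|E(H)|)=\Theta(k\,|V(H)|)$. With constant-size (hence constant-$r$) gadgets, the error term can be quadratic in $|V(H)|$ and swamps the signal entirely; this is precisely why the paper's Theorem~\ref{thm:maingen2} needs gadget size $O(\log|V(H)|)$ and only obtains a $1\pm\kappa/\log n$ factor. ``Fixing the gadget sizes once and for all'' therefore does \emph{not} yield a constant-factor gap; it yields nothing, unless you also arrange exact cancellation.

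Exact cancellation is where the density of algebraic $\widehat{\lambda}$ must come from, and here your approach runs into an impossibility rather than just a gap. For a \emph{fixed} pair of trees $\Tc_1,\Tc_2$, the effective fields $R_{\Tc_1}(\lambda),R_{\Tc_2}(\lambda)$ are rational functions of $\lambda$ (with coefficients in $\QQ$ once $\beta,\gamma\in\QQ$). They agree either identically in $\lambda$ — in which case, since the magnetization gap is $M=1+\tfrac{\lambda}{R}\tfrac{\partial R}{\partial\lambda}$ (Lemma~\ref{lem1}), the magnetization gaps also agree identically, so the pair is useless — or on a finite set. Neither option gives a \emph{dense} set of usable $\widehat{\lambda}$; the paper states this explicitly right after Example~\ref{e4f334}. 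The paper escapes by choosing a \emph{different} pair of gadgets for each target $\lambda$: Theorem~\ref{thm:const1} builds gadgets with $|R_{\Tc_1}-R_{\Tc_2}|\le 2r$ and $|M_{\Tc_1}-M_{\Tc_2}|>\hat{M}$ at $\lambda$, Lemma~\ref{lem:34f34vrr} shows the second derivatives of $R_{\Tc_i}(\cdot)$ (and hence of $R_{\Tc_1}-R_{\Tc_2}$) are uniformly bounded independently of the gadget, and because the first derivatives of $\log R_{\Tc_1}$ and $\log R_{\Tc_2}$ differ by at least $\hat{M}/\lambda$ while the function values differ by at most $2r$, an intermediate-value argument produces a $\widehat{\lambda}$ within $O(r/\hat{M})$ of $\lambda$ at which $R_{\Tc_1}(\widehat{\lambda})=R_{\Tc_2}(\widehat{\lambda})$. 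Letting $r\to 0$ gives the density, and rationality of $\beta,\gamma$ makes this $\widehat{\lambda}$ a root of a rational polynomial, hence algebraic. Your proposal contains neither the second-derivative control (Lemma~\ref{lem:34f34vrr}) nor the crossing-point argument, and its substitute — a density argument for the non-vanishing of a derivative-mismatch function — does not address the quantity that actually obstructs the reduction.
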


\section{Proof Outline}
In this section, we give some of the key elements of the techniques needed to prove our inapproximability results and conclude with the proof of Theorem~\ref{thm:mainind1}. We start by describing ``field gadgets'' and state the main lemmas that we will use in our reduction.

\subsection{Gadgets with approximately equal effective fields and different averages}\label{sec:f434}
For a graph $G$ and $\sigma:V\rightarrow \{0,1\}$, we say that a vertex $v\in V$ is occupied if $\sigma(v)=1$, and unoccupied otherwise.\footnote{This terminology is standard for the hard-core model, but it will be convenient to use it for general 2-spin systems.} For a subset $S\subseteq V$, we use $\sigma_S$ to denote the configuration on $S$ which is restriction of $\sigma$ on $S$.
\begin{definition}\label{def:field}
A {\em field gadget} is a rooted tree $\mathcal{T}$ whose root $\rho$ has degree one. For antiferromagnetic $(\beta,\gamma)$ and $\lambda>0$, let $\mu=\mu_{\Tc;\beta,\gamma,\lambda}$.
\begin{enumerate}
\item The {\em effective field} of the gadget, denoted by $R_{\mathcal{T}}=R_{\mathcal{T}}(\beta,\gamma,\lambda)$, is $1/\lambda$ times the ratio of the weight of
configurations where the root is occupied to the weight of configurations where the root is
unoccupied, i.e., $R_{\mathcal{T}}=\tfrac{1}{\lambda}\frac{\mu(\sigma_\rho=1)}{\mu(\sigma_\rho=0)}$. (The division by $\lambda$ is to avoid double-counting the contribution of the root later on.)
\item  The {\em magnetization gap} of the gadget, denoted by $M_{\mathcal{T}}=M_{\mathcal{T}}(\beta,\gamma,\lambda)$, is the expected number of occupied vertices conditioned on the root being occupied minus the expected number of occupied vertices conditioned on the root being unoccupied (the root is included in the count), i.e., 
\[M_{\mathcal{T}}=\Eb_{\sigma\sim \mu}[\,|\sigma|\mid \sigma(\rho)=1]-\Eb_{\sigma\sim\mu}[\,|\sigma|\mid \sigma(\rho)=0].\]
\end{enumerate}
In the special case that $\lambda=\frac{1-\beta}{1-\gamma}$ with $\beta\neq \gamma$, a field gadget consists of a rooted bipartite graph obtained from a rooted tree where some  of the leaves have been replaced by a distinct cycle of length four (by identifying the leaf with a vertex of the cycle). 
\end{definition}
Note that, in the case that $\lambda=\frac{1-\beta}{1-\gamma}$, the effective field of any tree gadget can be shown to be equal to 1 and the magnetization gap to 0, so that is why we need to consider the ``mildly non-tree'' construction in Definition~\ref{def:field}.

It will be useful to illustrate Definition~\ref{def:field} with a few examples.
\begin{example}\label{eee}
For example the rooted tree with one edge has effective field $R=\frac{1+\gamma\lambda}{\beta+\lambda}$.
The degenerate example where the root has degree zero has effective field $R=1$.
\end{example}

The following more interesting example in the case of independent sets will be used to prove Theorem~\ref{thm:mainind1}; it gives a pair of field gadgets with the same effective field but different magnetization gaps.
\begin{example}\label{e4f334}
Consider the independent set model (with $\lambda=1$). Consider the trees $\Tc_1,\Tc_2$ with roots $\rho_1,\rho_2$ below.
\begin{center}
\begin{minipage}{0.3\textwidth}
	\scalebox{0.9}[0.9]{
\begin{tikzpicture}[
	 level distance=1cm, sibling distance=2.5cm,
	rtn/.style = {shape=circle, draw=black, fill=white,  align=center, thick, minimum size=0.2cm}, 
	edge from parent/.style={draw,thick,-}]
	\node[rtn,label=left:$\rho_1$] {}
child [missing]
child	{ node[rtn] {} {
					child [ missing ]
						child { node[rtn] {} }
           }
       };
\end{tikzpicture}}
\end{minipage}
\begin{minipage}{0.5\textwidth}
	\scalebox{0.9}[0.9]{	
	\begin{tikzpicture}[
	 level distance=0.6cm, sibling distance=2.5cm,
	rtn/.style = {shape=circle, draw=black, fill=white,  align=center, thick, minimum size=0.2cm}, 
	edge from parent/.style={draw,thick,-}]
\node[rtn, label=left:$\rho_2$] {}
child [missing]
child	{ node[rtn] {} {
						child { node[rtn] {} {
										child	{ node[rtn] {} }
										child [ missing ]
											}
									}
						child { node[rtn] {} {
										child { node[rtn] {} }
										child	{ node[rtn] {} {
															child [ missing ]
															child { node[rtn] {} }
															}
													}
										}
						}
           }
       };
\end{tikzpicture}}
\end{minipage}
\end{center}
 Then $R_{\Tc_1}=R_{\Tc_2}$ since $R_{\Tc_1}=\frac{2}{3}$ and $R_{\Tc_2}=\tfrac{24}{36}$,  but $M_{\Tc_1}\neq M_{\Tc_2}$ since $M_{\Tc_1}=\tfrac{3}{2}-\tfrac{2}{3}=\tfrac{5}{6}$ and $M_{\Tc_2}=\tfrac{35}{12}-\tfrac{13}{6}=\tfrac{3}{4}$. These values can be either verified by enumeration and linearity of expectation, or else using the recursions of the upcoming Lemma~\ref{lem1}.
\end{example}

We will be interested in finding pairs of field gadgets analogous to Example~\ref{e4f334}. Our interest in such pairs of field gadgets is justified by the following theorem.
\begin{theorem}\label{thm:vanilla}
Let $\Delta\geq 3$ be an integer and $(\beta,\gamma,\lambda)\in \Uc_{\Delta}^*$. Suppose that there exists a pair of field gadgets $\Tc_1, \Tc_2$ with maximum degree $\Delta$ such that $R_{\Tc_1}=R_{\Tc_2}$ but $M_{\Tc_1}\neq M_{\Tc_2}$. 

Then, there is constant $c=c(\Delta,\beta,\gamma,\lambda)>1$ such that, for graphs $G$ of maximum degree $\Delta$, there is no poly-time $c$-approximation algorithm for computing the average magnetization $\Mc_{\beta,\gamma,\lambda}(G)$, unless \emph{P=NP}.
\end{theorem}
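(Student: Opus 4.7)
The proof proceeds by a phase-transfer reduction from the NP-hardness of approximating the partition function $Z_{G;\beta,\gamma,\lambda}$ on graphs of maximum degree $\Delta$ in the non-uniqueness region $\Uc_\Delta^*$, established in \cite{Sly10, SlySun, GSVIsing, GSV:colorings}. The pair $(\Tc_1,\Tc_2)$ will convert phase information of $Z_G$ into a detectable signal inside the magnetization $\Mc$.

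I would start from a bipartite instance $G=(L\cup R, E)$ of maximum degree $\Delta$, produced by the Sly-Sun reduction from Max-Cut, for which the Gibbs distribution $\mu_{G;\beta,\gamma,\lambda}$ has two well-separated dominant phases that are NP-hard to distinguish: one biased toward spin~$1$ on $L$ and $0$ on $R$, the other reversed. After a standard modification that lowers the degree on one side to $\Delta-1$ (while preserving the phase-separation structure), form $H_1, H_2$ by attaching one copy of $\Tc_i$ to each $v \in L$ via root-identification $\rho\equiv v$. Since $\Tc_i$ has root of degree one and maximum degree $\Delta$, each $H_i$ has maximum degree $\Delta$. Because $R_{\Tc_1}=R_{\Tc_2}$, both attachments apply identical effective multiplicative factors to the field at each $v\in L$, so $\mu_{H_1}$ and $\mu_{H_2}$ restrict to the same distribution on $V(G)$; in particular $\mu_H(v=1)$ agrees for every $v\in L$.

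Writing $N_{0,i}:=\Eb_{\mu_{\Tc_i}}[|\sigma|\mid\sigma(\rho)=0]$, a direct computation using Definition~\ref{def:field} yields
\[
\Mc(H_1)-\Mc(H_2) \;=\; |L|\,(N_{0,1}-N_{0,2})\;+\;(M_{\Tc_1}-M_{\Tc_2})\!\sum_{v\in L}\!\mu_H(v=1),
\]
where the first term is a known constant and $M_{\Tc_1}-M_{\Tc_2}\neq 0$ by hypothesis. Since $\sum_{v\in L}\mu_H(v=1)$ takes values close to $|L|$ in one of the two dominant phases and close to $0$ in the other, $\Mc(H_1)-\Mc(H_2)$ differs by $\Theta(|L|)=\Theta(n)$ between the two phases (with $n=|V(G)|$). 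A hypothetical polynomial-time $c$-approximation of $\Mc(H_i)$ gives absolute error $O((c-1)n)$ in each estimate and hence in the difference; for $c-1$ below a constant $c_0=c_0(\Delta,\beta,\gamma,\lambda,\Tc_1,\Tc_2)>0$ the error lies below the phase gap, so the dominant phase of $H_i$ can be identified in polynomial time, contradicting $\mathrm{P}\neq\mathrm{NP}$.

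The main technical challenge is the quantitative balance between the $\Theta(n)$ phase signal and the $O((c-1)n)$ approximation noise; the resulting constant $c_0$ is expressible in terms of $|M_{\Tc_1}-M_{\Tc_2}|$ and the phase bias provided by the Sly-Sun construction. Two secondary points need attention: first, attaching the gadgets multiplies the external field on $L$ by a fixed constant, and one must verify that the perturbed bipartite system still has the required phase separation (true by continuity, since the perturbation is a bounded change in $\lambda$ on one side of the bipartition); second, the degree-reduction producing graphs of maximum degree $\Delta-1$ on one side must be compatible with the non-uniqueness condition $(\beta,\gamma,\lambda)\in\Uc_\Delta^*$. The exclusion of $(\beta,\beta,1)$ from $\Uc_\Delta^*$ is essential: at that symmetric point the spin-flip symmetry forces $\Mc(H_i)=|V(H_i)|/2$ identically, annihilating the signal.
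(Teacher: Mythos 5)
Your central insight --- that a pair of field gadgets with identical effective field but different magnetization gap, inserted into two parallel copies of the same construction, yields a magnetization \emph{difference} from which the hard combinatorial content can be read off, because the equal-field property cancels all remaining terms --- is precisely the engine of the paper's proof, and the difference formula you derive is exactly right. But the reduction wrapped around it does not close. You attach $\Tc_i$ directly to gadget vertices and read off $\sum_{v\in L}\mu_H(v=1)$; for this to give hardness that sum must encode an NP-hard quantity, and the constructions you cite do not supply one. The Sly--Sun bipartite gadget of Lemma~\ref{lem:SlySun} has, by design, both phases occurring with probability $\approx 1/2$ (Item~\ref{it:balancedphases}), so the marginal $\mu_G(v=1)$ is a fixed convex combination of $q^\pl$ and $q^\mi$ independent of any hard instance, and your signal is constant. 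The full Sly--Sun reduction graph built from a cubic $\MaxCut$ instance is not bipartite for non-trivial inputs --- the length-three paths between same-side ports force $+$-sides of adjacent gadgets into opposite colour classes, which is impossible when $H$ has an odd cycle --- so there is no bipartition $(L,R)$ with a two-phase structure; instead there are $2^{|V(H)|}$ phase vectors, and the hardness in the sources you cite is about estimating the dominant cut value, not a binary ``which of two phases'' promise. The premise ``two well-separated dominant phases that are NP-hard to distinguish'' is therefore not available.

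The paper sidesteps all of this by attaching the field gadgets to the \emph{internal} vertices of the length-three paths joining ports of distinct gadgets (the graph $H^{k}_{G,\Tc}$ of Lemma~\ref{lem:maingadget}). Those internal vertices see the phases of the two adjacent gadgets through the port marginals; after conditioning on the phase vector and collapsing the port interactions via the matrix $\Lb(\lambda R)$, the aggregate contribution evaluates to an affine function of $\AvgCut_{\mu}(H)$, which provably approximates $\MaxCut(H)$ within a factor $1+O(1/k)$. Taking the difference over the two gadgets isolates exactly that affine function and yields an approximation-preserving reduction from cubic $\MaxCut$. This design also leaves the interior of each $G_v$ untouched, so the ``field perturbation inside $G$'' problem never arises --- the issue you label secondary and wave away by continuity is in fact delicate (the perturbed field $\lambda R_{\Tc}$ at a gadget vertex need not stay in $\Uc_\Delta$), and the paper never has to face it. Finally, the paper must explicitly exclude the degenerate case $B(R)=C(R)$ for the particular $R$ the gadget pair produces; this is done via Lemma~\ref{lem:vanilla} and a rational-function argument inside the proof of Theorem~\ref{thm:vanilla}. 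Your write-up has no analogous safeguard, but some version of it is unavoidable.
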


When can we find pairs of trees as in Example~\ref{e4f334} with the same effective fields but different magnetization gaps? Even in the case of the hard-core model,  a pair of trees either have the same effective field for only finitely many values of $\lambda$ or else have the same magnetization gap for all $\lambda$. Thus we cannot hope for a universal pair of gadgets.  Actually, the set of $\lambda$ where the effective fields can be equal, over all pairs of trees, must be algebraic, hence measure zero.

Our main theorem for the construction of field gadgets is the following. The theorem roughly asserts that we can construct field gadgets with arbitrarily close effective fields but substantially  different magnetization gaps. For a rational $r=p/q$ with integers $p,q$ satisfying  $\gcd(p,q)=1$, we let $\bit(r)$ denote the total number of bits needed to represent $p,q$. 
\begin{theorem}\label{thm:const1}
Let $(\beta,\gamma,\lambda)$ be antiferromagnetic with $(\beta,\gamma,\lambda)\neq (\beta,\beta,1)$. There exist $\hat{R},\hat{M}, \Xi>0$ and an algorithm which on input a rational $r\in (0,1/2)$ outputs in time $poly(\bit(r))$ a pair of field gadgets $\mathcal{T}_1,\mathcal{T}_2$, each of maximum degree 3 and size $O(|\log r|)$, such that 
\[|R_{\Tc_1} -\hat{R}|, |R_{\Tc_2}- \hat{R}|\leq r,\mbox{ but } |M_{\Tc_1}-M_{\Tc_2}|>\hat{M}.\]
Moreover, the magnetization gaps  $M_{\Tc_1}, M_{\Tc_2}$ are bounded in absolute value by the constant $\Xi$.
\end{theorem}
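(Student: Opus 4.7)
The plan is to prove Theorem~\ref{thm:const1} in three stages. First, derive the recursion for the pair $(R_\Tc, M_\Tc)$ under the canonical degree-$3$ combining operation (attach two sub-gadgets $\Tc_1, \Tc_2$ at a common internal vertex $v$, then hang a new degree-$1$ root $\rho$ off $v$). Second, produce a constant-size \emph{seed pair} of gadgets with (approximately) equal effective field but strictly different magnetization gaps, via a Cauchy's functional equation argument. Third, run an algorithm which iteratively composes the seed pair with common auxiliary gadgets to drive both effective fields within $r$ of the target $\hat{R}$, using $O(|\log r|)$ composition steps.

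Concerning the first two stages: by conditioning on the spin of $v$ in the combining operation, one obtains explicit rational formulas
\[R_\Tc = F(R_{\Tc_1}, R_{\Tc_2}), \qquad M_\Tc = G_0(R_{\Tc_1}, R_{\Tc_2}) + G_1(R_{\Tc_1}, R_{\Tc_2})\,M_{\Tc_1} + G_2(R_{\Tc_1}, R_{\Tc_2})\,M_{\Tc_2},\]
where $F$ is the standard (antiferromagnetic) two-spin tree recursion and the $G_i$ are computable rational functions. The essential feature is that $M$ transforms \emph{affinely} in the sub-gadget magnetization gaps, with coefficients depending only on the sub-gadget effective fields. To produce the seed pair, argue by contradiction: assume that for every pair of constructible degree-$3$ field gadgets with equal effective field the magnetization gaps coincide. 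Then $M$ factors through $R$ as a well-defined function $\phi$, and substituting $M_{\Tc_i} = \phi(R_{\Tc_i})$ in the composition identity yields the functional equation
\[\phi(F(R_1, R_2)) = G_0(R_1, R_2) + G_1(R_1, R_2)\phi(R_1) + G_2(R_1, R_2)\phi(R_2),\]
valid on a set of $(R_1, R_2)$ which is dense in some open region (by the richness of the constructible set). After a logarithmic change of coordinates and specialization along suitable curves (e.g.\ fixing $R_2$ and iterating in $R_1$), this reduces to Cauchy's additive equation $\psi(x+y) = \psi(x) + \psi(y)$; algebraicity of $\phi$'s graph forces only the linear solutions, pinning $\phi$ down to a rigid, explicit analytic form. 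A direct evaluation of $(R, M)$ on two small concrete gadgets (for the independent-set case, analogous to Example~\ref{e4f334}; for general antiferromagnetic parameters, a finite but more delicate variant; and in the degenerate case $\lambda = (1-\beta)/(1-\gamma)$ the triangle-augmented gadgets of Definition~\ref{def:field}) exhibits $(R, M)$ values incompatible with this rigid form, thereby producing the constant-size seed pair $\Uc_1, \Uc_2$ with $R_{\Uc_1} = R_{\Uc_2}$ and $|M_{\Uc_1} - M_{\Uc_2}| \geq \hat{M}_0$.

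For the third stage, observe that the combining map $F$, when evaluated with a fixed sub-gadget and a variable one, behaves like a one-variable map whose derivative can be controlled by choice of the fixed sub-gadget; by combining $\Uc_i$ with appropriate auxiliary gadgets, one obtains a bisection-like scheme that shrinks (by a factor bounded away from $1$) the distance between the current effective field and the target $\hat{R}$ at each step, so $O(|\log r|)$ iterations suffice. The scheme applies the \emph{same} sequence of auxiliary gadgets to both $\Uc_1$ and $\Uc_2$, so the $R$-trajectories remain coupled; the magnetization gap is multiplied at each step by the coefficient $G_1$ (or $G_2$) evaluated at the shared effective field, and continuity of $G_1, G_2$ together with the fact that the trajectory stays in a compact neighborhood of $\hat{R}$ bound this product away from zero, giving $|M_{\Tc_1} - M_{\Tc_2}| \geq \hat{M} > 0$. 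The uniform bound $\Xi$ on each $|M_{\Tc_i}|$ follows from a decay-of-correlation estimate: since $(\beta,\gamma)$ is antiferromagnetic, the contribution of a vertex at depth $d$ in the gadget to $M$ decays geometrically in $d$, so the infinite series converges to a finite constant depending only on $(\beta,\gamma,\lambda)$. The main obstacle is the Cauchy argument in the second stage: setting up the right specialization, proving that the resulting additive equation pins $\phi$ to a rigid form, and producing uniform refutations by explicit computation across all antiferromagnetic regimes—especially in the degenerate case $\lambda=(1-\beta)/(1-\gamma)$ where triangle-augmented gadgets are required in place of pure trees.
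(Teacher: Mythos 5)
The high-level plan (derive a Cauchy-type functional equation obstruction, then iterate to converge to a target effective field) shares the spirit of the paper's proof, but two of the steps would fail as written.

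The most concrete problem is in your third stage. Suppose you have a seed pair $\Uc_1,\Uc_2$ with close effective fields and a magnetization-gap separation $\hat{M}_0$, and you then iteratively compose both with the \emph{same} sequence of auxiliary gadgets, adding a new root at each step. From Lemma~\ref{lem1}, one composition step sends $M\mapsto 1-\omega(R)\bigl(1+\sum_i(M_i-1)\bigr)$, and for antiferromagnetic $(\beta,\gamma)$ you have $|\omega(R)|<1$ \emph{uniformly} (indeed $\leq C_{\mathrm{max}}<1$ on the relevant interval, Lemma~\ref{contr}). Since the seed pair enters as one of the $M_i$'s and the auxiliary gadget contributes the same value to both sides, each composition step shrinks the magnetization-gap difference by the factor $|\omega(R)|\leq C_{\mathrm{max}}$. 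Driving the effective field within $r$ of $\hat{R}$ needs $\Theta(\log(1/r))$ steps (Lemma~\ref{yyyqw}), so after the iteration the gap is at most $\hat{M}_0\cdot C_{\mathrm{max}}^{\Theta(\log(1/r))}=\hat{M}_0\cdot r^{\Theta(1)}\to 0$. Your claim that ``continuity of $G_1,G_2$\ldots bound this product away from zero'' is false: $G_1$ is identically $-\omega(R)$, with magnitude bounded \emph{away from $1$}, not away from $0$. The paper avoids this by reversing the picture: it makes the two constructions agree everywhere \emph{except} near the root, so the differing prefixes $(f_1,g_1),(f_2,g_2)$ fix disjoint intervals $g_i(I'\times J)$ in which the final magnetization gaps must lie (Lemma~\ref{case1}); the common continuation deep in the tree is where the contraction acts, and it only helps (it pins down the effective field) rather than destroys the gap.

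The second problem is in your second stage. You argue ``assume that for every pair of constructible degree-$3$ gadgets with equal effective field the magnetization gaps coincide; then $M$ factors through $R$ as a function $\phi$.'' For a generic $(\beta,\gamma,\lambda)$ there need not exist \emph{any} pair of distinct constructible gadgets with \emph{exactly} equal effective field (the set of achievable $R$ is countable, and exact coincidences are an algebraic phenomenon only available on a dense set of $\lambda$, as in Theorem~\ref{thm:const2}); the hypothesis of your contradiction is then vacuous and no functional equation on a dense region follows. The paper instead has to establish, via the contraction/composition framework of Lemmas~\ref{contr}--\ref{case2}, a genuinely continuous limit function $F:I\to J$ whose existence (Case~II) or nonexistence (Case~I) can be decided, and only \emph{then} does the Cauchy argument apply to $F$. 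This step is exactly where most of the technical work lives, and your sketch postulates its conclusion as an assumption rather than proving it. Correspondingly, your ``explicit evaluation producing a seed pair with $R_{\Uc_1}=R_{\Uc_2}$'' is not available in general: the theorem statement deliberately asks only for $|R_{\Tc_i}-\hat{R}|\leq r$ rather than $R_{\Tc_1}=R_{\Tc_2}$, precisely because exact equality is not achievable for arbitrary $\lambda$.
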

The proof of Theorem~\ref{thm:const1} is given in Section~\ref{sec:gadget}. In fact, we can bootstrap Theorem~\ref{thm:const1} to obtain pairs of trees with the same effective fields but different magnetization gaps for a dense set of (algebraic numbers) $\lambda$, and this gives a constant factor inapproximability result for the magnetization by applying Theorem~\ref{thm:vanilla}.
\begin{theorem}\label{thm:const2}
Let $(\beta,\gamma)$ be antiferromagnetic. There exists a set $S$ of algebraic numbers $\lambda$, dense in the interval $(0,\infty)$, such that for each $\lambda\in S$ the following holds. There is a pair of field gadgets $\mathcal{T}_1,\mathcal{T}_2$, each of maximum degree 3, such that $R_{\Tc_1}= R_{\Tc_2}$ but $M_1\neq M_2$.
\end{theorem}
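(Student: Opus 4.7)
The plan is to bootstrap Theorem~\ref{thm:const1} by invoking it at a chosen target $\lambda_0$ with very small $r$, and then perturbing $\lambda_0$ by a tiny algebraic amount to land on a point $\lambda^*$ where the two effective fields agree \emph{exactly}. The key tool is a derivative identity linking the magnetization gap to the logarithmic derivative of the effective field. Writing $Z^{(s)}_\Tc(\lambda)$ for the partition function of a field gadget conditioned on the root having spin $s$, we have $\lambda R_\Tc(\lambda) = Z^{(1)}_\Tc(\lambda)/Z^{(0)}_\Tc(\lambda)$ and $\lambda\,\partial_\lambda \log Z^{(s)}_\Tc(\lambda) = \Eb[|\sigma|\mid \sigma_\rho = s]$, from which
\[
M_\Tc(\lambda) - 1 \;=\; \lambda\,\frac{R'_\Tc(\lambda)}{R_\Tc(\lambda)}.
\]
Thus a large magnetization gap at $\lambda_0$ \emph{forces} a large difference between the $\lambda$-derivatives of $R_{\Tc_1}$ and $R_{\Tc_2}$ at $\lambda_0$, which is exactly what one needs to slide $\lambda_0$ to a nearby zero of the difference.

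Concretely, fix $\lambda_0\in(0,\infty)$ from any prescribed dense subset, invoke Theorem~\ref{thm:const1} at $\lambda_0$ with parameter $r$, and denote the output pair by $\Tc_1,\Tc_2$. These gadgets have maximum degree~$3$ and satisfy $|R_{\Tc_i}(\lambda_0) - \hat R|\le r$, $|M_{\Tc_1}(\lambda_0) - M_{\Tc_2}(\lambda_0)| > \hat M$, and $|M_{\Tc_i}(\lambda_0)|\le \Xi$. The identity above evaluated at $\lambda_0$ gives
\[
R'_{\Tc_1}R_{\Tc_2} - R'_{\Tc_2}R_{\Tc_1} \;=\; \frac{R_{\Tc_1}R_{\Tc_2}}{\lambda_0}\,(M_{\Tc_1}-M_{\Tc_2}),
\]
whose absolute value is at least $\hat M(\hat R - r)^2/\lambda_0$. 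Rewriting the left-hand side as $R_{\Tc_2}(R'_{\Tc_1}-R'_{\Tc_2}) + R'_{\Tc_2}(R_{\Tc_2}-R_{\Tc_1})$ and using the uniform bound $|R'_{\Tc_i}(\lambda_0)|\le (\Xi+1)(\hat R+r)/\lambda_0$ from the identity applied individually, we conclude that for all sufficiently small $r$
\[
\big|R'_{\Tc_1}(\lambda_0) - R'_{\Tc_2}(\lambda_0)\big| \;\ge\; c \;:=\; \frac{\hat M\,\hat R}{2\lambda_0},
\]
a constant independent of $r$.

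Set $g(\lambda) := R_{\Tc_1}(\lambda) - R_{\Tc_2}(\lambda)$, a rational function with coefficients in $\mathbb{Q}(\beta,\gamma)$. We have $|g(\lambda_0)|\le 2r$ while $|g'(\lambda_0)|\ge c$. The tree recursions used in Theorem~\ref{thm:const1} let one bound $|g''(\lambda)|$ uniformly on a fixed compact neighborhood of $\lambda_0$ by some function of the gadget size; since the gadget size is $O(|\log r|)$, this bound grows at most polynomially in $\log(1/r)$. Hence for $r$ sufficiently small the linear part of $g$ dominates on the interval $[\lambda_0 - 4r/c,\, \lambda_0 + 4r/c]$, so $g$ changes sign there and the intermediate value theorem produces a zero $\lambda^*\in(0,\infty)$ with $|\lambda^*-\lambda_0| = O(r/c)$. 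Since $g$ has coefficients in $\mathbb{Q}(\beta,\gamma)$, the number $\lambda^*$ is algebraic (over $\mathbb{Q}(\beta,\gamma)$, and in particular over $\mathbb{Q}$ when $\beta,\gamma$ are algebraic, as in the downstream applications). Continuity of $M_{\Tc_i}$ together with the gap $\hat M$ ensures that, after shrinking $r$ further if needed, $|M_{\Tc_1}(\lambda^*) - M_{\Tc_2}(\lambda^*)| > \hat M/2$, giving a valid pair for $\lambda^*$. Letting $\lambda_0$ range over a dense subset of $(0,\infty)$ and sending $r\to 0$ yields the desired dense set $S$. The main technical obstacle is the quantitative second-order control of $g$: one must bound $|R''_{\Tc_i}|$ uniformly on a neighborhood of $\lambda_0$ that is still wider than the putative location $O(r/c)$ of $\lambda^*$, which requires a careful (but standard) analysis of how derivatives propagate through the tree recursion as the gadget size grows like $|\log r|$.
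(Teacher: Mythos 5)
Your proposal takes essentially the same route as the paper: invoke Theorem~\ref{thm:const1} at a target $\lambda_0$, use the identity $M_{\Tc}-1 = \lambda R'_{\Tc}/R_{\Tc}$ to translate the magnetization-gap lower bound into a lower bound on $|R'_{\Tc_1}-R'_{\Tc_2}|$ at $\lambda_0$, and then use second-derivative control plus the intermediate value theorem to slide $\lambda_0$ to a nearby algebraic zero of $g=R_{\Tc_1}-R_{\Tc_2}$. You are in fact slightly more careful than the paper on two small points (you explicitly note that algebraicity over $\QQ$ needs $\beta,\gamma$ algebraic, and you check that $M_{\Tc_1}\neq M_{\Tc_2}$ persists at the perturbed $\lambda^*$).

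The genuine gap is exactly the step you flag at the end: the second-derivative bound. You assert that $|g''|$ "grows at most polynomially in $\log(1/r)$" because the gadget size is $O(|\log r|)$, calling the needed analysis "standard." This leap is unjustified. The \verb!Build-gadget! construction produces trees of depth $\Theta(|\log r|)$, and a naive propagation of $\partial^2_\lambda$ through the tree recursion multiplies the bound by a factor at each level; if that factor were any constant $C>1$, the resulting bound would be $C^{\Theta(|\log r|)} = r^{-\Theta(1)}$, i.e.\ polynomial in $1/r$, not in $\log(1/r)$. Depending on the constant, this could overwhelm the linear term $c\cdot t$ on the window $t=\Theta(r)$ and destroy the IVT argument. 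The paper closes this gap with Lemma~\ref{lem:34f34vrr}, whose content is precisely that the recursion for $Q_i'(\lambda)/Q_i(\lambda)$ and $Q_i''(\lambda)/Q_i(\lambda)$ has a self-coefficient with magnitude uniformly bounded below $1$ (see \eqref{azzz2}), so the bound does not grow with depth at all: $|R'_{\Tc_i}|$ and $|R''_{\Tc_i}|$ are uniformly bounded on a fixed neighborhood of $\lambda_0$, independently of $r$. That contraction estimate is the key missing ingredient; without it the quantitative second-order control your argument requires is not established.
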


In the following, we sketch the proof of Theorem~\ref{thm:vanilla} and in Section~\ref{sec:gadget} we will give a detailed outline of the proof of Theorems~\ref{thm:const1} and~\ref{thm:const2}.

\subsection{The reduction: using field gadgets to obtain inapproximability}

Fix $\Delta\geq 3$ and $\beta,\gamma,\lambda\in \Uc_{\Delta}$. Our results will be based on showing that approximating $\MaxCut$ on 3-regular graphs $H$ reduces to  approximating the magnetization will be via a reduction from $\MaxCut$. The reduction uses two types of gadgets. 

The first type of gadget is a bipartite graph $G$ which is an almost $\Delta$-regular graph with $n$ vertices on each side and $\ell$ ``ports'' on each side of degree $\Delta-1$, which will be used to connect distinct copies of gadgets. These gadgets were used in previous inapproximability results for the partition function and analysing them was one of the key difficulties in those results. The main idea in these results is to replace each vertex of $H$ by a distinct copy of $G$ and make appropriate connections between ports to encode the edges of $H$; then due to the antiferromagnetic interaction the resulting graph settles in configurations which correspond to Max-Cut configurations of $H$ (the Max-Cut assignment can roughly be read off by looking at the ``phases'' of the gadgets $G$, i.e., which side has the most occupied vertices,  see Section~\ref{sec:phase-gadget} for details).

However, for our inapproximability results we would need to analyze the magnetization of such gadgets by taking into account the effect of conditioning the spins of the ports; this task seems even more challenging than in the previous settings since  it seems to require very refined estimates. 

The field gadgets, which is the second type of gadgets, give a new reduction technique to bypass the need to perform this delicate, and likely difficult, task. In particular, by using a pair of field gadgets with the same effective fields and different magnetization gaps we will be able to observe the value of $\MaxCut(H)$ by taking the difference of the magnetizations when we append our field gadgets appropriately; crucially, the fact that the effective fields are the same implies that the underlying distribution does not change but that the difference of the magnetization gap of the two gadgets manifests itself in the magnetization.

\subsection{The bipartite ``phase-gadget'' of Sly and Sun}\label{sec:phase-gadget}
Fix an integer $\Delta\geq 3$. Let $\Gc^{\ell}_{n}$ be the set of $(2n)$-vertex bipartite graphs whose two sides are labelled with $\pl,\mi$ and are obtained  from a $\Delta$-regular $(2n)$-vertex bipartite graph by deleting a matching of size $\ell$. For a graph $G\in\Gc^{\ell}_{n}$,  we denote its bipartition by $(U^\pl,U^\mi)$ where $U^\pl,U^\mi$ are vertex sets with $|U^\pl|=|U^\mi|=n$, and we denote by $W^\pl,W^\mi$ the sets of vertices with degree $\Delta-1$ on each side of the bipartition, i.e., the vertices incident to the edges of the matching, so that $|W^\pl|=|W^\mi|=\ell$. We will refer to set $W=W^+\cup W^-$ as the \textit{ports} of $G$.

For $\sigma: U^\pl\cup U^\mi\rightarrow \{0,1\}$, we define the \emph{phase} $\Yc(\sigma)$ of the configuration $\sigma$ as the side which has the most occupied vertices under $\sigma$, i.e., 
\[\Yc(\sigma)=\begin{cases} \pl & \mbox{ if } |\sigma_{U^\pl}|\geq |\sigma_{U^\mi}|,\\ \mi& \mbox{otherwise}. \end{cases}\]
Sly and Sun \cite{Sly10} (see also \cite{GSV:colorings}) establish that, whenever  $\beta,\gamma,\lambda\in \Uc_{\Delta}$, for arbitrary $\epsilon>0$ and integer $\ell\geq 1$, there exist $n$ and $G\in \Gc^{\ell}_{n}$ such that, in the Gibbs distribution of the gadget graph $G$, each phase appears with probability close to 1/2$\pm \epsilon$, and that the spins of the ports of $G$, conditional on the phase, are roughly independent, with occupation probabilities that are asymmetric between the two sides. 

The detailed properties of the gadget can be found in Lemma~\ref{lem:SlySun} of Section~\ref{sec:SlySun}, though the exact details will not be important at this stage.

\subsection{The reduction}
Let $H$ be a 3-regular instance of the \textsc{Max-Cut} problem. 
Let $k$ be an arbitrary positive integer, and let $n$ and $G\in \Gc^{3k}_n$ be a bipartite gadget satisfying Lemma~\ref{lem:SlySun}. Let also $\mathcal{T}$ be a field gadget with effective field $R$ and magnetization gap $M$.

Let $\widehat{H}^{k}_G$ be the graph with $m$ disconnected components obtained by replacing each vertex of $H$ with a distinct copy of $G$; for $v\in V(H)$ we denote by $G_v$ the copy of $G$ in $\widehat{H}_G$ corresponding to the vertex $v$, and by $U^{\plm}_v, W^{\plm}_v$ the vertex sets and ports of $G_v$ in each side of the bipartition. Finally, we let $U_v=U^\pl_v\cup U^\mi_v$ and $W_v=W^\pl_v\cup W^\mi_v$.  

Let $H^{k}_{G,\Tc}$ be the graph obtained from $\widehat{H}^k_G$ as follows. For each edge $e=(u,v)\in E(H)$, pick $k$ distinct ports from $W^{\plm}_u, W^{\plm}_v$ and connect them using a path with three edges, for a total of $2k$ edge-disjoint paths between the gadgets $G_u$ and $G_v$: $k$ paths between $\pl/\pl$ sides and $k$ paths between $\mi/\mi$ sides.  Then, for each path $P$ append two distinct copies of the field gadget $\Tc$ by identifying the roots of the copies of $\Tc$, say $\rho_1,\rho_2$, with the internal vertices of $P$, say $t_1,t_2$. It will be helpful to consider also the graph $H^{k}_{G}$ which is the graph $H^{k}_{G,\Tc}$ when $\Tc$ is the single-vertex graph, i.e., $H^{k}_{G}$ is the graph after attaching the edge-disjoint paths but without the field gadgets.

For a configuration $\sigma: V(H^{k}_{G,\Tc})\rightarrow \{0,1\}$, we let $\widehat{\Yc}(\sigma): V(H)\rightarrow \{\pl,\mi\}$ be the phases over the gadgets $G_v$ with $v\in V(H)$, i.e., $\widehat{\Yc}(\sigma)=\big\{\Yc(\sigma_{U_v})\big\}_{v\in V(H)}$.  We define $\widehat{\Yc}(\cdot)$ similarly for configurations on $\widehat{H}^k_G$, $H^k_G$. For a phase assignment $Y:V(H)\rightarrow \{\pl,\mi\}$, we let $\mathrm{Cut}_H(Y)=\{(u,v)\in E(H)\mid Y_u\neq Y_v\}$
be the set of edges that are cut in $H$ by viewing the phase assignment $Y$ as a bipartition of $V(H)$ in the natural way. For $\mu=\mu_{H^{k}_{G,\Tc};\beta,\gamma,\lambda}$, let
\[\AvgCut_{\mu}(H):=\sum_{Y:V(H)\rightarrow \{\pl,\mi\}}\mu\big(\widehat{\Yc}(\sigma)=Y\big)|\Cut_H(Y)|\]
be the size of the average cut in $H$ when phase assignments are weighted by $\mu$. 

The following lemma associates the magnetization on the graph $H^{k}_{G,\Tc}$ with the quantity $\AvgCut_{\mu}(H)$ and, in turn, with $\MaxCut(H)$. The main idea is that the paths of length 3 between the ports cause antiferromagnetic interaction and the spin system on $H^{k}_{G,\Tc}$ prefers phase configurations $Y$ that have large $|\Cut_H(Y)|$ value.  By choosing $k$ large enough we can further ensure that $\AvgCut_{\mu}(H)$ arbitrarily close to $\MaxCut(H)$.

\begin{lemma}\label{lem:maingadget}
Let $\Delta\geq 3$ and $(\beta,\gamma,\lambda)\in \Uc_\Delta$. There are rational functions $A(R), B(R),C(R)$ defined for all $R\geq 0$, satisfying 
\begin{equation}\label{eq:vv6gvgg}
\mbox{(i) $A(0)=1$, (ii) $A(R)>1$ for $R>0$, and (iii)  $B(R)-C(R)=\lambda R\cdot (\log A(R))'$ for $R>0$}
\end{equation}
so that the following holds for any field gadget $\Tc$ with effective field $R>0$ and magnetization gap $M$.  Let $A,B,C$ be the values of the functions at $\lambda R$ and define $\Ac':=\Eb_{\tau\sim \mu_{\Tc;\beta,\gamma,\lambda}}[\,|\tau|\mid \tau(\rho)=0]$, where $\rho$ is the root of $\Tc$.

Let $\epsilon\in (0,1/10)$, $k\geq 10/\log A$ be an integer, and $n$ and $G\in \Gc^{3k}_n$ satisfy Lemma~\ref{lem:SlySun} with $\ell=3k$ and the given $\epsilon$. Then, for every 3-regular graph $H$ with sufficiently large $|V(H)|$,  for $\mu=\mu_{H^{k}_{G,\Tc};\beta,\gamma,\lambda}$, we have that the average magnetization of the graph $H^{k}_{G,\Tc}$ satisfies 
\begin{align*}
\Mc_{\beta,\gamma,\lambda}(H^{k}_{G,\Tc})&=4k\Ac'|E(H)|+\Eb_{\sigma\sim \mu}\big[\big|\sigma_{V(\widehat{H}^{k}_{G})}\big|\big]+(1\pm 8\epsilon)kMQ
\end{align*}
with $Q:=(B-C)\AvgCut_{\mu}(H)+C |E(H)|$, while   $\AvgCut_{\mu}(H)$ satisfies
\[1/K\leq \frac{\AvgCut_{\mu}(H)}{\MaxCut(H)}\leq 1 \mbox{\ \  for\ \  } K:=1+\frac{6}{k\log A}.\]
\end{lemma}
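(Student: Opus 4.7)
The plan is three-fold: decompose $\Mc_{\beta,\gamma,\lambda}(H^k_{G,\Tc})$ into a phase-gadget piece and a field-gadget piece; identify the field-gadget piece, via the rational functions $A,B,C$, with a weighted count of cut edges of $H$; and use antiferromagnetism to show that $\mu$ concentrates on phase assignments of near-maximum cut. Partition $V(H^k_{G,\Tc})$ into $V(\widehat{H}^k_G)$ and the vertex sets of the $4k|E(H)|$ field-gadget copies. For each copy $\Tc'$ with root $\rho$, observe that $\rho$ is an articulation vertex separating the non-root part of $\Tc'$ from the rest of $H^k_{G,\Tc}$; hence, conditional on $\sigma(\rho)$, the restriction of $\mu$ to $V(\Tc')\setminus\{\rho\}$ coincides with the corresponding restriction of $\mu_{\Tc;\beta,\gamma,\lambda}$. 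Using the definitions of $\Ac'$ and $M$ this yields $\Eb_\mu[|\sigma_{V(\Tc')}|]=\Ac'+M\Pr_\mu[\sigma(\rho)=1]$, and summing over copies reduces the lemma to establishing
\[\sum_{\rho}\Pr_\mu[\sigma(\rho)=1]=(1\pm 8\epsilon)kQ \quad\text{and}\quad \AvgCut_\mu(H)/\MaxCut(H)\in[1/K,1].\]

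For the root-sum identity, integrate out each field gadget so that every root $t_i$ becomes a vertex of effective activity $\lambda R$, reducing each path to a 3-edge path between its ports $u,v$ with a known normalizing factor. Conditioning on $Y=\widehat{\Yc}(\sigma)$, Lemma~\ref{lem:SlySun} gives that for each $v\in V(H)$ the port spins of $G_v$ are approximately independent Bernoullis whose parameters depend only on $Y_v$ and on the side of the bipartition to which each port belongs. Consequently, for a path connecting the ports of an edge $e=(u,v)\in E(H)$, the conditional sum $\Pr[\sigma(t_1)=1\mid Y]+\Pr[\sigma(t_2)=1\mid Y]$ takes one of two approximate values depending only on whether $e$ is cut under $Y$. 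I would define $B,C$ as the aggregate per-edge contributions over one $\pl/\pl$ and one $\mi/\mi$ path in the cut and non-cut cases respectively (so that the total contribution of the $2k$ paths of $e$ to the root-sum is $\approx kB$ or $\approx kC$), and $A(\lambda R)$ as the corresponding per-path ratio of effective partition functions between the two cases. Properties (i)--(ii) of~\eqref{eq:vv6gvgg} follow from the definitions: at $R=0$ the internal vertices are forced to spin $0$ and the two cases collapse, while for $R>0$ antiferromagnetism combined with the non-trivial asymmetry of port marginals supplied by Lemma~\ref{lem:SlySun} forces $A>1$; and (iii) is the energy--entropy identity obtained by differentiating $\log A$ with respect to $\lambda R$. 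Summing the per-edge contribution over $E(H)$ and taking expectation over $Y$ while carefully chaining the $O(\epsilon)$ port-independence errors then yields $\sum_\rho\Pr_\mu[\sigma(\rho)=1]=(1\pm 8\epsilon)k[(B-C)\AvgCut_\mu(H)+C|E(H)|]$.

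For the $\AvgCut$-versus-$\MaxCut$ bound, the upper bound is pointwise. For the lower bound, the effective-weight computation above shows that, up to a $(1+O(\epsilon))^{2k}$ correction (negligible once $\epsilon$ is taken small compared with $1/(k\log A)$), the Gibbs probability of a phase assignment $Y$ is proportional to $A(\lambda R)^{2k|\Cut_H(Y)|}$ times the phase-gadget weight. Any $Y$ whose cut size falls $t$ short of $\MaxCut(H)$ is thus suppressed by a factor $A^{-2kt}$ relative to maximum-cut assignments; overcounting by the $2^{|V(H)|}$ possible $Y$ and using $\MaxCut(H)\ge 3|V(H)|/4$ for a 3-regular $H$ gives $\AvgCut_\mu(H)\ge\MaxCut(H)/(1+6/(k\log A))$ for $|V(H)|$ sufficiently large, as required. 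The main obstacle is the multiplicative error control in the field-gadget step: achieving $(1\pm 8\epsilon)$ accuracy on a sum of $4k|E(H)|$ root marginals forces us to condition on the global phase assignment $Y$ first and only then invoke Lemma~\ref{lem:SlySun}'s per-gadget port independence, so that the port-independence errors compose linearly in the number of paths rather than exponentially.
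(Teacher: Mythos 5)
Your proposal follows essentially the same route as the paper's proof: decompose the magnetization via the articulation roots of the field gadgets so that each copy contributes $\Ac'+M\Pr_\mu[\sigma(\rho)=1]$, condition on the phase vector $Y$ and invoke Lemma~\ref{lem:SlySun}'s near-product form for the port marginals to reduce the per-edge root marginals to two cases ($B$ for cut, $C$ for uncut), identify $A$ as the corresponding ratio of path partition functions (the paper verifies $A>1$ via the explicit determinant identity $f_{\pl\mi}f_{\mi\pl}-f_{\pl\pl}f_{\mi\mi}=(1-\beta\gamma)^3x^2(q^\pl-q^\mi)^2$ which makes your ``antiferromagnetism $+$ asymmetry'' heuristic rigorous), and finally argue concentration of the phase law on near-maximum cuts. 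Two small remarks: the paper packages $A$ as the ratio for a \emph{pair} of paths so the suppression exponent is $A^{k|\Cut_H(Y)|}$, not $A^{2k|\Cut_H(Y)|}$ (a pure normalization choice), and no constraint linking $\epsilon$ to $1/(k\log A)$ is needed --- the $(1\pm\epsilon)^{|V(H)|}$ port-independence error is absorbed by the exponentially stronger $A^{-k\cdot\Theta(|V(H)|)}$ suppression, which is why the lemma only asks $\epsilon<1/10$.
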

We defer the detailed proof of Lemma~\ref{lem:maingadget} to Section~\ref{sec:maingadget}.

Note that the expression for the magnetization of $H^{k}_{G,\Tc}$ accounts for the contributions of the phase gadgets $G$ only implicitly. This is by design; when we append our pair of field gadgets $\Tc_1,\Tc_2$ which have the same effective fields, vertices in $V(\widehat{H}^{k}_{G})$ will have the same marginal distribution in both $\mu_{H^{k}_{G,\Tc_1}}$ and $\mu_{H^{k}_{G,\Tc_2}}$; on the other hand the magnetization gaps of $\Tc_1,\Tc_2$ are different. Therefore, when we take the difference of the magnetizations in $H^{k}_{G,\Tc_1}$ and $H^{k}_{G,\Tc_2}$, the contributions of the phase gadgets $G$ cancel and, in the left-over quantity, the major contribution comes from the value of $\AvgCut_{\mu}(H)$.

We are almost ready to prove Theorem~\ref{thm:vanilla}. Recall, the assumption therein is that we have a pair of field gadgets with the same effective fields and different magnetization gaps. We will need to bootstrap this slightly to obtain additional  pairs of field gadgets, as stated in the following lemma.
\begin{lemma}\label{lem:vanilla}
Let $\Delta\geq 3$ be an integer and $(\beta,\gamma,\lambda)$ be antiferromagnetic with $(\beta,\gamma,\lambda)\neq (\beta,\beta,1)$. Suppose that there exists a pair of field gadgets $\Tc_1, \Tc_2$ with maximum degree $\Delta$ such that $R_{\Tc_1}=R_{\Tc_2}$ but $M_{\Tc_1}\neq M_{\Tc_2}$. Then, we can construct for $j=0,1,2,\hdots$ an infinite sequence of pairs of field gadgets $\Tc_{1,j},\Tc_{2,j}$ of maximum degree $\Delta$ with effective fields $R_{1,j}, R_{2,j}$ and magnetization gaps $M_{1,j},M_{2,j}$ such that $R_{1,j}=R_{2,j}$ but $M_{1,j}\neq M_{2,j}$; moreover the values of $R_{1,j}$, and hence of $R_{2,j}$ as well, are pairwise distinct. 
\end{lemma}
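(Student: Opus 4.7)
The plan is to iteratively apply a ``path-extension'' operation to $\Tc_1,\Tc_2$ and track how the effective field and magnetization gap transform. For any field gadget $\Tc$ with root $\rho$, effective field $R$, and magnetization gap $M$, let $\Tc^+$ be the field gadget obtained by adding a new vertex $\rho'$ adjacent to $\rho$ and declaring $\rho'$ to be the new root. Conditioning on the spin at $\rho'$ (and using that the remainder of $\Tc^+$ interacts with $\rho'$ only through $\rho$) gives by direct computation
\[
R_{\Tc^+}=f(R),\qquad M_{\Tc^+}=1+c(R)\,M,
\]
where $f(R):=\tfrac{\gamma\lambda R+1}{\lambda R+\beta}$ is a M\"obius transformation and $c(R):=\tfrac{\lambda R\,(\beta\gamma-1)}{(\gamma\lambda R+1)(\lambda R+\beta)}$ is nonzero for $R>0$ since $\beta\gamma<1$ in the antiferromagnetic regime. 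The maximum degree of $\Tc^+$ is at most $\Delta$ because only vertices of degree at most $2$ are introduced.

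Setting $\Tc_{i,0}:=\Tc_i$ and $\Tc_{i,j+1}:=(\Tc_{i,j})^+$, induction on $j$ from the hypothesis $R_{\Tc_1}=R_{\Tc_2}=:\hat R$ yields
\[
R_{1,j}=R_{2,j}=f^{(j)}(\hat R),\qquad M_{1,j}-M_{2,j}=(M_{\Tc_1}-M_{\Tc_2})\prod_{k=0}^{j-1}c\bigl(f^{(k)}(\hat R)\bigr)\neq 0,
\]
so each pair has equal effective fields and different magnetization gaps. It remains to produce pairwise distinct $R_{1,j}$. The fixed-point polynomial of $f$ has discriminant $(\beta-\gamma\lambda)^2+4\lambda>0$, so $f$ has two distinct real fixed points with reciprocal real multipliers; by standard M\"obius dynamics on $\mathbb{R}$, the orbit $\{f^{(j)}(\hat R)\}_{j\ge 0}$ is infinite unless $\hat R$ is a periodic point of $f$. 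In the generic case, passing to a subsequence of the $\Tc_{i,j}$ delivers the required pairwise distinct infinite sequence.

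The main obstacle is the degenerate subcase in which the orbit of $\hat R$ under $f$ is finite, which happens only when $\hat R$ is itself a fixed point of $f$, or when the multiplier at a fixed point of $f$ equals $-1$ (so $f\circ f=\mathrm{id}$). To handle it, I would replace some iterations of path-extension by a ``$Y$-extension'' that inserts the new root adjacent to a new middle vertex $v$, with $v$ in turn adjacent to the old root and to an extra pendant leaf; the vertex $v$ has degree $3$, which is permitted since $\Delta\geq 3$. The same tree-recursion bookkeeping shows that the induced effective-field map is $g(R)=\tfrac{\gamma X+1}{X+\beta}$ with $X=\lambda f(R)f(1)$, a M\"obius map distinct from $f\circ f$ unless $f(1)=1$, i.e., unless $\lambda=(1-\beta)/(1-\gamma)$ --- and this is precisely the special parameter setting in Definition~\ref{def:field} where field gadgets may additionally carry triangles attached to leaves, providing still further operations. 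Combining $f$- and $g$-iterations (and triangle-based operations in the special case) breaks periodicity and yields infinitely many distinct effective fields, while the magnetization-gap inequality survives throughout by an analogous nonvanishing-coefficient check using $\beta\gamma<1$. The delicate step is dispatching the exceptional parameter windows simultaneously and verifying that the augmented family of operations cannot share $\hat R$ as a common periodic point, which reduces to an explicit algebraic case analysis.
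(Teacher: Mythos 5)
Your overall plan matches the paper's: iterate the ``add one new root vertex'' operation (what you call path-extension, which is Lemma~\ref{lem1} applied with the degenerate tree), observe that it sends $(R,M)$ to $(f(R), 1+c(R)M)$ with $c(R)=-\omega(f(R))\neq 0$, so the pair $\Tc_{1,j},\Tc_{2,j}$ keeps equal effective fields and distinct magnetization gaps, and the only obstruction to pairwise-distinct $R_{1,j}$ is that $\hat R$ be a periodic point of the M\"obius map $f$. In the generic case your argument is fine (in fact no subsequence is needed: $f$ is injective, so an infinite orbit is automatically pairwise distinct).

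The genuine gap is in the degenerate case, which is in fact the only nontrivial content of the lemma once the iteration is set up, and you explicitly leave it unfinished (``The delicate step is \dots which reduces to an explicit algebraic case analysis''). Two concrete things would close it. First, a short computation shows $f'(x^*)=-\omega^*$ with $\omega^*=\omega(x^*)\in(0,1)$ (use $f'(R)=-\omega(f(R))\,f(R)/R$); hence the multiplier at $x^*$ is in $(-1,0)$, never $-1$, so $f\circ f\neq\mathrm{id}$ and the only periodic point of $f$ in $(0,\infty)$ is the fixed point $x^*$. This removes the ``multiplier $=-1$'' branch you carry along; note that if it did occur, your $Y$-extension fix would not obviously resolve it, since after one application of $g$ the further $f$-iterates would again cycle with period two, so keeping that branch open is a real loose end. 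Second, for $\hat R=x^*$ one still has to exhibit one concrete operation that moves off $x^*$ and check the magnetization-gap difference survives. The paper does this more simply than your $Y$-extension: merge two copies of $\Tc_i$ via Lemma~\ref{lem1} ($k=2$), giving $R_{i,0}=\frac{1+\gamma\lambda(x^*)^2}{\beta+\lambda(x^*)^2}\neq f(x^*)=x^*$ whenever $x^*\neq 1$, and $M_{i,0}=1-\omega(R_{i,0})(2M_i-1)$ with $\omega(R_{i,0})\neq 0$; the case $x^*=1$ (equivalently $\lambda=\frac{1-\beta}{1-\gamma}$ with $\beta\neq\gamma$, which is excluded only when $\beta=\gamma$) is then handled by merging $\Tc_i$ with the triangle gadget of Lemma~\ref{lem:triangle}, whose effective field $R\neq 1$ guarantees $R_{i,0}\neq x^*$. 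Your $g(x^*)\neq x^*\Leftrightarrow f(1)\neq 1$ observation is correct and would also work, but you would still need to carry out precisely this verification (including the $M$-difference nonvanishing through the composite step), which you only assert.
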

The proof of Lemma~\ref{lem:vanilla} is given in Section~\ref{sec:vanilla}. With Lemmas~\ref{lem:vanilla} and~\ref{lem:SlySun} at hand, we can now give the proof of Theorem~\ref{thm:vanilla}.

\begin{proof}[Proof of Theorem~\ref{thm:vanilla}]
For $i\in\{1,2\}$, let $R_i:=R_{\Tc_i}$ be the effective field of $\Tc_i$, $M_i:=M_{\Tc_i}$ be the magnetization gap of $\Tc_i$ and $\Ac_i':=\Eb_{\tau\sim \mu_{\Tc_i;\beta,\gamma,\lambda}}[\,|\tau|\mid \tau(\rho_i)=0]$. Note that $R_1=R_2$ but $M_1\neq M_2$, and $\Ac_1',\Ac_2'$ are absolute constants. 

 Let $A(R),B(R), C(R)$ be the functions in Lemma~\ref{lem:maingadget} and set $D(R)=B(R)-C(R)$. For convenience, we let $A,B,C,D$ denote the common values of $A(R_i),B(R_i),C(R_i),D(R_i)$, respectively, for $i\in \{1,2\}$.  We will need to ensure in our argument that $D\neq 0$. The first observation is that \eqref{eq:vv6gvgg} implies that for all but finitely many values of $R$ we have that $D(R)\neq 0$.\footnote{\label{ft:4f343}If $B(R)$ and $C(R)$ agree on infinitely many values of $R$, then since both of them are rational functions of $R$, it must be the case that $B(R)=C(R)$ for all $R\geq 0$. Since for all $R\geq 0$, we have that  $\lambda R\cdot (\log A(R))'=B(R)-C(R)$, this would give that $A(R)$ is a constant function; contradiction, since \eqref{eq:vv6gvgg} asserts that $A(0)=1$ and $A(R)>1$ for all $R>0$.} The second observation is that, from Lemma~\ref{lem:vanilla},  using $\Tc_1,\Tc_2$ we can construct for $j=0,1,2,\hdots$ an infinite sequence of pairs of field gadgets $\Tc_{1,j},\Tc_{2,j}$ of maximum degree $\Delta$ with effective fields $R_{1,j}, R_{2,j}$ and magnetization gaps $M_{1,j},M_{2,j}$ such that $R_{1,j}=R_{2,j}$ but $M_{1,j}\neq M_{2,j}$; moreover the values of $R_{1,j}$ are pairwise distinct. From these two observations, it follows that we can assume without loss of generality that $D\neq 0$.

Suppose for the sake of contradiction that, for arbitrarily small $\kappa>0$, there is a polynomial-time algorithm that, on input a graph $G$ of maximum degree $\Delta$, produces a $(1+\kappa)$-approximation of the magnetization $\Mc_{G;\beta,\gamma,\lambda}$. We will show that we can approximate $\MaxCut$ on 3-regular graphs within a constant factor arbitrarily close to 1, contradicting the inapproximability result of \cite{maxcut}.

For $i\in \{1,2\}$, consider the graph $H^{k}_{G,\Tc_i}$ for some large integer $k>1+10/\log A$ and small $\epsilon>0$ to be specified later. Let $\mu_i$ denote the Gibbs distribution on $H^{k}_{G,\Tc_i}$ with parameters $\beta,\gamma,\lambda$. By Lemma~\ref{lem:maingadget}, the average magnetization of the graph $H^{k}_{G,\Tc_i}$ satisfies 
\begin{equation}\label{eq:magsia}
\begin{aligned}
\Mc_{\beta,\gamma,\lambda}(H^{k}_{G,\Tc_i})&=4k\Ac_i'|E(H)|+\Eb_{\sigma\sim \mu_i}\Big[\big|\sigma_{V(\widehat{H}^{k}_{G})}\big|\Big]+(1\pm 8\epsilon)kM_{i} Q_i,
\end{aligned}
\end{equation}
where $Q_i=D\, \AvgCut_{\mu_i}(H)+C |E(H)|$ and $\AvgCut_{\mu_i}(H)$ satisfies
\begin{equation}\label{eq:avgcutia}
1/K\leq \displaystyle \frac{\AvgCut_{\mu_i}(H)}{\MaxCut(H)}\leq 1\mbox{ for }K:=1+\displaystyle\frac{6}{k\log A}.
\end{equation}
Note that since $R_1=R_2$ we have that (see the upcoming \eqref{eq:3dcevtvt5hyhuhu} for a more detailed explanation)
\begin{equation}\label{eq:t5g4rv55}
\Eb_{\sigma\sim \mu_1}\Big[\big|\sigma_{V(\widehat{H}^{k}_{G})}\big|\Big]=\Eb_{\sigma\sim \mu_2}\Big[\big|\sigma_{V(\widehat{H}^{k}_{G})}\big|\Big] \quad \mbox{ and }\quad \AvgCut_{\mu_1}(H)=\AvgCut_{\mu_2}(H).
\end{equation}
Let $\AvgCut_{\mu}(H)$ denote the common value of $\AvgCut_{\mu_1}(H),\AvgCut_{\mu_2}(H)$ and $Q$ be the common value of $Q_1,Q_2$.

Let $\Dc:=\Mc_{\beta,\gamma,\lambda}(H^{k}_{G,\Tc_1})- \Mc_{\beta,\gamma,\lambda}(H^{k}_{G,\Tc_2})$. From \eqref{eq:magsia} and \eqref{eq:t5g4rv55}
\[\Dc=4k(\Ac_1'-\Ac_2')|E(H)|+(1\pm 8\epsilon)k (M_1-M_2)Q.\] 

Let $L:=|V(\Tc_1)|+|V(\Tc_2)|$. Using the purported algorithm for the magnetization on $H^{k}_{G,\Tc_i}$, and since the latter graph has at most $|V(H)|\, |V(G)|+4k|E(H)|\, |V(\Tc_i)|\leq k(|V(G)|+4 L) |E(H)|=:X$ vertices, we can compute an estimate of $\Mc_{\beta,\gamma,\lambda}(H^{k}_{G,\Tc_i})$ that is off by at most (an additive) $\kappa X$. By subtracting these estimates for $i\in \{1,2\}$, we obtain an estimate $\widehat{\Dc}$ of $\Dc$ satisfying
\[|\widehat{\Dc}-\Dc|\leq 2k(|V(G)|+4 L) |E(H)|\kappa\]
for some small $\kappa$.  Then, we have that $\widehat{\textsc{MC}}=\frac{\widehat{\Dc}-4k(\Ac_1'-\Ac_2')|E(H)|}{k(M_1-M_2)D}-\frac{C}{D}|E(H)|$
satisfies 
\begin{equation}\label{eq:4g566g6}
|\widehat{\textsc{MC}}-\AvgCut_{\mu}(H)|\leq \Big(\tfrac{2(|V(G)|+4L)}{|M_1-M_2|}\kappa+8\big(1+\tfrac{|C|}{|D|}\big)\epsilon\Big)|E(H)|, 
\end{equation}
where $\AvgCut_{\mu}(H)$ denotes the common value of $\AvgCut_{\mu_1}(H),\AvgCut_{\mu_2}(H)$ (see \eqref{eq:t5g4rv55}). By choosing $k$ sufficiently large, we have from \eqref{eq:avgcutia} that $\AvgCut_{\mu}(H)$ is within a factor  arbitrarily close to 1 from  $\MaxCut(H)$. We will also choose $\epsilon>0$ to be arbitrarily close to 0. This has the potential effect of increasing the size of $G$, but  by choosing $\kappa>0$ to be sufficiently small we can nevertheless ensure from \eqref{eq:4g566g6} that our approximation $\widehat{\textsc{MC}}$ is within a factor  arbitrarily close to 1 from  $\AvgCut_{\mu}(H)$, and hence from $\MaxCut(H)$ as well. This finishes the contradiction argument and completes the proof of Theorem~\ref{thm:vanilla}.
 \end{proof}

With Theorem~\ref{thm:vanilla}, we can easily conclude the inapproximability result for the average-size of an independent set. Theorems~\ref{thm:mainind3},~\ref{thm:mainIs3},~\ref{thm:maingen3} will follow analogously, once we give the field gadget constructions of Section~\ref{sec:f434}. The proofs of these Theorems are given in Section~\ref{sec:f4343c}.
\begin{proof}[Proof of Theorem~\ref{thm:mainind1}]
By Example~\ref{e4f334}, we have two trees $\Tc_1$ and $\Tc_2$ with the same effective fields and different magnetization gaps. Also, independent sets correspond to $\beta=1,\gamma=0,\lambda=1$ and it is well-known (see, e.g., \cite{Sly10}) that  $(\beta,\gamma,\lambda)\in \Uc_{\Delta}^*$ iff $\Delta\geq 6$. Therefore, the desired inapproximability result follows by applying Theorem~\ref{thm:vanilla}.
\end{proof}

The proof of Theorems~\ref{thm:mainind2},~\ref{thm:mainIs2}, and~\ref{thm:maingen2} build on similar reduction ideas, but the details are slightly more technical because of the different type of gadgets that we use (cf. Theorem~\ref{thm:const1}). The details can be found in Section~\ref{sec:maingen2}.

\section{Field Gadget construction}\label{sec:gadget}

In this section, we give the proof of Theorem~\ref{thm:const1} by first outlining the key ideas of our field-gadget constructions. We start with the following lemma which describes the effective field and magnetization gap of field gadgets built out of smaller field gadgets. The proof is given in Section~\ref{sec:prooflem1}.
\begin{lemma}\label{lem1}
Suppose we have field gadgets $\mathcal{T}_1,\dots,\mathcal{T}_k$ with roots $\rho_1,\hdots,\rho_k$, effective fields $R_1,\dots,R_k$ and
magnetization gaps $M_1,\dots,M_k$, respectively. Let $\mathcal{T}$ be the tree with root $\rho$, edge
$\{\rho,u\}$ and the roots of $\mathcal{T}_1,\dots,\mathcal{T}_k$ identified with $u$. Let $R$ and $M$ be the
effective field and magnetization gap for $\mathcal{T}$. Then
\begin{equation*}
R = \frac{1+\gamma\lambda\prod^{k}_{i=1} R_i}{\beta+\lambda \prod^{k}_{i=1}R_i}, \qquad M  = 1 - \omega(R) \bigg( 1 + \sum_{i=1}^k (M_i-1)\bigg),
\end{equation*}
where
$$
\omega(R) := \frac{1+\beta \gamma-\beta R-\gamma/R}{1-\beta \gamma}.
$$
For antiferromagnetic $(\beta,\gamma)$ and any $\lambda>0$, it holds that $R\in (\gamma, 1/\beta)$ and
$0<\omega(R)<1$.
\end{lemma}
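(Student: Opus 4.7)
The plan is to run the standard tree recursion for $2$-spin systems in two layers: first compute the partition function ratio at $u$ after gluing the $\mathcal{T}_i$'s, then attach the pendant edge $(\rho,u)$. Writing $Z_i^s := \sum_{\sigma:\sigma(\rho_i)=s} w(\sigma)$, so that $Z_i^1/Z_i^0 = \lambda R_i$, I would observe that the subtrees are conditionally independent given $\sigma(u)$; since the single vertex $u$ carries weight $\lambda^{\sigma(u)}$ but appears as a root in each $Z_i^s$, the combined ratio at $u$ for the intermediate tree $\mathcal{T}'$ obtained by merging the $\mathcal{T}_i$'s at $u$ equals $\lambda^{-(k-1)}\prod_i (Z_i^1/Z_i^0) = \lambda\prod_i R_i$. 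Attaching $(\rho,u)$ contributes a factor $\lambda$ for $\rho=1$ together with the appropriate edge weights ($1,\gamma,\beta,1$ for the four choices of $(\sigma(\rho),\sigma(u))$), and dividing $Z_{\mathcal{T}}^{\,1}/Z_{\mathcal{T}}^{\,0}$ by $\lambda$ then yields the claimed formula for $R$.

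For the magnetization gap, I would set $X:=\lambda\prod_i R_i$ and $A'_s := \Eb[|\sigma_{\mathcal{T}'}|\mid \sigma(u)=s]$. Conditional independence together with the bookkeeping that counts $u$ only once gives $A'_s = (1-k)s + \sum_i A_i^s$ where $A_i^s := \Eb[|\sigma|\mid\sigma(\rho_i)=s]$, and hence $A'_1-A'_0 = 1 + \sum_i(M_i-1)$. Decomposing over $\sigma(u)\in\{0,1\}$ then yields $\Eb[|\sigma_{\mathcal{T}}|\mid \sigma(\rho)=s] = s + P_{s,0}A'_0 + P_{s,1}A'_1$ with $P_{s,t}:=\mu(\sigma(u)=t\mid\sigma(\rho)=s)$, and using $P_{s,0}+P_{s,1}=1$ the gap collapses to $M = 1 + (P_{1,1}-P_{0,1})(A'_1-A'_0)$. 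All that remains is to identify the coefficient with $-\omega(R)$.

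The core algebraic step is thus the identity $P_{1,1}-P_{0,1} = -\omega(R)$. Direct computation with the local edge weights gives $P_{1,1} = \gamma X/(1+\gamma X)$ and $P_{0,1} = X/(\beta+X)$, so
\[ P_{1,1}-P_{0,1} = \frac{(\beta\gamma-1)X}{(1+\gamma X)(\beta+X)}. \]
To match, I would substitute $R=(1+\gamma X)/(\beta+X)$ into $\omega(R)(1-\beta\gamma) = 1+\beta\gamma - \beta R - \gamma/R$ and clear denominators; the resulting numerator is a polynomial in $X$ whose constant and $X^2$ coefficients cancel, while the $X$-coefficient evaluates to $(1-\beta\gamma)^2$. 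This produces $\omega(R) = (1-\beta\gamma)X/\bigl((\beta+X)(1+\gamma X)\bigr)$, matching $-(P_{1,1}-P_{0,1})$ exactly and completing the formula for $M$.

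Finally, for the bounds under the antiferromagnetic assumption $\beta\gamma\in[0,1)$ (with $X>0$ inherited inductively from the subtrees), one has $R-\gamma = (1-\beta\gamma)/(\beta+X) > 0$ and $1/\beta - R = X(1-\beta\gamma)/\bigl(\beta(\beta+X)\bigr) > 0$ (the latter vacuous when $\beta=0$, where $1/\beta$ is interpreted as $\infty$). The explicit formula makes $\omega(R)>0$ immediate, and $\omega(R)<1$ rearranges to $0 < \beta + 2\beta\gamma X + \gamma X^2$, which holds since at least one of $\beta,\gamma$ is positive. I expect the polynomial verification of $\omega(R)$ to be the only technically delicate step; the rest is routine bookkeeping once the partition ratios at $u$ have been organised.
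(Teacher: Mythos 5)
Your computation of the effective field $R$ follows essentially the same partition-function recursion as the paper's proof (modulo notation). For the magnetization gap, however, you take a genuinely different route. The paper observes that $M = 1 + \tfrac{\lambda}{R}\tfrac{\partial R}{\partial\lambda}$ (the gap is the logarithmic derivative of $\lambda R$ in $\lambda$), applies the chain rule to the formula for $R$, and the coefficient $\omega(R)$ drops out automatically with no extra algebra. You instead work directly with conditional expectations: decomposing $\Eb[|\sigma_{\Tc}|\mid\sigma(\rho)=s]$ over the spin at $u$, using conditional independence of the subtrees to get $A'_1-A'_0 = 1+\sum_i(M_i-1)$, and identifying $P_{1,1}-P_{0,1}=-\omega(R)$ by a polynomial verification after substituting $R=(1+\gamma X)/(\beta+X)$. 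Your route is more elementary (no differentiation in $\lambda$) and makes the probabilistic meaning of $\omega(R)$ transparent, at the cost of the by-hand cancellation check; the paper's calculus route is shorter but more opaque about where $\omega$ comes from. I checked your algebra: the constant and $X^2$ coefficients cancel and the $X$ coefficient is $(1-\beta\gamma)^2$, so the identity is correct, and your derivation of $P_{1,1}=\gamma X/(1+\gamma X)$ and $P_{0,1}=X/(\beta+X)$ is right. You also explicitly verify the range claims $R\in(\gamma,1/\beta)$ and $0<\omega(R)<1$, which the paper's proof leaves implicit; these do rely on $X>0$, which is fine inductively since $R_i>\gamma\ge 0$ for all subtrees, but it is worth stating that the base case (a single vertex, $R=1$, $X=\lambda>0$) starts the induction. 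Overall the proposal is correct.
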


The first step in the construction is to create a family of field gadgets with sufficiently dense effective fields in
an interval and magnetization gaps that are in a small interval. To define the interval, consider $x^*$ and $\omega^*$ defined as follows
\begin{equation}\label{ode}
x^* = \frac{1+\gamma\lambda (x^*)^2}{\beta+\lambda (x^*) ^2},\quad \omega^* = \omega(x^*) = \frac{1+\beta \gamma-\beta x^*-\gamma/x^*}{1-\beta \gamma},
\end{equation}
and note  that $\omega^*\in(0,1)$ from Lemma~\ref{lem1}.

\begin{lemma}\label{led}
Fix $\lambda>0$. For any $\delta>0$ and any sufficiently small $\tau_1>0$ we can find $\tau\in (0,\tau_1)$ and construct a family of field gadgets $\mathcal{L}=\{{\cal T}_1,\dots,{\cal T}_k\}$ whose effective fields are in the interval $[x^* - \tau,x^*+\tau]$ and for any
$x\in [x^* - \tau,x^*+\tau]$ there exist a field gadget in the family whose effective field is in the
interval $[x-\tau\delta,x+\tau\delta]$.
\end{lemma}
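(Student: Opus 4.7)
The plan is to build $\mathcal{L}$ via a two-level construction. First I would produce two ``base'' field gadgets $\Tc_a,\Tc_b$ of maximum degree $3$ whose effective fields $R_a,R_b$ are both extremely close to $x^*$ but distinct. Then I would form a binary combination tree of a carefully chosen depth $d$: iterate the binary combination of Lemma~\ref{lem1} $d$ times so that the result has $2^d$ leaf slots, each plugged with a copy of $\Tc_a$ or $\Tc_b$. Indexing such a gadget by $k\in\{0,\ldots,2^d\}$ (the number of leaves equal to $\Tc_a$), the $2^d+1$ resulting gadgets $\Tc^{(k,d)}$ will have effective fields arranged in a near-arithmetic-progression whose step size and span can be tuned by adjusting $d$ and $|R_a-R_b|$.

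The technical workhorse is the Taylor expansion of $f(R_1,R_2)=(1+\gamma\lambda R_1R_2)/(\beta+\lambda R_1R_2)$ around $(x^*,x^*)$:
\[
f(x^*+u,x^*+v)=x^*+c(u+v)+O(u^2+v^2),\qquad c:=\partial_1 f(x^*,x^*),
\]
where the explicit formula for $c$ shows $c<0$ in the antiferromagnetic regime (its sign is that of $\gamma\beta-1$). Propagating this linearization through $d$ levels, if the $2^d$ leaf-deviations $\delta^{(1)},\ldots,\delta^{(2^d)}$ from $x^*$ all satisfy $|\delta^{(i)}|\le M$, then the root deviation equals $c^d\sum_i\delta^{(i)}+E$ with $|E|=O((2|c|)^dM^2)$ (by induction on $d$). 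Writing $a:=R_a-x^*$, $b:=R_b-x^*$, this gives
\[
R^{(k,d)}=x^*+c^d\bigl(ka+(2^d-k)b\bigr)+O\!\bigl((2|c|)^dM^2\bigr),
\]
so the leading-order values form an arithmetic progression of common difference $|c|^d|a-b|$ and total span $|2c|^d|a-b|$.

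Given $\delta$ and $\tau_1$ I set $d:=\lceil\log_2(2/\delta)\rceil$ and $\tau:=\tfrac12|2c|^d|a-b|$; then the common difference equals $2\tau/2^d\le \tau\delta$ (matching the prescribed granularity) while the span equals $2\tau$. Shrinking $|a-b|$ guarantees $\tau<\tau_1$, and provided $M=O(|a-b|)$ is much smaller than $2^{-d}$, the quadratic error is negligible compared with the spacing. Everything then reduces to Step~1: for any prescribed $M>0$, produce a pair $\Tc_a,\Tc_b$ of max-degree-$3$ field gadgets with $|R_a-x^*|,|R_b-x^*|\le M$ and $R_a\neq R_b$. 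For this I would use the asymmetric iteration $R\mapsto f(R,S)$: for any fixed $S>0$ this has a positive fixed point $y(S)$, and implicit differentiation gives $y'(x^*)=c/(1-c)$, whose absolute value is $|c|/(1+|c|)<1$ since $c<0$. Bootstrapping: start with $S_0=1$ (the trivial gadget); iterate $R\mapsto f(R,S_0)$ many times to get a gadget $\Tc^{(1)}$ with field close to $y(S_0)$; let $S_1$ be the field of $\Tc^{(1)}$ and repeat. Since $S\mapsto y(S)$ contracts to $x^*$, two consecutive outer iterates at sufficient depth furnish the required $\Tc_a,\Tc_b$.

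The main obstacle I expect is Step~1 in the non-contracting regime $|2c|\ge 1$ (which does occur, e.g., in the hard-core model above the uniqueness threshold), where balanced binary trees alone diverge from $x^*$ and cannot be used to approximate it. The nested asymmetric iteration above circumvents this, but careful bookkeeping of the compounded errors of the nested iteration — together with the uniform propagation of the quadratic error $O((2|c|)^dM^2)$ through $d=\Theta(\log(1/\delta))$ levels of binary combination — is the delicate part of the proof. Given Step~1, Steps~2 and~3 are routine parameter tuning.
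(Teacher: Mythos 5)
Your proposal follows essentially the same blueprint as the paper's proof: first obtain base gadgets whose effective fields are close to $x^*$ but distinct, then propagate a linearization of the combination map through a binary tree of depth $d=\Theta(\log(1/\delta))$ to manufacture a near-arithmetic progression, tuning the scale so the step is $\le\tau\delta$ and the quadratic error is negligible. The paper's Step~1 (Lemma~\ref{lem:rr44f}) is a monotone bisection sandwiching $x^*$ between $L_i\le x^*\le U_i$, whereas you propose a nested asymmetric contraction; both work and your observation that $S\mapsto y(S)$ contracts with ratio $c/(1-c)$ is correct. The later stages also match the paper's $\mathcal{T}_{i,j}$ construction.

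There is, however, a genuine gap at the very start. When $\lambda=\frac{1-\beta}{1-\gamma}$ one has $x^*=1$, and then \emph{every} tree gadget has effective field exactly $1$: the degenerate tree gives $R=1$, and the recursion $R\mapsto\frac{1+\gamma\lambda\prod R_i}{\beta+\lambda\prod R_i}$ sends $\prod R_i=1$ to $1$ precisely under this choice of $\lambda$. Your bootstrap starting from $S_0=1$ is therefore stuck at $1$ forever, so Step~1 cannot produce $R_a\neq R_b$. This is exactly why Definition~\ref{def:field} expands the notion of field gadget to include triangle-augmented trees in this special case, and why the paper needs Lemma~\ref{lem:triangle} to seed the construction with a non-unit effective field. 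Your argument needs the same escape hatch or some other non-tree seed gadget.

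Two smaller points, both repairable but currently glossed over. First, your error bound $O((2|c|)^dM^2)$ is too optimistic when $2|c|\ge 1$: the quadratic errors at each level feed into squared level magnitudes, and the correct order is $O((2|c|)^{2d}M^2)$ once you unwind the recursion $|E_{\ell+1}|\lesssim 2|c|\,|E_\ell|+(2|c|)^{2\ell}M^2$; this still lets you win by shrinking $M$, so the structure survives, but the bookkeeping matters. Second, since $a$ and $b$ produced by your outer contraction satisfy $b\approx \frac{c}{1-c}a$, the progression $x^*+c^d\bigl(ka+(2^d-k)b\bigr)$ is centered at $x^*+c^d2^{d-1}(a+b)$, an offset that is a constant fraction of your $\tau$, so the symmetric interval $[x^*-\tau,x^*+\tau]$ is not fully covered with the claimed granularity; you need to shrink $\tau$ to the symmetric part (the paper instead uses both levels $Q$ and $Q-1$, which have alternating signs of $\alpha_i$, to cover both sides).
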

The proof of Lemma~\ref{led}  builds on techniques from~\cite{complex} and is given in Section~\ref{sec:led}. 

The second step of the  construction is to use the field gadgets in ${\cal L}$ constructed in Lemma~\ref{led} in an iterative way as follows. At time $t=0$, we will start with some field gadget $\Tc_0$. Suppose that at some stage we have constructed a rooted field gadget $\Tc$ with effective field $R$ and magnetization gap $M$. To construct a new rooted field gadget, we merge $\Tc$ with one field gadget from ${\cal L}$ using the operation from Lemma~\ref{lem1}. We are going to analyze what pairs
(effective field, magnetization gap) we can achieve with this procedure. Let $R_i$ and $M_i$ be the effective
field and magnetization gap for the field gadget $\Tc_i$ in our collection. By Lemma~\ref{lem1}, merging the field gadget $\Tc$ with
$\Tc_i$ yields a rooted field gadget with effective field $\phi_i(R)$ and magnetization gap $\psi_i(R,M)$ where the pair of maps $(\phi_i,\psi_i)$ are given by
\begin{equation}\label{eq:1qazw}
\phi_i(R) = \frac{1+\gamma \lambda R R_i}{\beta +\lambda R R_i}\quad\mbox{and}\quad
\psi_i(R,M)= 1 - \omega(\phi_i(R))( M + M_i - 1).
\end{equation}
For a small constant $\tau>0$ to be specified later, let
\begin{equation}\label{idef2}
I' = \Big[x^* - 2 \tau \frac{|\omega^*|}{1-|\omega^*|}, x^* + 2 \tau \frac{|\omega^*|}{1-|\omega^*|}\Big]
\end{equation}
The choice of the interval $I'$ is such that the maps $\phi_i$ are uniformly contracting and map the interval $I'$ to itself. Namely, we show the following in Section~\ref{sec:contr}.
\begin{lemma}\label{contr}
There exists $0<C_{\mathrm{min}} < C_{\mathrm{max}}<1$ and $\tau_0>0$ such that for all $\tau\in(0,\tau_0)$
for all $R\in I'$ and all $R_i\in [x^* - \tau, x^* + \tau]$ it holds that
\[C_{\mathrm{min}}\leq |\phi'_i(R)|\leq C_{\mathrm{max}}, \quad \omega(R)\leq C_{\mathrm{max}}, \quad \phi_i(R)\in I'\]
\end{lemma}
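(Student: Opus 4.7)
The plan is to first establish the identity $|\phi_i'(R)| = \omega^*$ at $R = R_i = x^*$, and then extend the bounds to a neighborhood by continuity. I would differentiate directly to get
\[
\phi_i'(R) = \frac{\lambda R_i(\beta\gamma - 1)}{(\beta + \lambda R R_i)^2},
\]
which is a smooth function of $(R, R_i)$ on $(0,\infty)^2$, strictly negative since $\beta\gamma < 1$ by the antiferromagnetic assumption. Evaluating at $R = R_i = x^*$ and using the fixed-point relation \eqref{ode} — specifically the rearrangement $\lambda(x^*)^2(x^* - \gamma) = 1 - \beta x^*$, which yields $\beta + \lambda(x^*)^2 = (1-\beta\gamma)/(x^*-\gamma)$ — a short calculation gives
\[
c^* \;:=\; \frac{\lambda x^*(1-\beta\gamma)}{(\beta + \lambda (x^*)^2)^2} \;=\; \frac{(1-\beta x^*)(x^*-\gamma)}{x^*(1-\beta\gamma)} \;=\; \frac{1+\beta\gamma - \beta x^* - \gamma/x^*}{1-\beta\gamma} \;=\; \omega^*.
\]
This ``derivative-equals-$\omega^*$'' identity is the conceptual heart of the proof and explains why the specific constant in the definition of $I'$ in \eqref{idef2} is $2\tau\, \omega^*/(1-\omega^*)$.

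Since Lemma~\ref{lem1} guarantees $\omega^* \in (0,1)$, so too does $c^*$. I would then fix
\[
C_{\mathrm{max}} \in \Big(\omega^*,\; \tfrac{2\omega^*}{1+\omega^*}\Big), \qquad C_{\mathrm{min}} := \omega^*/2,
\]
where the first interval is non-empty precisely because $\omega^* < 1$. The function $(R, R_i) \mapsto |\phi_i'(R)|$ is continuous at $(x^*, x^*)$ with value $\omega^*$, and $R \mapsto \omega(R)$ is continuous at $x^*$ with value $\omega^*$. Since $I' \subseteq [x^*-\Lambda, x^*+\Lambda]$ with $\Lambda := 2\tau\omega^*/(1-\omega^*) \to 0$ as $\tau \to 0$, there exists $\tau_0 > 0$ so that for all $\tau \in (0, \tau_0)$, all $R \in I'$, and all $R_i \in [x^* - \tau, x^* + \tau]$, both $C_{\mathrm{min}} \leq |\phi_i'(R)| \leq C_{\mathrm{max}}$ and $\omega(R) \leq C_{\mathrm{max}}$ hold.

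For the invariance $\phi_i(R) \in I'$, write $\Phi(R, R_i) := (1+\gamma \lambda R R_i)/(\beta + \lambda R R_i)$, which is symmetric in $R$ and $R_i$ and satisfies $\Phi(x^*, x^*) = x^*$. Splitting
\[
|\phi_i(R) - x^*| \leq |\Phi(R, R_i) - \Phi(x^*, R_i)| + |\Phi(x^*, R_i) - \Phi(x^*, x^*)|
\]
and applying the mean value theorem to each term (using symmetry of $\Phi$ so that the same partial-derivative bound $C_{\mathrm{max}}$ controls both) yields $|\phi_i(R) - x^*| \leq C_{\mathrm{max}}(|R-x^*| + |R_i-x^*|) \leq C_{\mathrm{max}}(\Lambda + \tau)$. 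A direct computation gives $\Lambda/(\Lambda + \tau) = 2\omega^*/(1+\omega^*)$, so the choice $C_{\mathrm{max}} < 2\omega^*/(1+\omega^*)$ forces $C_{\mathrm{max}}(\Lambda + \tau) \leq \Lambda$, giving $\phi_i(R) \in I'$. The main obstacle is the derivative identity $c^* = \omega^*$: without it, the half-width $\Lambda$ would not be calibrated to yield self-containment, and one would have to redefine $I'$ in a way incompatible with the iterative contraction bounds needed later in the construction.
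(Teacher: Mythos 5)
Your proof is correct and follows essentially the same route as the paper: you identify $|\phi_i'(x^*)| = \omega^* \in (0,1)$ via the fixed-point relation, extend the derivative bounds to a small neighborhood by continuity, and use the contraction together with the specific half-width of $I'$ to obtain the self-mapping property. The only cosmetic difference is that you invoke the mean value theorem to get the clean bound $|\phi_i(R)-x^*|\leq C_{\mathrm{max}}(\Lambda+\tau)$ and observe that $\Lambda/(\Lambda+\tau) = 2\omega^*/(1+\omega^*)$, whereas the paper carries an explicit Taylor error term $C\tau^2$ and absorbs it via the choice $\tau < |\omega^*|(1-|\omega^*|)/(100C)$; both yield the same conclusion.
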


The second step of our construction will actually take place in the following smaller sub-interval of $I'$:
\begin{equation}\label{idef}
I = [x^* - |\omega|\tau/2, x^* + |\omega|\tau/2] \subseteq I',
\end{equation}
The choice of $I$ is to ensure the following ``well-covered'' property
for the maps $\phi_i$ on $I$ which is obtained using the  density of the effective fields from Lemma~\ref{led}, the proof is given in Section~\ref{sec:contr}.
\begin{lemma}\label{lc}
Suppose $\delta<|\omega|/100$. For every two points $x_1,x_2\in I$ such that $|x_1-x_2|\leq |\omega|\tau \delta/2$
there exists $i$ such that $x_1,x_2\in\phi_i(I)$.
\end{lemma}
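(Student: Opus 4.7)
The plan is to treat the effective field $R_i$ of the child gadget as a continuous knob: as $R_i$ sweeps over $[x^*-\tau, x^*+\tau]$, the anchor point $\phi_i(x^*)$ traces out an interval strictly wider than $I$, so by the density of effective fields from Lemma~\ref{led} every $y\in I$ will be $O(\tau\delta)$-close to some $\phi_i(x^*)$; since $\phi_i$ is a uniform contraction whose image $\phi_i(I)$ surrounds $\phi_i(x^*)$ with margin of order $|\omega|^2\tau$, a sub-interval $[x_1,x_2]\subseteq I$ of length at most $|\omega|\tau\delta/2$ should end up strictly inside $\phi_i(I)$.

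The key calculation is the sensitivity of $\phi_i(x^*)$ to $R_i$. Define $g(R_i):=\phi_{R_i}(x^*)=(1+\gamma\lambda x^* R_i)/(\beta+\lambda x^* R_i)$. Differentiating yields
\[g'(R_i)=\frac{\lambda x^*(\gamma\beta-1)}{(\beta+\lambda x^* R_i)^2},\]
which is strictly negative since $\beta\gamma<1$ by antiferromagnetism, so $g$ is monotone. At $R_i=x^*$, the fixed-point identity $\lambda(x^*)^2=(1-\beta x^*)/(x^*-\gamma)$ extracted from \eqref{ode} simplifies $\beta+\lambda(x^*)^2$ and a short manipulation collapses the expression to the clean identity
\[|g'(x^*)|=\frac{(1-\beta x^*)(x^*-\gamma)}{x^*(1-\beta\gamma)}=|\omega|.\]
By continuity, for $\tau$ sufficiently small $|g'|\in [|\omega|/2, 2|\omega|]$ on $[x^*-\tau,x^*+\tau]$, so using $g(x^*)=x^*$ the image of $g$ contains $[x^*-|\omega|\tau/2,\ x^*+|\omega|\tau/2]=I$.

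Given $x_1,x_2\in I$ with $|x_1-x_2|\leq |\omega|\tau\delta/2$, set $y:=(x_1+x_2)/2$. By the inclusion above there is $R_i^*\in[x^*-\tau,x^*+\tau]$ with $g(R_i^*)=y$, and Lemma~\ref{led} supplies a gadget whose effective field $R_i$ satisfies $|R_i-R_i^*|\leq \tau\delta$, so $|\phi_i(x^*)-y|=|g(R_i)-g(R_i^*)|\leq 2|\omega|\tau\delta$. On the other hand, the same derivative computation (evaluated now in the first variable at $R=R_i=x^*$) gives $|\phi_i'(x^*)|=|\omega|$, so in the proof of Lemma~\ref{contr} one may take $C_{\min}\geq |\omega|/2$; combining this with monotonicity of $\phi_i$ and $|\phi_i'|\geq C_{\min}$ on $I'\supseteq I$, integration yields
\[\phi_i(I)\supseteq \bigl[\phi_i(x^*)-C_{\min}|\omega|\tau/2,\ \phi_i(x^*)+C_{\min}|\omega|\tau/2\bigr].\]

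Putting everything together, for each $j\in\{1,2\}$,
\[|x_j-\phi_i(x^*)|\leq |x_j-y|+|y-\phi_i(x^*)|\leq |\omega|\tau\delta/4 + 2|\omega|\tau\delta = \tfrac{9}{4}|\omega|\tau\delta,\]
which is at most $C_{\min}|\omega|\tau/2$ whenever $\delta\leq 2C_{\min}/9$; the hypothesis $\delta<|\omega|/100$ yields this with room to spare once $C_{\min}\geq|\omega|/2$, so $x_1,x_2\in\phi_i(I)$. The main obstacle in this plan is the clean identity $|g'(x^*)|=|\omega|$: it is precisely this match between the sensitivity of the knob $R_i\mapsto\phi_i(x^*)$ and the scale $|\omega|$ built into the definitions of $I$ and $I'$ that makes the positioning error $\tau\delta$ from Lemma~\ref{led} fit, together with the radius $|\omega|\tau\delta/4$ of $[x_1,x_2]$, inside the contraction margin $C_{\min}|\omega|\tau/2$; without this exact cancellation the bookkeeping would not close.
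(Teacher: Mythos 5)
Your proposal is correct and follows essentially the same route as the paper: pick the child gadget's effective field so that (to first order) $\phi_i(x^*)$ lands at the midpoint of $\{x_1,x_2\}$, use Lemma~\ref{led} to realize that target up to error $\tau\delta$, and then use the contraction bounds of Lemma~\ref{contr} to show the resulting $\phi_i(I)$ has enough margin to capture both points. The only difference is presentation --- you make the key identity $|g'(x^*)|=|\omega^*|$ and the preimage step explicit, while the paper writes down the target $R_i$ value directly and verifies the two endpoint inequalities $\phi_i(x^*+|\omega|\tau/2)<x_1$ and $x_2<\phi_i(x^*-|\omega|\tau/2)$ via the Taylor bound \eqref{bou1}.
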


Lemmas~\ref{contr} and~\ref{lc} ensure that for any $x\in I$ we can construct a sequence of field gadgets whose effective field approaches $x$. Consider the following process \verb!Build-gadget!$(x,t)$ where $x\in I$ and $t\geq 0$ is an integer. If $t=0$
we return the degenerate tree. If $t\geq 1$ then we let $\phi_i$ be any map such that
$x\in\phi_i(I)$. Let $y = \phi_i^{-1}(x)$ and  $\Tc'=$\verb!Build-gadget!$(y,t-1)$. Return the tree $\Tc$ obtained by merging $\Tc'$ and $\Tc_i$ using the operation of Lemma~\ref{lem1}. 

The point behind the process \verb!Build-gadget!$(x,t)$ is that it allows us to construct, for arbitrary $x\in I$, a field gadget whose effective field is arbitrary close to $x$, with error that decays exponentially fast with $t$ (using the contraction properties of the $\phi_i$'s). This is detailed in the following lemma.
\begin{lemma}\label{yyyqw}
There exists $C>0$ such that for any $x\in I$ and any $t\geq 0$
the effective field $R$ of the field gadget returned by \verb!Build-gadget!$(x,t)$
(for any choice of the $\phi_i$'s) satisfies $|R-x|\leq C C_{\mathrm{max}}^t$.
\end{lemma}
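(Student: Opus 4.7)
The plan is to induct on $t$, using the contraction property provided by Lemma~\ref{contr}. Unrolling the recursion, let $R_t = 1, R_{t-1}, \ldots, R_0 = R$ denote the effective fields arising in the nested recursive calls (so $R_j = \phi_{i_{j+1}}(R_{j+1})$), and let $y_t, y_{t-1}, \ldots, y_0 = x$ be the corresponding ``target'' values, where $y_{j+1} \in I$ is the value such that $y_j = \phi_{i_{j+1}}(y_{j+1})$ (its existence is guaranteed by Lemma~\ref{lc} applied with $x_1 = x_2 = y_j$). Our goal is to bound $|R_0 - y_0|$ by $C \, C_{\mathrm{max}}^t$. The mean value theorem gives $|R_j - y_j| = |\phi_{i_{j+1}}'(\xi_j)| \cdot |R_{j+1} - y_{j+1}|$ for some $\xi_j$ between $R_{j+1}$ and $y_{j+1}$; since $y_{j+1} \in I \subset I'$, Lemma~\ref{contr} yields $|\phi_{i_{j+1}}'(\xi_j)| \leq C_{\mathrm{max}}$ as soon as $R_{j+1} \in I'$ as well, giving the contraction $|R_j - y_j| \leq C_{\mathrm{max}} |R_{j+1} - y_{j+1}|$.

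The key intermediate step is to show that there exists a constant $t_0$, depending only on $\tau$ (and not on the specific sequence of $\phi_{i_j}$'s), such that $R_j \in I'$ for all $j \leq t - t_0$. Equivalently, starting from $R_t = 1$ and iterating $R_{j} = \phi_{i_{j+1}}(R_{j+1})$, the iterates enter $I'$ within $t_0$ steps. To prove this, I would use that the limiting map $\phi_{x^*}$ (obtained by taking $R_i = x^*$) has $x^*$ as an attracting fixed point: a direct calculation using~\eqref{ode} shows $|\phi_{x^*}'(x^*)| < 1$. Since each $\phi_i$ with $R_i \in [x^*-\tau, x^*+\tau]$ is a uniform $O(\tau)$ perturbation of $\phi_{x^*}$ on any bounded domain, a compactness/continuity argument produces a uniform such $t_0$. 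Once $R_j$ enters $I'$, Lemma~\ref{contr} (which ensures $\phi_i(I') \subseteq I'$) guarantees the iterates stay inside $I'$ thereafter.

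With this in hand, iterating the contraction from $j = t - t_0$ down to $j = 0$ gives $|R_0 - y_0| \leq C_{\mathrm{max}}^{t - t_0} \cdot |R_{t-t_0} - y_{t-t_0}| \leq D \cdot C_{\mathrm{max}}^{t-t_0}$, where $D$ is any absolute upper bound on $|R_j - y_j|$ (which exists since both quantities remain in a bounded subset of $(\gamma, 1/\beta] \cup \{1\}$). Setting $C := D \cdot C_{\mathrm{max}}^{-t_0}$ yields $|R_0 - y_0| \leq C \cdot C_{\mathrm{max}}^t$ for $t \geq t_0$. For $t < t_0$, the same bound holds trivially, since then $|R_0 - y_0| \leq D = C \cdot C_{\mathrm{max}}^{t_0} \leq C \cdot C_{\mathrm{max}}^t$, using $C_{\mathrm{max}} < 1$. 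The principal technical obstacle is making the claim about the uniform $t_0$ rigorous; I expect this to follow from a careful perturbation analysis of the dynamics of $\phi_{x^*}$ near its attracting fixed point $x^*$, combined with a coarse uniform bound on $|\phi_i'|$ on the compact region $[\gamma, 1/\beta] \cup \{1\}$ through which the iterates transit.
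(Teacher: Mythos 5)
Your plan — peel off the recursion one step at a time, contract by $C_{\mathrm{max}}$ once both iterates lie in $I'$, and absorb a burn-in of $t_0$ steps into the constant via $C := D\,C_{\mathrm{max}}^{-t_0}$ — is the right skeleton and matches the paper's intent. The paper's actual proof is the single sentence ``follows from~\eqref{ee1} and $x\in f(I)$,'' which implicitly requires applying the bound $\max_{R,R'\in I'}|f(R)-f(R')|\le C'C^t$ to $|f(1)-f(y)|$ with $y\in I$; you are right to observe that this silently presumes $1\in I'$. Since the degenerate tree has effective field $1$ and, for $x^*\neq 1$, the quantity $\phi_i(1)\approx\frac{1+\gamma\lambda x^*}{\beta+\lambda x^*}$ differs from $x^*$ by a fixed nonzero amount while $I'$ has width $O(\tau)$, the starting point is generically well outside $I'$; so the contraction of Lemma~\ref{contr} does not apply to the first several iterates, and you have identified a genuine gap in the paper's one-line argument.

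Where your sketch is thin is exactly where you flag it, but the obstruction is a little worse than you suggest. First, $I'$ has half-width $\frac{2\tau|\omega^*|}{1-|\omega^*|}=O(\tau)$, which is the \emph{same order} as the per-step perturbation $\sup|\phi_i-\phi_{x^*}|=O(\tau)$; a bare ``compactness/continuity'' statement only gives that after $t_0$ steps the iterate is within $O(\tau)$ of $x^*$, which is not automatically inside $I'$ — one must track the constant and show it beats $2|\omega^*|/(1-|\omega^*|)$. Second, your proposed ``coarse uniform bound on $|\phi_i'|$ on the compact region $[\gamma,1/\beta]\cup\{1\}$'' does not help, because $|\phi_{x^*}'|$ is \emph{not} uniformly below $1$ on that region (for many parameters it exceeds $1$ at points the orbit actually visits, even though $|\phi_{x^*}'(x^*)|=|\omega^*|<1$); a naive per-step error accumulation in $R$-coordinates then diverges. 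The clean fix uses that each $\phi_i$ is a M\"obius map: conjugate by the M\"obius transformation sending $x^*\mapsto 0$ and the other (negative) fixed point $x^{**}=-1/(\lambda(x^*)^2)$ of $\phi_{x^*}$ to $\infty$, under which $\phi_{x^*}$ becomes $w\mapsto\omega^*w$ and each $\phi_i$ becomes a uniformly $O(\tau)$-small perturbation of it on any bounded set; the linear recursion $|w_{j-1}|\le|\omega^*||w_j|+O(\tau)$ is then uniformly contracting regardless of the transit region, and a second, refined pass for small $|w|$ shows the limiting $|R_j-x^*|$ is $\approx\frac{\tau|\omega^*|}{1-|\omega^*|}$, comfortably inside $I'$ by the factor-of-$2$ slack built into~\eqref{idef2}. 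With that in place, your burn-in/contraction decomposition closes the argument, so your proposal is essentially correct and, as a bonus, fills a gap that the paper's proof leaves implicit; you should just replace the vague ``compactness/continuity'' step with the M\"obius-conjugation computation sketched above.
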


We will prove Lemma~\ref{yyyqw} in Section~\ref{sec:case2}. For the field gadgets we construct we will always maintain the effective field in the interval $I'$, cf. Lemma~\ref{contr}. Now we show that the magnetization gaps of the field gadgets constructed using our process also stay restricted to an interval $J$. Let
\begin{equation}\label{TDEF}
T = \frac{2+\max|M_i|}{1-C_{\mathrm{max}}} \mbox{ and let  $J$ be the interval $[-T,T]$.}
\end{equation}
\begin{lemma}\label{eq:TDEF}
Suppose $R\in I'$ and $M\in J$ then $\psi_i(R,M)\in J$.
\end{lemma}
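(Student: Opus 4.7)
The plan is to prove Lemma~\ref{eq:TDEF} by a direct estimate using the bounds established in Lemma~\ref{contr}. The key observation is that $\psi_i(R,M) = 1 - \omega(\phi_i(R))(M + M_i - 1)$ is affine in $M$ with slope $-\omega(\phi_i(R))$, and by Lemma~\ref{contr} this slope is bounded in absolute value by $C_{\max} < 1$. Hence $\psi_i(\cdot,\cdot)$ is a contraction in the $M$-coordinate, and the interval $J$ just needs to be chosen large enough to absorb the affine part; the definition $T = (2 + \max|M_i|)/(1 - C_{\max})$ is precisely tailored to this.

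First, I would note that since $R \in I'$ and $R_i \in [x^* - \tau, x^* + \tau]$ (recall the field gadgets $\Tc_i$ come from the family of Lemma~\ref{led}), Lemma~\ref{contr} ensures $\phi_i(R) \in I'$. In particular $\phi_i(R) \in (\gamma,1/\beta)$, so by the last assertion of Lemma~\ref{lem1} we have $0 < \omega(\phi_i(R)) < 1$, and the second inequality of Lemma~\ref{contr} gives $\omega(\phi_i(R)) \leq C_{\max}$.

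Then, for $M \in J$, the triangle inequality yields
\[
|\psi_i(R,M)| \;\leq\; 1 + \omega(\phi_i(R))\,|M + M_i - 1| \;\leq\; 1 + C_{\max}\bigl(|M| + |M_i| + 1\bigr) \;\leq\; 1 + C_{\max}\bigl(T + \max_j |M_j| + 1\bigr).
\]
It remains to verify that the right-hand side is at most $T$. This is equivalent to $1 + C_{\max}(\max_j|M_j| + 1) \leq (1 - C_{\max})T$, i.e.\ $T \geq (1 + C_{\max}(\max_j|M_j| + 1))/(1 - C_{\max})$. Since $C_{\max} < 1$, the numerator satisfies $1 + C_{\max}(\max_j|M_j| + 1) = 1 + C_{\max} + C_{\max}\max_j|M_j| < 2 + \max_j|M_j|$, so the required bound follows immediately from the definition $T = (2 + \max_j|M_j|)/(1 - C_{\max})$. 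Hence $\psi_i(R,M) \in [-T,T] = J$, as desired.

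There is no substantial obstacle here; the argument is a bookkeeping check that the constants in the definition of $T$ are correctly calibrated against the contraction factor $C_{\max}$ from Lemma~\ref{contr}. The only subtlety worth flagging is that one must invoke Lemma~\ref{contr} to place $\phi_i(R)$ inside $I'$ before the bound on $\omega$ can be applied, which in turn relies on the family $\mathcal{L}$ of Lemma~\ref{led} having all effective fields in $[x^* - \tau, x^* + \tau]$.
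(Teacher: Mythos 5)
Your proof is correct and takes essentially the same approach as the paper: apply the triangle inequality, bound $\omega(\phi_i(R))$ by $C_{\max}$ via Lemma~\ref{contr}, and verify that the definition of $T$ makes the resulting bound close up. (You are in fact slightly more careful than the paper's one-line proof, which writes $\omega(R)$ where $\omega(\phi_i(R))$ is meant and leaves implicit the step $\phi_i(R)\in I'$ needed to apply Lemma~\ref{contr}.)
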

\begin{proof}
We have $\psi_i(R,M)  = 1 - \omega(R)( M_i + M - 1)$. Hence  $|\psi_i(R,M)|\leq 1 + C_{\mathrm{max}} ( T + \max |M_i| + 1) \leq T$, where the last inequality follows from~\eqref{TDEF}.
\end{proof}

For any $x\in I$ we are going to construct a sequence of families of pairs of maps
${\cal F}_{x,0},{\cal F}_{x,1},\dots$ as follows. In each pair, the first map is from $\mathbb{R}$ to $\mathbb{R}$,
and the second map is from $\mathbb{R}^2$ to $\mathbb{R}$ (similarly to \eqref{eq:1qazw}). Let ${\cal F}_{x,0}$ contain the pair of
maps $(x \mapsto x,(x,y)\mapsto y)$. To construct ${\cal F}_{x,t+1}$ we take every $\phi_i$
such that that $x\in \phi_i(I)$ and every $(f,g)\in {\cal F}_{\phi_i^{-1}(x),t}$ and place into ${\cal F}_{x,t+1}$
the map  
\begin{equation}\label{eq:remark}
(R,M)\mapsto \left( \phi_i(f(R)), \psi_i(f(R),g(R,M)) \right).
\end{equation}
Every $(f,g)\in {\cal F}_{x,t}$ corresponds to a sequence of $\phi_i$'s with length $t$, which in turn corresponds to a rooted field gadget built using the procedure described just above \eqref{eq:1qazw} for $t$ steps. In this procedure,  we will now view the initial field gadget $\Tc_0$ as an ``input'' (to the procedure). If the input has effective
field $R$ and magnetization gap $M$ then the final field gadget will have effective
field $f(R)$ and magnetization gap $g(R,M)$. In particular, with the right choice of input (i.e., $f^{-1}(x)$), we will obtain a field gadget with effective field $x$. We will usually view $f(R)$ and $g(R,M)$ as the effective field and the magnetization gap induced by the \verb!Build-gadget!$(x,t)$ procedure when at the base step we use $\Tc_0$ instead of the degenerate tree (provided the choice of the $\phi_i$'s in \verb!Build-gadget! matches up with the sequence for the pair $(f,g)$). However, we will not be able to use the exact input needed to obtain the effective field $x$ but rather some approximation of it, and the pair $(f,g)$ will allow us to track the influence of the last $t$ steps when building a rooted field gadget $\Tc'$, which intuitively have larger influence both on the effective field and magnetization gap of $\Tc'$ (since the maps $\phi_i$ are contracting).   In particular, once  we have the value of the magnetization gap and effective field of the ``input'' field gadget, by applying $(f,g)$ we know precisely where the effective field and magnetization gap will end up after applying the sequence of $\phi_i$'s corresponding to $(f,g)$.

We are going to distinguish two possible cases for the families ${\cal F}_{x,t}$. In the first case
we will obtain the gadgets we need immediately using the \verb!Build-gadget! procedure. In the second case we will construct a continuous
function from the families. Then we will argue that the function cannot satisfy a functional
equation and this will yield the gadgets we require.

\begin{lemma}[Case I]\label{case1}
Suppose there exists $x\in I$ such that for some $t_1,t_2$ there exist
$(f_1,g_1)\in {\cal F}_{x,t_1}$ and $(f_2,g_2)\in {\cal F}_{x,t_2}$
such that $g_1(I'\times J)$ and $g_2(I'\times J)$ are disjoint. Let $\hat{M}$ be the distance of
$g_1(I'\times J)$ and $g_2(I'\times J)$. 

Then, there is an algorithm which, on input a rational $r\in (0,1/2)$, outputs in time  $poly(\bit(r))$
a pair of field gadgets ${\cal T}_1$ and ${\cal T}_2$, each of maximum degree 3 and size $O(|\log r|)$, such that 
\[|R_{\Tc_1} - x|, |R_{\Tc_2} - x|\leq r, \mbox{ but } |M_{\Tc_1}-M_{\Tc_2}|\geq\hat{M}.\]
\end{lemma}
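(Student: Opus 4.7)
The plan is to produce each $\Tc_j$ by hanging a completion field gadget at the ``input vertex'' of the partial gadget $\hat{\Tc}_j$ corresponding to $(f_j,g_j)$. By the inductive definition of $\mathcal{F}_{x,t_j}$, each $f_j$ is a composition of $t_j$ contracting maps $\phi_i$, and satisfies $y_j:=f_j^{-1}(x)\in I$ (the relevant preimage lies in the construction interval by design of the recursion). So, choosing a completion $\Tc^*_j$ with effective field $R^*_j$ very close to $y_j$ and magnetization gap $M^*_j\in J$ will give a field $f_j(R^*_j)\approx f_j(y_j)=x$ by Lipschitz continuity, while the pair $(R^*_j,M^*_j)\in I'\times J$ together with the disjointness of $g_1(I'\times J)$ and $g_2(I'\times J)$ will force $|g_1(R^*_1,M^*_1)-g_2(R^*_2,M^*_2)|\geq\hat{M}$.

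Concretely I would proceed in three steps. First, by unrolling the recursive definition of $\mathcal{F}_{x,t_j}$, build the partial field gadgets $\hat{\Tc}_1,\hat{\Tc}_2$ of constant size and max degree $3$, realizing each recursive step as a $k=2$ merge in Lemma~\ref{lem1} between the partial gadget built so far and the appropriate $\Tc_i$ from the family of Lemma~\ref{led}; the ``input vertex'' is the degree-$1$ leaf where the completion will be attached. Second, on input $r$, run \verb!Build-gadget!$(y_j,s)$ for $j=1,2$ with $s=\Theta(|\log r|)$ chosen so that $C\,C_{\mathrm{max}}^s\leq r$: by Lemma~\ref{yyyqw} this produces a completion $\Tc^*_j$ of size $O(s)$ and max degree $3$ whose effective field satisfies $|R^*_j-y_j|\leq C\,C_{\mathrm{max}}^s$, while induction along the recursion using Lemma~\ref{eq:TDEF} (with $J$ set sufficiently wide as in~\eqref{TDEF}) gives $M^*_j\in J$; in particular $(R^*_j,M^*_j)\in I'\times J$ for $s$ large. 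Third, form $\Tc_j$ by identifying the root of $\Tc^*_j$ (degree $1$) with the input vertex of $\hat{\Tc}_j$ (degree $1$), preserving max degree $3$ overall.

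For the analysis, Lemma~\ref{contr} gives $|\phi_i'|\leq C_{\mathrm{max}}<1$ on $I'$, so $f_j$ is Lipschitz on $I'$ with constant at most $C_{\mathrm{max}}^{t_j}\leq 1$; together with $f_j(y_j)=x$ this yields $|R_{\Tc_j}-x|=|f_j(R^*_j)-f_j(y_j)|\leq |R^*_j-y_j|\leq C\,C_{\mathrm{max}}^s\leq r$. For the magnetization gaps, $(R^*_j,M^*_j)\in I'\times J$ places $g_j(R^*_j,M^*_j)$ inside $g_j(I'\times J)$, and the disjointness assumption immediately gives $|M_{\Tc_1}-M_{\Tc_2}|\geq\hat{M}$. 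The total size of each $\Tc_j$ is $O(t_j)+O(s)=O(|\log r|)$, and the runtime is $\mathrm{poly}(\bit(r))$, since \verb!Build-gadget! performs $O(s)$ recursive steps, each requiring the selection of a valid $\phi_i$ via Lemma~\ref{lc}. The main technical hurdle I anticipate is maintaining the invariant $(R^*_j,M^*_j)\in I'\times J$ throughout the \verb!Build-gadget! recursion: contraction of the $\phi_i$'s on $I'$ (Lemma~\ref{contr}) pulls the field into $I$, and Lemma~\ref{eq:TDEF} keeps the magnetization gap in $J$, but some care is needed to absorb the transient initial steps starting from the degenerate tree (field $1$, magnetization gap $0$), which is precisely the reason for the slack factor $(1-C_{\mathrm{max}})^{-1}$ in the definition of $T$ in~\eqref{TDEF}.
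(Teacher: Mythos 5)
Your proposal is correct and follows essentially the same route as the paper: realize $(f_j,g_j)$ as a partial field gadget with an input vertex, run \verb!Build-gadget! for $O(|\log r|)$ further steps to produce a completion whose effective field lands within $r$ of the target (using the contraction from Lemma~\ref{contr} / Lemma~\ref{yyyqw}), hang it at the input vertex, and read off the magnetization-gap separation from the disjointness of $g_1(I'\times J)$ and $g_2(I'\times J)$ once the completion's pair $(R^*,M^*)$ is shown to lie in $I'\times J$ (Lemma~\ref{eq:TDEF}). One tiny slip: the degenerate tree has magnetization gap $1$ (the root is counted), not $0$, though this does not affect the argument since $1\in J$; you also leave implicit the brute-force constant-time search over the finitely many $(f,g)$ pairs that the paper carries out to locate $x,t_1,t_2$.
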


\begin{proof}
Note that for a fixed $t$ the union $\bigcup_x {\cal F}_{x,t}$ contains finitely many
functions (because there are only finitely many choices for $\phi_i$ in each
step of the construction). If $t_1,t_2$ and $x$ assumed by the lemma exist
then we can find them by examining finitely many functions. For each $(f_1,g_1)$
and $(f_2,g_2)$ we check whether $f_1(I)\cap f_2(I)\neq\emptyset$ and
$g_1(I'\times J)\cap g_2 (I'\times J) = \emptyset$; if we find we such a pair
we take $x\in f_1(I)\cap f_2(I)$. Note the running time for this process
is a constant depending on $\beta,\gamma,\lambda$. (Note that $f(I)$ and
$g(I'\times J)$ is always an interval that we can find inductively.)

To construct ${\cal T}_i$, $i\in\{1,2\}$ we use the \verb!Build-gadget! procedure following
the choices of $(f_i,g_i)$ in the first $t_i$ steps, cf. the discussion below \eqref{eq:remark}. Using Lemma~\ref{yyyqw}, we run the 
procedure for $O(|\log r|)$ steps achieving $|R_i - x|\leq r$, $i\in\{1,2\}$. 
We have $M_i\in g_i(I'\times J)$, $i\in\{1,2\}$ and hence $|M_1-M_2|\geq\hat{M}$.
\end{proof}

\begin{lemma}[Case II]\label{case2}
Suppose that for every $x$, every $t_1,t_2$ and every two functions
$(f_1,g_1)\in {\cal F}_{x,t_1}$ and $(f_2,g_2)\in {\cal F}_{x,t_2}$
we have that $f_2(I'\times J)$ and $g_2(I'\times J)$ intersect. Then there exists a continuous
function $F:I\rightarrow J$ such that for every $x\in I$, every $\eps>0$ there exists $t_0$ such that for every $t\geq t_0$ and
every $(f,g)\in {\cal F}_{x,t}$ and every $R\in I'$ and every $M\in J$ we have
\begin{equation}\label{close}
|g(R,M) - F(x)|\leq\eps.
\end{equation}
\end{lemma}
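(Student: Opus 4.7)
The plan is to establish three facts in sequence: (i) for every $(f,g)\in{\cal F}_{x,t}$, the image $g(I'\times J)$ has diameter tending to $0$ as $t\to\infty$, with a uniform (in $x$ and in the choice of element) quantitative bound; (ii) under the intersection hypothesis these pairwise-intersecting shrinking sets cluster around a single limit point $F(x)$ which fulfils the required $\eps$-approximation; (iii) the resulting $F:I\to J$ is continuous via a level-by-level ``parallel construction'' at nearby points.

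\emph{Step 1 (diameter shrinking).} Unfolding the recursion in~\eqref{eq:remark}, every $(f_t,g_t)\in{\cal F}_{x,t}$ is determined by a sequence of indices $i_1,\dots,i_t$ with $f_t=\phi_{i_1}\circ\cdots\circ\phi_{i_t}$, while $g_t$ is obtained from the corresponding level-$(t-1)$ function by the affine-in-$M$ map $\psi_{i_1}$ applied at the inner argument. Using Lemma~\ref{contr}, we have $|\phi_{i_s}'|\le C_{\max}$ and $|\omega(\phi_{i_s}(R))|\le C_{\max}<1$ throughout $I'$, so a direct induction yields $|\partial_R f_t|\le C_{\max}^t$ and (since $g_t$ is affine in $M$ with slope $\prod_{s=1}^t(-\omega(f_s(R)))$) $|\partial_M g_t|\le C_{\max}^t$. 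Expanding $\partial_R\psi_i(R,M)=-\omega'(\phi_i(R))\phi_i'(R)(M_i+M-1)-\omega(\phi_i(R))$ and feeding this into the recursion gives $|\partial_R g_t|\le t\,K\,C_{\max}^{t-1}$ for a constant $K$ depending only on $\sup_{I'}|\omega'|$, $\max_i|M_i|$, and $\sup_J|M|$. Combining, $\mathrm{diam}(g_t(I'\times J))\le D_t$, where $D_t\to 0$ as $t\to\infty$.

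\emph{Step 2 (existence of $F(x)$).} Fix $x\in I$ and at each level $t$ pick any $(f,g)\in{\cal F}_{x,t}$, which is non-empty since Lemma~\ref{lc} always supplies a covering $\phi_i$ (and iteratively $\phi_{i_s}^{-1}(x_s)\in I$). Let $S_t=g(I'\times J)$ and let $c_t$ be its midpoint. The hypothesis gives $S_t\cap S_{t_0}\ne\emptyset$ for all $t,t_0$, and combined with $\mathrm{diam}(S_t)\le D_t$ this shows $|c_t-c_{t_0}|\le 2D_{t_0}$ for $t\ge t_0$; hence $(c_t)$ is Cauchy and we set $F(x):=\lim_t c_t$ (which lies in $J$ since each $c_t\in J$). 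For any $(f,g)\in{\cal F}_{x,t}$ with $t\ge t_0$, the image $g(I'\times J)$ intersects $S_{t_0}$ and has diameter $\le D_t$, so it lies in a ball of radius $3D_{t_0}$ around $F(x)$; choosing $t_0$ so that $3D_{t_0}<\eps$ yields the required approximation.

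\emph{Step 3 (continuity).} Given $\eps>0$, fix $t_0$ so that $3D_{t_0}<\eps/2$. The claim is that whenever $|x-x'|$ is small enough (depending only on $t_0$), there is a single pair $(f,g)$ lying in both ${\cal F}_{x,t_0}$ and ${\cal F}_{x',t_0}$; since $g(I'\times J)$ does not depend on $x$ versus $x'$, it then lies within $\eps/2$ of both $F(x)$ and $F(x')$, forcing $|F(x)-F(x')|\le\eps$. To construct such a common pair, iterate for $s=0,\dots,t_0-1$ starting from $x_0=x$, $x_0'=x'$: provided $|x_s-x_s'|\le|\omega|\tau\delta/2$ (where $\delta$ is the constant from Lemma~\ref{lc}), Lemma~\ref{lc} supplies an index $i_s$ with $x_s,x_s'\in\phi_{i_s}(I)$; set $x_{s+1}=\phi_{i_s}^{-1}(x_s)$ and $x_{s+1}'=\phi_{i_s}^{-1}(x_s')$. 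By Lemma~\ref{contr}, $|x_{s+1}-x_{s+1}'|\le|x_s-x_s'|/C_{\min}$, so the process goes through for $t_0$ steps as long as $|x-x'|\le C_{\min}^{t_0}|\omega|\tau\delta/2$, which is the desired threshold. The main technical subtlety is that the allowed separation $|x-x'|$ shrinks exponentially in $t_0$; however, since $t_0$ is fixed first (depending only on $\eps$) and only afterwards is the closeness of $x,x'$ required, this still delivers (uniform) continuity of $F$ on $I$.
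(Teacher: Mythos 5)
Your proof is correct and reaches the stated conclusion, but the continuity argument (Step~3) takes a genuinely different route from the paper's. The paper first derives the functional identity $F(x)=1-\omega(x)\big(F(\phi_i^{-1}(x))+M_i-1\big)$ from the limit construction, and then uses that identity together with a scale-induction on $\lfloor\log_{C_{\max}}|x_1-x_2|\rfloor$ to obtain the quantitative H\"older bound $|F(x_1)-F(x_2)|\le C|x_1-x_2|^c$. You instead argue directly at the level of the families ${\cal F}_{x,t}$: if $|x-x'|\le C_{\min}^{t_0}|\omega|\tau\delta/2$ then, iterating Lemmas~\ref{lc} and~\ref{contr}, a common index sequence $i_1,\dots,i_{t_0}$ is valid for both $x$ and $x'$, and since the resulting pair $(f,g)$ is determined by the index sequence alone (not by the base point), it lies in ${\cal F}_{x,t_0}\cap{\cal F}_{x',t_0}$; its shrunk image $g(I'\times J)$ then pins $F(x)$ and $F(x')$ together. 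This ``parallel construction'' argument is more elementary and yields uniform continuity, which is all the lemma asserts and is also all that Lemma~\ref{viola} actually uses downstream; what it does not recover is the explicit H\"older exponent of~\eqref{equ2}. Your Steps~1 and~2 are essentially re-proofs of the paper's Lemma~\ref{shrink} (via derivative bounds rather than the paper's differencing) and of the Helly-plus-shrinking argument (via midpoints of a Cauchy sequence rather than invoking Helly's theorem, which in dimension one amounts to the same thing). Two cosmetic slips to tidy if you write this up formally: in Step~1 the displayed expression for $\partial_R\psi_i$ mixes the partial and total derivative (the $-\omega(\phi_i(R))$ term should multiply $\partial_R g$ in the chain-rule expansion), and in Step~2 the constant ``$3D_{t_0}$'' should be replaced by something like $\sup_{t\ge t_0} 2D_t$ since the diameters $D_t$ are not shown to be monotone --- neither affects the substance since $D_t\to 0$.
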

The proof of Lemma~\ref{case2} is given in Section~\ref{sec:case2}. 

\begin{lemma}\label{viola}
Suppose CASE II happens, that is, the assumption of Lemma~\ref{case2} is satisfied; let  $F$ be the continuous function 
guaranteed by Lemma~\ref{case2}. Suppose that there exist $x_1,x_2\in I$ such that the following equation is violated.
\begin{equation}\label{fun2}
F\left(\frac{1+\gamma\lambda x_1 x_2}{\beta+\lambda x_1x_2}\right) = 1  - \frac{(1-\beta \gamma)\lambda x_1 x_2}{(1+\gamma\lambda x_1 x_2)(\beta+\lambda x_1 x_2)}(F(x_1) + F(x_2) - 1)
\end{equation}
Then, for any integer $k\geq 1$, we can find in constant time (where the constant depends on $\beta,\gamma,\lambda,k$) rational numbers $x_1,x_2\in I$ such that \eqref{fun2} is violated and, moreover rationals $\hat{R}_1,\hdots,\hat{R}_k\in I$ such that the following holds.

There is an algorithm which on input a rational $r\in (0,1/2)$  and any $i\in [k]$ outputs in time $poly(\bit(r))$
a pair of field gadgets ${\cal T}_1$ and ${\cal T}_2$, each of maximum degree 3 and size $O(|\log r|)$, such that $|R_{\Tc_1} - \hat{R}_i|\leq r$,
$|R_{\Tc_2} - \hat{R}_i|\leq r$ and $|M_{\Tc_1}-M_{\Tc_2}|\geq\hat{M}$. 
\end{lemma}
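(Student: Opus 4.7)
The plan is to interpret equation \eqref{fun2} as the consistency relation that $F$ would satisfy under the two-tree merge of Lemma~\ref{lem1}: hanging two sub-gadgets with effective fields $x_1, x_2$ and magnetization gaps $F(x_1), F(x_2)$ off a common vertex $u$ adjacent to a new root $\rho$ yields a gadget with effective field $R^* := \tfrac{1+\gamma\lambda x_1 x_2}{\beta + \lambda x_1 x_2}$ and magnetization gap $m^* := 1 - \omega(R^*)(F(x_1) + F(x_2) - 1)$. Violation of \eqref{fun2} thus means $m^* \neq F(R^*)$. In parallel, the \texttt{Build-gadget} procedure together with Lemma~\ref{case2} produces a gadget whose effective field tends to $R^*$ and whose magnetization gap tends to $F(R^*)$. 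I take the latter as $\Tc_1$ and the former as $\Tc_2$, obtaining two constructions with essentially the same effective field but magnetization gaps separated by a fixed positive amount.

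To extract rational witnesses, I use that $F$ is continuous on $I$ (uniform limit of continuous functions, by Lemma~\ref{case2}), so both sides of \eqref{fun2} are continuous in $(x_1, x_2)$. The violation set is therefore open, and by shrinking it slightly I obtain a neighborhood $U \subseteq I \times I$ on which the gap in \eqref{fun2} is uniformly at least some $2\hat{M} > 0$ and $\phi_\star(U) \subseteq I$, where $\phi_\star(x_1, x_2) := \tfrac{1+\gamma\lambda x_1 x_2}{\beta + \lambda x_1 x_2}$. A direct computation shows $\partial_{x_1}\phi_\star = -\lambda x_2 (1-\beta\gamma)/(\beta + \lambda x_1 x_2)^2 \neq 0$ for antiferromagnetic parameters, so $\phi_\star$ is non-constant on $U$ and $\phi_\star(U)$ contains an open subinterval of $I$. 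I pick $k$ rational pairs $(x_1^{(i)}, x_2^{(i)}) \in U$ for which $\phi_\star(x_1^{(i)}, x_2^{(i)})$ takes $k$ distinct values and set $\hat{R}_i := \phi_\star(x_1^{(i)}, x_2^{(i)})$. These are located by scanning rationals on a fine grid in $I \times I$, approximating every required value of $F$ to the needed precision by evaluating any element of $\mathcal{F}_{y,t}$ for sufficiently large $t$ (whose output is uniformly close to $F(y)$ by Lemma~\ref{case2}); this takes constant time depending only on $\beta, \gamma, \lambda$ and $k$.

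On inputs $r$ and $i \in [k]$, the algorithm outputs two gadgets built with $t = \Theta(|\log r|)$ merging steps each: $\Tc_1$ is \texttt{Build-gadget}$(\hat{R}_i, t)$, and $\Tc_2$ is obtained by merging \texttt{Build-gadget}$(x_1^{(i)}, t)$ with \texttt{Build-gadget}$(x_2^{(i)}, t)$ via the two-child instance of Lemma~\ref{lem1}. Each gadget has maximum degree $3$ (every merge step creates a single degree-$3$ vertex $u$ and leaves all other vertices intact) and size $O(|\log r|)$. Lemma~\ref{yyyqw} gives $|R_{\Tc_1} - \hat{R}_i| \leq r$, and combined with Lipschitz continuity of $\phi_\star$ on $I \times I$ also gives $|R_{\Tc_2} - \hat{R}_i| \leq r$. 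For the magnetization gaps, Lemma~\ref{case2} yields $|M_{\Tc_1} - F(\hat{R}_i)| \leq \epsilon$, while the explicit $\psi$-formula from Lemma~\ref{lem1} together with the same uniform convergence gives $|M_{\Tc_2} - m^{*(i)}| \leq \epsilon$, for $\epsilon$ arbitrarily small at large $t$. Picking $\epsilon < \hat{M}/2$ and invoking $|m^{*(i)} - F(\hat{R}_i)| \geq 2\hat{M}$ yields $|M_{\Tc_1} - M_{\Tc_2}| \geq \hat{M}$.

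The main obstacle is that $F$ is defined only implicitly as a limit: every quantitative step — certifying a rational violation, bounding its size uniformly on a neighborhood, and ensuring that the two gadgets' magnetization gaps lie within $\epsilon$ of their nominal targets $F(\hat{R}_i)$ and $m^{*(i)}$ — ultimately rides on the uniform convergence rate from Lemma~\ref{case2}. The redeeming feature is that this convergence is uniform in $(R, M) \in I' \times J$, so the required tolerances all decay as $O(C_{\max}^t)$ and are polynomially controlled by $|\log r|$.
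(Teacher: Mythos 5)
Your proposal matches the paper's proof in essentials: extend the single violation of \eqref{fun2} to an open set of pairs on which the gap between the two sides of \eqref{fun2} stays bounded below by a fixed constant (using the H\"older continuity of $F$ from Lemma~\ref{case2}), locate $k$ rational pairs with distinct images under the merge map $\phi_\star(x_1,x_2)=\tfrac{1+\gamma\lambda x_1 x_2}{\beta+\lambda x_1 x_2}$, certify each by approximating $F$ via finitely many steps of \verb!Build-gadget!, and then, on input $r$, output the \verb!Build-gadget! approximation of $y_3=\phi_\star(y_1,y_2)$ as one gadget and the Lemma~\ref{lem1} merge of the \verb!Build-gadget! approximations of $y_1,y_2$ as the other (you have the two labels $\Tc_1,\Tc_2$ swapped relative to the paper, which is immaterial). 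The one stylistic difference is in how the distinctness of $\hat{R}_1,\dots,\hat{R}_k$ is guaranteed: the paper perturbs a finite net so that the products $y_1 y_2$ are pairwise distinct across pairs, whereas you argue from $\partial_{x_1}\phi_\star\neq 0$ that $\phi_\star$ is non-constant and its image over the violation neighborhood contains an interval, from which $k$ rational pairs with distinct images are selected. Both routes are valid and reach the same conclusion with comparable effort, so I would call this essentially the same proof.
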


\begin{proof}[Proof of Lemma~\ref{viola}]
Let $k\geq 1$ be an arbitrary integer and fix $x_1,x_2\in I$ that violate~\eqref{fun2}. Let $\delta>0$ be the absolute value of the difference between the two sides. From the continuity of $F$, more precisely 
equation~\eqref{equ2}, there exists $\eps>0$ such that for any $y_1,y_2\in I$ 
with $|y_1-x_1|\leq\eps$ and $|y_2-x_2|\leq\eps$ we have that $y_1,y_2$ 
violate~\eqref{fun2} with difference at least $\delta/2$ between the two sides.

Let $\mathcal{C}$ be a finite set of pairs $(y_1,y_2)$ which form an $\tfrac{\eps}{20k}$-net for $I\times I$; by perturbing slightly the set of points in $\mathcal{C}$ we can obtain an $\tfrac{\eps}{10k}$-net for $I\times I$, say $\mathcal{C}'$,  such that for any two pairs $(y_1,y_2)$ and $(y_1',y_2')$ it holds that $y_1y_2\neq y_1'y_2'$. Now, to check whether a pair $(y_1,y_2)$ in $\mathcal{C}'$ violates~\eqref{close}, we run 
\verb!Build-gadget! for $y_1,y_2$ and $y_3:=\frac{1+\gamma\lambda y_1 y_2}{\beta+\lambda y_1y_2}$ 
with the value of $t$ given by Lemma~\ref{case2} to achieve bound $\delta/100$
on the right-hand side of~\eqref{close}. Then, we will find at least $k$ different pairs $(y_{1,j},y_{2,j})$, $j=1,\hdots, k$ that violate~\eqref{fun2}. For $j\in [k]$, let  $\hat{R}_j:=y_{3,j}=\frac{1+\gamma\lambda y_{1,j} y_{2,j}}{\beta+\lambda y_{1,j}y_{2,j}}$ and note that by the construction of $\mathcal{C}'$, the $\hat{R}_j$'s are pairwise distinct. 

Now, on input $j\in [k]$, we use the \verb!Build-gadget! procedure to construct $\hat{\cal T}_i$ for $y_{i,j}$, $i\in\{1,2,3\}$ 
for $t=O(|\log r |)$ steps, using  Lemma~\ref{yyyqw},. The tree ${\cal T}_1$ is obtained by merging
$\hat{\cal T}_1$ and $\hat{\cal T}_2$ and the tree ${\cal T}_2$ is $\hat{\cal T}_3$. 
\end{proof}

Now assume that equation~\eqref{fun2} is satisfied for all $x_1,x_2\in I$. We are
going to derive a contradiction, thereby showing that ~\eqref{fun2} must be violated.
We will do this in two steps. First we use a special case of equation~\eqref{fun2}
to obtain a functional equation that constrains the possible solutions of $F$.
Second we show that none of these solutions satisfies~\eqref{fun2}.

Suppose $x_1,x_2,x_3\in I$ are such that $x_1 x_2 = x_3 x^*$. Plugging into~\eqref{fun2}
we obtain that $F$ has to satisfy the following equation
\begin{equation}\label{fun}
F(x_1) + F(x_2) = F(x_3) + F(x^*).
\end{equation}

\begin{lemma}\label{lem:3deecer}
Suppose $F$ is a continuous function on $I$. Suppose that for $x_1,x_2,x_3\in I$ such that
$x_1 x_2 = x_3 x^*$ we have~\eqref{fun}. Then there exists $c$ such that for all $x\in I$
we have
\begin{equation}\label{caus}
F(x) = c\log(x/x^*) + F(x^*).
\end{equation}
\end{lemma}

\begin{proof}
We will use the following parametrization to turn~\eqref{fun} into
Cauchy's functional equation. Let $x_1 = x^*\exp(y_1)$, $x_2 =2x^*\exp(y_2)$,
and $x_3=x^*\exp(y_3)$. The condition $x_1 x_2 = x^* x_3$ is equivalent
to $y_1+y_2=y_3$. Let
\begin{equation}\label{gdef}
G(y) = F(x^*\exp y) - F(x^*).
\end{equation}
 Note that $G$
is defined on the interval $[\log(I_L/x^*),\log(I_R/x^*)]$ that contains
$0$ (since $I_L<x^*<I_R$). Equation~\eqref{fun}
becomes
\begin{equation}\label{fun3}
G(y_1)+G(y_2) = G(y_1+y_2).
\end{equation}
From continuity of $F$ we have continuity of $G$. Since~\eqref{fun3}
is Cauchy's functional equation on an interval containing zero the
only continuous solutions are
$G(y)=cy$ for some constant $c$. Plugging in~\eqref{gdef} we obtain~\eqref{caus}.
\end{proof}

Finally we show that~\eqref{fun2} has to be violated.

\begin{lemma}\label{final}
A solution of the form~\eqref{caus} cannot satisfy~\eqref{fun2}.
\end{lemma}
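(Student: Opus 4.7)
\textbf{Plan for Lemma~\ref{final}.} My strategy is to substitute $F(x) = c\log(x/x^*) + F^*$ (writing $F^* := F(x^*)$) into \eqref{fun2}, reduce it to a one-variable identity via the product $x_1 x_2$, and differentiate once to derive a contradiction for every choice of $c$.

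The first step is a clean algebraic observation: the coefficient $\frac{(1-\beta\gamma)\lambda x_1 x_2}{(1+\gamma\lambda x_1 x_2)(\beta+\lambda x_1 x_2)}$ appearing in \eqref{fun2} is exactly $\omega(x_3)$, where $x_3 := \frac{1+\gamma\lambda x_1 x_2}{\beta+\lambda x_1 x_2}$. I would verify this by expanding $\beta x_3 + \gamma/x_3$ over a common denominator and using $(1+\beta\gamma)^2 - 4\beta\gamma = (1-\beta\gamma)^2$. With this, \eqref{fun2} takes the clean form $F(x_3) = 1 - \omega(x_3)\bigl(F(x_1) + F(x_2) - 1\bigr)$. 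Since the right-hand side depends on $x_1,x_2$ only through their product, I set $v := \log\bigl(x_1 x_2/(x^*)^2\bigr)$ and $u := \lambda(x^*)^2 e^v$, so that $x_3(v) = (1+\gamma u)/(\beta+u)$; define $h(v) := \log(x_3(v)/x^*)$ and $k(v) := \omega(x_3(v))$. Specializing to $v = 0$ (so $x_3 = x^*$ by \eqref{ode}) forces the consistency relation $F^* = \tfrac{1+\omega^*}{1+2\omega^*}$, and plugging this back the identity collapses to
\[ c(1+2\omega^*)\bigl(h(v) + v\,k(v)\bigr) = \omega^* - k(v), \qquad \text{for all $v$ in an interval around $0$.} \]

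The crucial second observation is that $h'(v) = -k(v)$: a one-line computation gives $\tfrac{d}{dv}\log x_3(v) = \tfrac{\gamma u}{1+\gamma u} - \tfrac{u}{\beta+u} = -\tfrac{u(1-\beta\gamma)}{(1+\gamma u)(\beta+u)} = -\omega(x_3)$. Differentiating the displayed identity in $v$ and invoking $h' = -k$, the $k(v)$ contributions cancel, leaving
\[ k'(v)\bigl[c v(1+2\omega^*) + 1\bigr] = 0. \]
Now $x_3(v)$ is non-constant on the interval (since $1-\beta\gamma \neq 0$) and $\omega$ is non-constant on the range $(\gamma, 1/\beta)$ of $x_3$, so $k$ is a non-constant real-analytic function of $v$; hence $k'$ vanishes only on a discrete set. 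Consequently $c v(1+2\omega^*) + 1$ must vanish identically on the interval, which is impossible because $1+2\omega^* > 0$. This rules out \eqref{caus} as a solution of \eqref{fun2} and completes the proof.

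I expect the main obstacle to be spotting the two algebraic simplifications --- identifying the coefficient in \eqref{fun2} as $\omega(x_3)$ and the derivative identity $h' = -k$. Once both are in hand, the cancellation when differentiating is automatic and the contradiction is formal; without them, the direct substitution looks like a messy two-variable functional identity and the path to a contradiction is unclear.
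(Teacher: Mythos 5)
Your proposal is correct, and at heart it takes the same route as the paper: substitute the form~\eqref{caus} into~\eqref{fun2}, differentiate, observe a cancellation, and conclude that the remaining factor forces a contradiction. The paper specializes $x_1 = x^*$, $x_2 = y$ and differentiates in $y$; you instead observe that both sides of~\eqref{fun2} depend on $(x_1,x_2)$ only through $v = \log(x_1 x_2/(x^*)^2)$, reduce to a one-variable identity, and differentiate in $v$. These are equivalent (the paper's $y$ plays the role of your $v$ after a logarithmic change of variables). Your two preparatory observations --- that the coefficient in~\eqref{fun2} is exactly $\omega(x_3)$, and that $\tfrac{d}{dv}\log x_3(v) = -\omega(x_3(v))$ --- make the cancellation on differentiation transparent; the paper uses the same cancellation (in the form $x_3'/x_3 = -\omega(x_3)/y$) but only implicitly, which is why its displayed differentiation looks messier and, incidentally, carries a sign slip in the intermediate expression (harmless, since it is squared away in the final $0 = (\cdots)$ form). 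You also bake in the consistency relation $F^* = \tfrac{1+\omega^*}{1+2\omega^*}$ at the outset, whereas the paper derives $F(x^*) = \tfrac{1}{2}$ and $c=0$ from the differentiated equation and then contradicts it against the undifferentiated one; both routes work because $\tfrac{1+\omega^*}{1+2\omega^*} \neq \tfrac12$ when $\omega^* \in (0,1)$.

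Two small remarks on your write-up. First, you need $k = \omega \circ x_3$ non-constant; this follows since $x_3$ is a diffeomorphism onto its image and $\omega$ is a non-constant rational function whenever $(\beta,\gamma)\neq(0,0)$, which is guaranteed by the antiferromagnetic assumption --- worth a one-line justification. Second, the closing sentence ``which is impossible because $1+2\omega^*>0$'' gives the wrong reason: the affine function $cv(1+2\omega^*)+1$ cannot vanish identically on an interval simply because its value at $v=0$ is $1 \neq 0$ (equivalently, an affine function vanishing identically must have both its slope and constant term zero). The positivity of $1+2\omega^*$ is not what blocks it.
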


\begin{proof}
We will plug-in~\eqref{caus} into~\eqref{fun2} with $x_1=x^*$ and $x_2=y$. We obtain
\begin{equation}\label{zzz2}
c \log \Big(\frac{1+\gamma\lambda x^* y}{\beta+\lambda x^* y}\Big) - c\log x^* + F(x^*) =
1- \frac{(1-\beta \gamma)\lambda x^* y}{(1+\gamma\lambda x^* y)(\beta+\lambda x^* y)}( c\log(y) - c\log(x^*)- 1 +  2 F(x^*)).
\end{equation}
Differentiating (w.r.t. $y$) we obtain
\begin{align*}
- c \frac{\lambda(1-\beta \gamma) x^*}{(\beta +\lambda x^* y)(1+\gamma\lambda x^* y)}& =
\left(-\frac{(1-\beta \gamma)\lambda x^*(-\beta+\gamma \lambda^2 (x^*y)^2) }{(1+\gamma\lambda x^* y)^2(\beta+\lambda x^* y)^2} \right)( c\log(y) - c\log(x^*)- 1 +  2 F(x^*))\\&\hskip 2cm
-\frac{(1-\beta \gamma)\lambda x^* y}{(1+\gamma\lambda x^* y)(\beta+\lambda x^* y)}\left( c\frac{1}{y} \right),
\end{align*}
which simplifies to
\begin{equation}\label{zzz}
0 =\frac{(1-\beta \gamma)\lambda x^*(-\beta+\gamma \lambda^2 (x^*y)^2) }{(1+\gamma\lambda x^* y)^2(\beta+\lambda x^* y)^2}( c\log(y) - c\log(x^*)- 1 +  2 F(x^*))
\end{equation}
The only solution for~\eqref{zzz} has to have $c=0$ and $F(x^*)=1/2$. Plugging $c=0$ and $F(x^*)=1/2$ into~\eqref{zzz2} we obtain 1/2 = 1 + 0, a contradiction.
\end{proof}

Finally, we combine the various pieces from the previous subsection to prove Theorem~\ref{thm:const1}. In fact, we will prove a slight strengthening of Theorem~\ref{thm:const1}, given below, which will be used in our reduction, analogously to Lemma~\ref{lem:vanilla}.
\begin{theorem}\label{thm:occgadget}
Let $(\beta,\gamma,\lambda)$ be antiferromagnetic with $(\beta,\gamma,\lambda)\neq (\beta,\beta,1)$. For every integer $k\geq 1$, there exist constants $\hat{M},\Xi>0$ and $k$ distinct numbers $\hat{R}_1,\hdots, \hat{R}_k>0$ such  that the following holds. There is an algorithm, which, on input $i\in[k]$ and a rational $r\in(0,1/2)$, outputs in time  $poly(\bit(r))$ a pair of field gadgets  $\mathcal{T}_1,\mathcal{T}_2$, each of maximum degree $3$ and size $O(|\log r|)$, such that 
\[|R_{\mathcal{T}_1} - \hat{R}_i|,\, |R_{\mathcal{T}_2} - \hat{R}_i|\leq r,\mbox{ but } |M_{\mathcal{T}_1}-M_{\mathcal{T}_2}|\geq \hat{M}.\]
Moreover, the magnetization gaps  $M_{\Tc_1}, M_{\Tc_2}$ are bounded in absolute value by the constant $\Xi$.
\end{theorem}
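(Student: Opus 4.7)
The plan is to upgrade Theorem~\ref{thm:const1} (which yields a single target effective field) to the $k$-target version stated here, by revisiting the dichotomy of Cases~I and~II developed in Lemmas~\ref{case1} and~\ref{viola}.

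In Case~II, Lemma~\ref{viola} already explicitly delivers $k$ pairwise distinct rational targets $\hat{R}_1, \ldots, \hat{R}_k$ together with an algorithm that, on input $i\in[k]$ and $r>0$, outputs in time $poly(\bit(r))$ a pair of gadgets with effective fields within $r$ of $\hat{R}_i$ and magnetization gaps differing by at least the constant $\hat{M}$. The usual Cases~I/II dichotomy, combined with the Cauchy-equation chain culminating in Lemma~\ref{final}, guarantees that Lemma~\ref{viola} is applicable whenever Case~II holds, since~\eqref{fun2} must then be violated at some $x_1,x_2 \in I$.

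In Case~I, Lemma~\ref{case1} produces gadgets whose effective fields approach any single $x \in f_1(I) \cap f_2(I)$, where $(f_1, g_1) \in {\cal F}_{x, t_1}$ and $(f_2, g_2) \in {\cal F}_{x, t_2}$ witness the disjointness $g_1(I' \times J) \cap g_2(I' \times J) = \emptyset$. The key observation is that neither the pair of maps $(f_j, g_j)$ (determined by a specific sequence of $\phi_i$'s), nor the images $g_j(I' \times J)$, depends on $x$; in particular, the magnetization-gap separation $\hat{M}$ is uniform. Hence the \emph{set of $x$} for which this same witnessing pair applies is the entire intersection $f_1(I) \cap f_2(I)$, which is an interval of positive length (the uniform contraction estimate of Lemma~\ref{contr} together with the density of the $\phi_i$'s guaranteed by Lemma~\ref{led} produce ample non-trivial intersections among the finitely many pairs under consideration). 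We pick $k$ pairwise distinct rationals $\hat{R}_1, \ldots, \hat{R}_k$ inside this interval and run \verb!Build-gadget! of Lemma~\ref{case1} independently at each $\hat{R}_i$, producing the required pair of gadgets in time $poly(\bit(r))$ via Lemma~\ref{yyyqw}.

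The uniform bound $\Xi$ on the magnetization gaps is immediate: Lemma~\ref{eq:TDEF} keeps the gap of every \verb!Build-gadget! output inside $J = [-T, T]$, and the one-step merges performed at the end (as in Lemma~\ref{viola}) inflate this only by a constant factor via the recursion in Lemma~\ref{lem1}. The main obstacle is the Case~I analysis: one must ensure that the finitely many pairs $(f_1, g_1), (f_2, g_2)$ examined by Lemma~\ref{case1} collectively offer an intersection of positive length in which $k$ distinct rationals can be chosen. This is handled by the fact that $f_1,f_2$ are non-constant continuous maps of an interval (so their images are non-trivial sub-intervals of $I$), together with the openness argument that the degenerate situation where $f_1(I)\cap f_2(I)$ collapses to a boundary point can be circumvented by prepending an additional $\phi_i$-layer, exploiting the density provided by Lemma~\ref{led} to slide the intersection into the interior of $I$.
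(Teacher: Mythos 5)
Your proposal is correct and mirrors the paper's argument closely: Case~II invokes Lemma~\ref{viola} directly, Case~I exploits an openness/continuity argument (the paper phrases it as finding nearby witnessing pairs in ${\cal F}_{y,t_j}$ for $y$ near $x$, while you equivalently note the fixed witnessing pair is valid on the whole interval $f_1(I)\cap f_2(I)$), and the bound $\Xi$ comes from Lemma~\ref{eq:TDEF}. The only soft spot you share with the paper's own brief treatment is the handling of the degenerate possibility that the witnessing set reduces to a boundary point; your ``prepending a $\phi_i$-layer'' remark is no more rigorous than the paper's bare appeal to continuity, so this is a match in approach rather than a gap you introduced.
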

\begin{proof}
There are two cases to consider: either Lemma~\ref{case1} or Lemma~\ref{case2}. In the latter case, by Lemmas~\ref{lem:3deecer} and~\ref{final}, we obtain that \eqref{fun2} is violated, and hence Lemma~\ref{viola} yields the desired algorithm. In the former case, we are also done, modulo that Lemma~\ref{case1} only guarantees the existence of a single $\hat{R}$, namely the value  $x$. Let $t_1,t_2$ be such that $(f_1,g_1)\in {\cal F}_{x,t_1}$ and $(f_2,g_2)\in {\cal F}_{x,t_2}$ satisfy $g_1(I'\times J)\cap g_2(I'\times J)=\emptyset$. The functions in $\mathcal{F}_{x,t}$ are continuous and defined on a closed interval, so we have that for sufficiently small $\epsilon>0$, for all $y$ such that $|y-x|\leq \epsilon$, we can find $(\tilde{f}_1,\tilde{g}_1)\in {\cal F}_{y,t_1}$ and $(\tilde{f}_2,\tilde{g}_2)\in {\cal F}_{y,t_2}$  which satisfy $\tilde{g}_1(I'\times J)\cap \tilde{g}_2(I'\times J)=\emptyset$. We pick $k$ such $y$'s for the values of the $\hat{R}_i$'s for $i=1,\hdots,k$ and produce the desired trees by running the  \verb!Build-gadget! procedure for $O(\log |r|)$ steps. Finally note that the magnetizations all lie in the interval $J$, see \eqref{TDEF} and Lemma~\ref{eq:TDEF}, which is bounded by absolute constants, finishing the proof. 
\end{proof}
Note that Theorem~\ref{thm:const1}  corresponds to the case $k=1$ in Theorem~\ref{thm:occgadget}.

\subsection{Proof of Theorem~\ref{thm:const2}}
To prove Theorem~\ref{thm:const2}, we will just need an absolute bound on the second derivative of the effective-field constructions of the previous subsection, when viewed as functions of the parameter $\lambda$. The following lemma gives this bound.
\begin{lemma}\label{lem:34f34vrr}
Let $R_1(\lambda),\dots,R_k(\lambda)$ be a family of functions from a closed interval $I$ to $(\gamma,1/\beta)$. Consider the following family ${\cal F}$ of functions from $I$ to $(\gamma,1/\beta)$. Let $Q_0(\lambda)$ be a function from $I$ to $(\gamma,1/\beta)$. For every sequence $a_0,\dots,a_{m-1}$ of numbers in $[k]$ for $i\in [m-1]$ let
$$
Q_{i+1}(\lambda) = \frac{1+\gamma\lambda R_{a_i}(\lambda) Q_{i}(\lambda)}{\beta + \lambda R_{a_i}(\lambda) Q_{i}(\lambda)}.
$$
and place function $Q_m$ into ${\cal F}$.

Suppose each $R_\ell(\lambda)$ and $Q_0(\lambda)$ is twice continuously differentiable on $I$. There exists bounded
intervals $I_1,I_2$ such that for any $\lambda\in I$ and any $Q(\lambda)\in{\cal F}$ we have $Q_i'(\lambda)\in I_1$
and $Q_i''(\lambda)\in I_2$.
\end{lemma}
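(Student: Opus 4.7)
Write $\mu(Q;s) := (1+\gamma s Q)/(\beta + sQ)$ and $s_i(\lambda) := \lambda R_{a_i}(\lambda)$, so the recurrence becomes $Q_{i+1}(\lambda) = \mu(Q_i(\lambda); s_i(\lambda))$. The key idea is that although the Euclidean derivative $|\partial_Q \mu|$ may exceed $1$, the Möbius map $\mu(\cdot;s)$ is a uniform contraction in the hyperbolic metric on $(\gamma,1/\beta)$: taking the density $\rho_H(x) := 1/[(x-\gamma)(1/\beta-x)]$, a direct computation yields
\[
|\partial_Q \mu(Q;s)| \cdot \frac{\rho_H(\mu(Q;s))}{\rho_H(Q)} \;=\; \frac{(Q-\gamma)(1/\beta-Q)}{Q\,(1/\beta-\gamma)},
\]
which is \emph{independent of $s$}, and maximizing over $Q \in (\gamma,1/\beta)$ yields the uniform contraction factor $c := (1-\sqrt{\beta\gamma})/(1+\sqrt{\beta\gamma}) < 1$ (using $\beta\gamma<1$, which is forced by $(\gamma,1/\beta)$ being a nontrivial interval). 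A parallel calculation shows the hyperbolic $s$-sensitivity of $\mu(Q;\cdot)$ equals $1/[s(1/\beta-\gamma)]$, which is bounded uniformly for $s$ bounded away from zero.

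Since $I$ is closed and the $R_\ell, Q_0$ are $C^2$ with images in the open interval $(\gamma,1/\beta)$, their images are compact subsets bounded away from the endpoints, and in particular $s_i(\lambda)$ ranges in a compact subinterval of $(0,\infty)$. Using the monotonicity of $Q \mapsto \mu(Q;s)$ and the fact that $\mu([\gamma,1/\beta];s)\subset (\gamma,1/\beta)$ strictly, I will show inductively that there exists a compact $K = [\gamma+\eta,1/\beta-\eta]$ with $\eta>0$ such that $Q_i(\lambda) \in K$ for all $i \geq 1$ and $\lambda \in I$, uniformly in the sequence $\{a_i\}$. On $K$, $\rho_H$ is bounded above and below by positive constants. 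Combining with the two hyperbolic estimates above, for any $\lambda_1,\lambda_2 \in I$ the triangle inequality gives the recursion
\[
d_H\bigl(Q_{i+1}(\lambda_1),Q_{i+1}(\lambda_2)\bigr) \;\leq\; c \cdot d_H\bigl(Q_i(\lambda_1),Q_i(\lambda_2)\bigr) \;+\; C_s|\lambda_1-\lambda_2|,
\]
which unrolls to a uniform-in-$m$ bound $d_H(Q_m(\lambda_1),Q_m(\lambda_2)) = O(|\lambda_1-\lambda_2|)$. Converting to Euclidean distance via the bound on $\rho_H|_K$ gives $|Q_m'(\lambda)|\leq C_1$ uniformly in $m$, $\lambda$, and the choice of sequence, yielding the desired interval $I_1$.

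For $Q_m''$, differentiate the recurrence once to obtain $Q_{i+1}' = \rho_i Q_i' + \sigma_i$ with $\rho_i = \partial_Q\mu(Q_i;s_i)$ and $\sigma_i = \partial_s\mu(Q_i;s_i)\,s_i'(\lambda)$, and unroll to $Q_m' = \Pi_{0,m} Q_0' + \sum_j \Pi_{j+1,m}\sigma_j$ with $\Pi_{i,m} := \prod_{j=i}^{m-1}\rho_j$. By the chain rule, $\Pi_{i,m}$ is the Euclidean derivative of $\mu(\cdot;s_{m-1})\circ\cdots\circ\mu(\cdot;s_i)$ evaluated at $Q_i$; translating the hyperbolic contraction factor through the bounded $\rho_H$-ratio on $K$ yields $|\Pi_{i,m}|\leq B c^{m-i}$ for some uniform $B$. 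Differentiating once more and applying the product rule gives $|\Pi_{i,m}'| \leq (m-i)\,B^2\,c^{m-i-1}\sup_k|\rho_k'|$, which is summable since $\sum_{\ell\geq 0}\ell c^{\ell-1}<\infty$. The remaining ingredients $Q_i', s_i', s_i''$ are uniformly bounded (from the first-derivative bound and the $C^2$-hypothesis on the $R_\ell$), so assembling via geometric summation produces the uniform bound $|Q_m''(\lambda)|\leq C_2$, yielding $I_2$. The main technical obstacle is establishing the uniform compact containment $Q_i \in K$ together with verifying the clean $s$-independent form of the hyperbolic contraction; once these are in hand, the remainder is routine unrolling and geometric-series bookkeeping.
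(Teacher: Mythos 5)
Your approach is genuinely different from the paper's, and I want to compare them before flagging a gap. The paper works with the logarithmic coordinate: it rewrites the recursion as
\[
\frac{Q_{i+1}'}{Q_{i+1}} \;=\; \underbrace{\frac{W(\beta\gamma-1)}{(1+\gamma W)(\beta+W)}}_{\text{contracting coeff.}}\cdot\frac{Q_i'}{Q_i} \;+\; (\text{bounded remainder}), \qquad W:=\lambda R_{a_i}Q_i,
\]
i.e.\ it shows that the multiplicative derivative $Q_i'/Q_i$ satisfies an affine contraction, with contraction coefficient depending on $W$. You instead interpret $Q\mapsto\mu(Q;s)$ as a Schwarz--Pick contraction in the hyperbolic metric of $(\gamma,1/\beta)$, and your contraction factor $\beta(Q-\gamma)(1/\beta-Q)/(Q(1-\beta\gamma))$ depends only on $Q$, not on $s$. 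These are genuinely different quantities (the paper's is $\partial_Q\mu\cdot Q/\mu(Q)$, yours is $\partial_Q\mu\cdot\rho_H(\mu)/\rho_H(Q)$), and it is a pleasant feature of both that the global supremum equals $\frac{1-\sqrt{\beta\gamma}}{1+\sqrt{\beta\gamma}}$. Your derivation of the $s$-independent formula and of the $s$-sensitivity $1/[s(1/\beta-\gamma)]$ checks out, the unrolling $Q_m'=\Pi_{0,m}Q_0'+\sum_j\Pi_{j+1,m}\sigma_j$ is correct, and the second-derivative bookkeeping $|\Pi_{i,m}'|\leq (m-i)B^2c^{m-i-1}\sup|\rho_k'|$ with a geometric-plus-polynomial series is sound (given uniform bounds on $\rho_k'$ and $\sigma_j'$, which follow from the $C^2$ hypothesis, the compactness of $I$, and the already-established bound on $Q_k'$).

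However, there is a gap at $\beta\gamma=0$, which is precisely the case the paper must (and does) treat separately. Your claim that maximizing $\frac{(Q-\gamma)(1/\beta-Q)}{Q(1/\beta-\gamma)}$ over $Q\in(\gamma,1/\beta)$ gives $c=\frac{1-\sqrt{\beta\gamma}}{1+\sqrt{\beta\gamma}}<1$, ``using $\beta\gamma<1$,'' is incorrect when $\beta\gamma=0$: for $\gamma=0$ the factor is $1-\beta Q\to 1$ as $Q\to\gamma^+$, and for $\beta=0$ it is $1-\gamma/Q\to1$ as $Q\to (1/\beta)^-$. So the global supremum is $1$, not $<1$, and one only gets a uniform $c<1$ after restricting $Q$ to the compact $K$. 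Worse, for $\beta=0$ the interval $(\gamma,1/\beta)=(\gamma,\infty)$ is unbounded, and the density $\rho_H(x)=1/[(x-\gamma)(1/\beta-x)]$ you wrote is identically zero there; you would need to replace it by the half-line density $1/(x-\gamma)$ for the argument (including the $\rho_H$-bounded-above-and-below step on $K$, which is essential for converting hyperbolic estimates to Euclidean ones) to make sense. The paper's proof is explicit about this: it uses that $W=\lambda R_\ell Q_i$ is uniformly bounded below (for $\beta=0$) or above (for $\gamma=0$) to force the coefficient strictly inside $(-1,1)$. Your plan does assert the compact-containment claim $Q_i\in K=[\gamma+\eta,1/\beta-\eta]$, which would salvage the argument, but the proposal as written identifies the wrong contraction constant and does not notice that the $\rho_H$ formula breaks down at $\beta=0$; you should state that $c$ is the supremum over $K$ (not over $(\gamma,1/\beta)$) and use $\rho_H(x)=1/(x-\gamma)$ in the $\beta=0$ case.
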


\begin{proof}
Let $\ell=a_i$. We have
\begin{equation}\label{azzz}
Q_{i+1}'(\lambda) =
\frac{(\gamma\lambda {R'}_\ell(\lambda) Q_i(\lambda)+\gamma\lambda R_\ell(\lambda) Q_i'(\lambda)) }{\beta + \lambda R_i(\lambda) Q_i(\lambda)}
-
\frac{(1+\gamma\lambda R_\ell(\lambda) Q_i(\lambda)) (\lambda {R'}_\ell(\lambda) R (\lambda)+\lambda R_\ell(\lambda) Q_i'(\lambda)) }{(\beta + \lambda R_\ell(\lambda) Q_i(\lambda))^2}.
\end{equation}
We can re-write~\eqref{azzz} as follows
\begin{equation}\label{ind}
\frac{Q_{i+1}'(\lambda)}{Q_{i+1}(\lambda)} = \frac{\lambda R_\ell(\lambda)Q_i(\lambda) (\beta\gamma-1)}{(1+\lambda\gamma R_\ell(\lambda)Q_i(\lambda))(\beta+\lambda R_\ell(\lambda)Q_i(\lambda))} + \frac{p_1(R_\ell(\lambda),Q_i(\lambda),R'_\ell(\lambda))}{(\beta + \lambda R_\ell(\lambda) Q_i(\lambda))},
\end{equation}
where $p_1$ is a constant degree polynomial. Note that the coefficient of $Q_i'(\lambda)/Q_i(\lambda)$ in~\eqref{azzz} satisfies
\begin{equation}\label{azzz2}
\Big|\frac{\lambda R_\ell(\lambda)Q_i(\lambda) (\beta\gamma-1)}{(1+\lambda\gamma R_\ell(\lambda)Q_i(\lambda))(\beta+\lambda R_\ell(\lambda)Q_i(\lambda))}\Big| \leq C < 1,
\end{equation}
where $C=\frac{1-\sqrt{\beta\gamma}}{\sqrt{\beta\gamma}+1}$ if $\beta\gamma\neq 0$ (the case $\beta\gamma=0$ is
discussed below). To see the bound in~\eqref{azzz2} let $W=\lambda R_\ell(\lambda)Q_i(\lambda)$. The coefficient becomes $\frac{W(\beta\gamma-1)}{(1+\gamma W)(\beta + W)}$, which achieves maximum for $W=\sqrt{\beta/\gamma}$ with value $\frac{\sqrt{\beta\gamma}-1}{\sqrt{\beta\gamma}+1}$. In the
case $\beta=0$ the coefficient becomes $-1/(W\gamma+1)$, which since $W$ is uniformly bounded from below 
has absolute value uniformly bounded from $1$. In the case $\gamma=0$ the coefficient becomes $-W/(\beta+W)$
and since $W$ is uniformly bounded from above has absolute value uniformly bounded from $1$.

Each $R'_\ell(\lambda)$ is bounded (since it is
a continuous function on a closed interval. Hence there exists constant $M_1$ such that for
any $x\in (\gamma,1/\beta)$ and any $\lambda\in I$
$$
\left|
\frac{p_1(R_\ell(\lambda),x,R'_\ell(\lambda))}{(\beta + \lambda R_\ell(\lambda) x)^2}
\right|\leq M_1.
$$
From continuous differentiability of $Q_0$ we also have that there exists a constant $M_2$ such
that for any $\lambda\in I$
we have
$$
\Big|\frac{Q_0'(\lambda)}{Q_0(\lambda)}\Big|\leq M_2.
$$
Now using~\eqref{ind} by induction $|\frac{Q_{i}'(\lambda)}{Q_{i}(\lambda)}|\leq (M_1+M_2)/(1-C)$.

The argument for the second derivative is almost the same. We can write an expression
for $Q''_{i+1}(\lambda)/Q_{i+1}(\lambda)$ as follows
\begin{equation}\label{ind2}
\frac{Q_{i+1}''(\lambda)}{Q_{i+1}(\lambda)} = \frac{\lambda R_\ell(\lambda)Q_i(\lambda) (\beta\gamma-1)}{(1+\lambda\gamma R_\ell(\lambda)Q_i(\lambda))(\beta+\lambda R_\ell(\lambda)Q_i(\lambda))} \frac{Q_{i}''(\lambda)}{Q_i(\lambda)} + \frac{p_2(R_\ell(\lambda),Q'_i(\lambda),Q_i(\lambda),R'_\ell(\lambda),R''_\ell(\lambda))}{(\beta + \lambda R_\ell(\lambda) Q_i(\lambda))^3},
\end{equation}
where $p_2$ is a constant degree polynomial. Again all the parameters for $p_2$
are bounded (for $Q_i'(\lambda)$ we use the $(M_1+M_2)/(1-C) (1/\beta)$ bound) and hence
there exists a constant $M_3$ such that for any $x\in (\gamma,1/\beta)$ and any
$|y|\leq (M_1+M_2)/(1-C)$ we have
$$
\Big|\frac{p_2(R_\ell(\lambda),y,x,R'_\ell(\lambda),R''_\ell(\lambda))}{(\beta + \lambda R_\ell(\lambda) x)^3}\Big|\leq M_3.
$$
We also have that there exists a constant $M_4$ such
that for any $\lambda\in I$
we have
$$
\Big|\frac{Q_0''(\lambda)}{Q_0(\lambda)}\Big|\leq M_4.
$$
Note that the coefficient of $Q_{i}''(\lambda)/Q_{i}(\lambda)$ in~\eqref{ind2} satisfies~\eqref{azzz2}. Now using~\eqref{ind2} by induction $|Q_{i}''(\lambda)/Q_i(\lambda)|\leq (M_3+M_4)/(1-C)$.
\end{proof}

We now give the proof of Theorem~\ref{thm:const2}.

\begin{proof}[Proof of Theorem~\ref{thm:const2}]
By Theorem~\ref{thm:const1}, for antiferromagnetic $(\beta,\gamma)$ and $\lambda>0$ with $(\beta,\gamma,\lambda)\neq (\beta,\beta,1)$, there exist $\hat{R},\hat{M}>0$ such that  for any $r>0$, there exist  a pair of field gadgets $\mathcal{T}_1,\mathcal{T}_2$, each of maximum degree 3 and size $O(|\log r|)$, such that 
\[|R_{\Tc_1} -\hat{R}|, |R_{\Tc_2}- \hat{R}|\leq r,\mbox{ but } |M_{\Tc_1}-M_{\Tc_2}|>\hat{M}.\]
The effective fugacities of the gadgets $\Tc_1$, $\Tc_2$, viewed as functions of $\lambda$, are of the form given in Lemma~\ref{lem:34f34vrr} and hence their second derivatives are uniformly bounded. Note that the magnetization gap is the first derivative of the logarithm of the effective fugacity. So, the lower bound on the difference of $M_{\Tc_1},M_{\Tc_2}$ implies an absolute lower bound on the difference of the derivatives of $R_{\Tc_1},R_{\Tc_2}$. Since the second derivative is uniformly bounded (by Lemma~\ref{lem:34f34vrr}), we can find $\hat{\lambda}$ arbitrarily close to $\lambda$ such that  $R_{\Tc_1}(\hat{\lambda}),R_{\Tc_2}(\hat{\lambda})$ are equal. Since  $R_{\Tc_1},R_{\Tc_2}(\lambda)$ are rational functions of $\lambda$ (using that $\beta,\gamma$ are rational), we obtain that any such $\hat{\lambda}$ must in fact be algebraic, finishing the proof.
\end{proof}

\section{Field Gadgets: Remaining Proofs}

In this section, we give the remaining proofs of Section~\ref{sec:gadget}.
\subsection{Proof of Lemma~\ref{lem1}}\label{sec:prooflem1}
\begin{proof}[Proof of Lemma~\ref{lem1}]
For convenience, we will write $\mu_{\Tc}$ instead of $\mu_{\Tc;\beta,\gamma,\lambda}$. Let $\Zin_{\Tc,\rho}$ be the weight of all configurations on $\Tc$ rooted at $\rho$ where $\rho$ is occupied (has state $1$). Similarly let $\Zout_{\mathcal{T},r}$ be the weight of all configurations on $\Tc$ rooted at $\rho$ where $\rho$ is unoccupied (has state $1$). Define analogously $\Zin_{\Tc_i,\rho_i}$ for $i=1,\hdots,k$. Then, we have
\begin{align*}
\Zin_{\Tc,\rho}&=\lambda \bigg(\prod^{k}_{i=1} \Zout_{\Tc_i,\rho_i}+\gamma\lambda\prod^{k}_{i=1} \frac{\Zin_{\Tc_i,\rho_i}}{\lambda}\bigg)\\
\Zout_{\Tc,\rho}&=\beta\prod^{k}_{i=1} \Zout_{\Tc_i,\rho_i}+\lambda \prod^{k}_{i=1} \frac{\Zin_{\Tc_i,\rho_i}}{\lambda}
\end{align*}
We therefore have that
\[R=\frac{1}{\lambda}\frac{\Zin_{\Tc,\rho}}{\Zout_{\mathcal{T},\rho}}=\frac{\prod^{k}_{i=1} \Zout_{\mathcal{T}_i,r_i}+\gamma\lambda\prod^{k}_{i=1} \frac{\Zin_{\Tc_i,\rho_i}}{\lambda}}{\beta\prod^{k}_{i=1} \Zout_{\Tc_i,\rho_i}+\lambda \prod^{k}_{i=1} \frac{\Zin_{\Tc_i,\rho_i}}{\lambda}}=\frac{1+\gamma\lambda\prod^{k}_{i=1} R_i}{\beta+\lambda \prod^{k}_{i=1}R_i}.\]
For the magnetization gap, we view $\Zin_{\Tc,\rho}, \Zout_{\Tc,\rho},$ and $\Zin_{\Tc_i,\rho_i}, \Zout_{\Tc_i,\rho_i}$ as polynomials in $\lambda$, and therefore the $R_i$'s and $R$ as rational functions of $\lambda$. Then observe that
\[M_i=\lambda \frac{\partial \log \Zin_{\Tc_i,\rho_i}}{\partial \lambda}- \lambda \frac{\partial \log \Zout_{\Tc_i,\rho_i}}{\partial \lambda}=\lambda \frac{\partial \log (\lambda R_i)}{\partial \lambda}=1+\frac{\lambda}{R_i} \frac{\partial R_i}{\partial \lambda},\]
and similarly $M=1+\frac{\lambda}{R} \frac{\partial R}{\partial \lambda}$. It follows that
\begin{align*}
M&=1-\frac{(1-\beta \gamma)\lambda}{R(\beta+\lambda \prod^{k}_{i=1}R_i)^2}\frac{\partial (\lambda \prod^{k}_{i=1}R_i)}{\partial \lambda}=1-\frac{(1-\beta\gamma)\lambda}{(1+\gamma\lambda\prod^{k}_{i=1} R_i)(\beta+\lambda \prod^{k}_{i=1}R_i)}\frac{\partial (\lambda \prod^{k}_{i=1}R_i)}{\partial \lambda}\\
&=1-\frac{(1-\beta \gamma)\lambda \prod^{k}_{i=1}R_i}{(1+\gamma\lambda\prod^{k}_{i=1} R_i)(\beta+\lambda \prod^{k}_{i=1}R_i)}\bigg(1+\sum^{k}_{i=1}\frac{\lambda}{R_i}\frac{\partial R_i}{\partial \lambda}\bigg)=1 -\omega(R)  \bigg( 1 + \sum_{i=1}^k (M_i-1)\bigg).\qedhere
\end{align*}
\end{proof}

\subsection{Proof of Lemma~\ref{led}}\label{sec:led}
In this section, we give the proof of Lemma~\ref{led}. We first create two field gadgets ${\cal T}_1$ and ${\cal T}_2$ with effective fields $R_1,R_2$ where $R_1$ is close to $x^*$ and $R_2$ is even closer to $x^*$. We generalize a ``bounding-tree'' construction from \cite{glauber}, which was used in the case of the hard-core model.

To handle the special case $\lambda=\frac{1-\beta}{1-\gamma}$ for some $\beta,\gamma\in (0,1)$, cf. Definition~\ref{def:field} and the remark below, we will use the following field gadget whose effective field is different than $ 1$ and satisfies $R\in (\gamma,1/\beta)$.
\begin{lemma}\label{lem:triangle}
Let $(\beta,\gamma)$ be antiferromagnetic with $\beta,\gamma\in (0,1)$ and $\beta\neq \gamma$. Let $\lambda=\frac{1-\beta}{1-\gamma}$ and consider the field gadget $\Tc$ rooted at $\rho$, consisting of a 4-cycle together with the extra vertex $\rho$ that is attached to one of the vertices of the cycle. Then, the effective field $R$ of $\Tc$ is different than 1 and satisfies $R\in (\gamma,1/\beta)$.
\end{lemma}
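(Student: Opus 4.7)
The plan is to compute the effective field $R$ of the triangle gadget directly via enumeration of the eight configurations on the triangle, and then establish the two claims ($R\in(\gamma,1/\beta)$ and $R\neq 1$) separately.

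First I would set up notation by letting $a$ denote the triangle vertex incident to the root $\rho$ and writing $S_j$ for the total weight (over the triangle only) of configurations with $\sigma_a=j$, excluding the root and its incident edge. A direct enumeration gives
\[
S_0 = \beta^3 + 2\lambda\beta + \lambda^2\gamma, \qquad S_1 = \lambda\beta + 2\lambda^2\gamma + \lambda^3\gamma^3.
\]
Incorporating the vertex weight of $\rho$ and the edge $\{\rho,a\}$ (which contributes $\beta$ when $\sigma_\rho=\sigma_a=0$, $\gamma$ when $\sigma_\rho=\sigma_a=1$, and $1$ otherwise), one obtains $\Zin = \lambda(S_0+\gamma S_1)$ and $\Zout = \beta S_0+S_1$. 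Hence, writing $t:=S_1/S_0$,
\[
R = \frac{S_0+\gamma S_1}{\beta S_0+S_1} = \frac{1+\gamma t}{\beta+t}.
\]

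For the bound $R\in(\gamma,1/\beta)$: since $\beta,\gamma,\lambda>0$ every term in $S_0,S_1$ is strictly positive, so $t\in(0,\infty)$. The map $t\mapsto (1+\gamma t)/(\beta+t)$ has derivative $(\beta\gamma-1)/(\beta+t)^2<0$ (because $\beta\gamma<1$ in the antiferromagnetic regime), so it is strictly decreasing on $(0,\infty)$, with value $1/\beta$ at $t=0$ and limit $\gamma$ as $t\to\infty$. This immediately yields $R\in(\gamma,1/\beta)$ and is in fact the same ``tree recursion'' appearing in Lemma~\ref{lem1}.

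For $R\neq 1$: the equation $R=1$ is equivalent to $(1-\beta)S_0=(1-\gamma)S_1$, which under the prescribed choice $\lambda=(1-\beta)/(1-\gamma)$ becomes $S_0-S_1/\lambda=0$. Rewriting
\[
S_0-S_1/\lambda = -\beta(1-\beta)(1+\beta) + 2\lambda(\beta-\gamma) + \lambda^2\gamma(1-\gamma)(1+\gamma),
\]
I would substitute $\lambda=(1-\beta)/(1-\gamma)$, clear the denominator $(1-\gamma)$, and factor $(1-\beta)$ out of the resulting expression. The bracket that remains collapses cleanly via the identity $1-(\beta+\gamma)+\beta\gamma=(1-\beta)(1-\gamma)$, yielding
\[
S_0 - S_1/\lambda = (1-\beta)^2(\beta-\gamma),
\]
which is nonzero precisely because $\beta\in(0,1)$ and $\beta\neq\gamma$. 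The only non-routine step is this final factorization; everything else is a short computation. I do not anticipate any real obstacle, since the combinatorics of the triangle is small enough to be handled by brute force and the ``magical'' simplification is driven by the specific form of $\lambda$.
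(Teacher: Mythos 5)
Your proposal is correct and matches the paper's argument essentially step for step: you compute the triangle weights $S_0, S_1$ (equivalently the paper's inner ratio $r = S_1/(\lambda S_0)$), express $R$ via the standard one-child recursion $R = (1+\gamma t)/(\beta + t)$, deduce $R\in(\gamma,1/\beta)$ from monotonicity of that map, and reduce $R=1$ under $\lambda=(1-\beta)/(1-\gamma)$ to the same identity $(1-\beta)^2(\beta-\gamma)=0$ that the paper arrives at. The only cosmetic difference is your choice of variable $t=S_1/S_0 = \lambda r$ in place of the paper's $r$.
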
 
\begin{proof}
A direct calculation gives that $R=\frac{1+\gamma \lambda r}{\beta+\lambda r}$ where $r=\frac{\beta^2 +\lambda+ 2 \lambda \beta \gamma + 3\lambda^2 \gamma^2+\lambda^3 \gamma^4}{\beta^4 + 3\lambda \beta^2+\lambda^2+2 \lambda^2 \beta \gamma+ \lambda^3 \gamma^2}$. Since $\lambda=\frac{1-\beta}{1-\gamma}$, we have that $R\neq 1$ iff $r\neq 1$. Using the value of $\lambda$ again we obtain that $r=1$ is equivalent to  $(1-\beta)^3(\beta-\gamma)=0$, which cannot be the case when $\beta\neq \gamma$ and $\beta,\gamma\in(0,1)$. The fact that $R\in (\gamma,1/\beta)$ follows from $r\in (\gamma,1/\beta)$.
\end{proof}

\begin{lemma}\label{lem:rr44f}
For every $\eps_1,\eps_2>0$ there exist field gadgets $\mathcal{T}_1$ and $\mathcal{T}_2$ such that
\begin{equation}
|R_1 - x^*| < \eps_1 \quad\mbox{and}\quad |R_2 - x^*| < \eps_2 |R_1 - x^*|.
\end{equation}
\end{lemma}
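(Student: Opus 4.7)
The plan is to construct an infinite sequence of field gadgets $\Tc^{(0)},\Tc^{(1)},\ldots$ whose effective fields $R^{(n)}$ tend to $x^*$; once this sequence is in hand, the lemma follows immediately by setting $\Tc_1:=\Tc^{(n_1)}$ for $n_1$ large enough to force $|R^{(n_1)}-x^*|<\epsilon_1$ and $\Tc_2:=\Tc^{(n_2)}$ for $n_2>n_1$ large enough to force $|R^{(n_2)}-x^*|<\epsilon_2|R^{(n_1)}-x^*|$. The construction is inductive, built on the recursion of Lemma~\ref{lem1}, which combines gadgets of effective fields $R_1,R_2$ into a new gadget of field $\phi(R_1 R_2)$, where $\phi(t):=(1+\gamma\lambda t)/(\beta+\lambda t)$.

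First I would exhibit a bootstrap gadget $\Tc^{(0)}$ with $R^{(0)}\in(\gamma,1/\beta)\setminus\{x^*\}$. When $\lambda\ne(1-\beta)/(1-\gamma)$, the single-edge gadget already suffices since its effective field $\phi(1)$ coincides with $x^*$ only on a lower-dimensional algebraic subset that can be avoided. In the remaining degenerate regime, every purely tree-shaped gadget has $R=1=x^*$ by Definition~\ref{def:field}, so I would instead invoke Lemma~\ref{lem:triangle} to obtain a triangle-augmented gadget with effective field distinct from $1$.

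For the inductive step, let $R^{(n)}=x^*+\eta_n$ be small. Consider the Mobius-type map $\Phi_n(y):=\phi(y R^{(n)})$, which by Lemma~\ref{lem1} computes the effective field of the gadget formed by attaching a subgadget of field $y$ together with $\Tc^{(n)}$ as the two children of a new internal vertex. The unique positive fixed point $\widetilde{x}=\widetilde{x}(R^{(n)})$ of $\Phi_n$ approaches $x^*$ as $R^{(n)}\to x^*$, and a first-order expansion gives
\[
\widetilde{x}-x^* \;=\; \frac{\mu}{1-\mu}\,\eta_n + O(\eta_n^2), \qquad \mu := x^*\phi'((x^*)^2) = \tfrac{1}{2}\psi'(x^*),
\]
where $\psi(y):=\phi(y^2)$. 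Antiferromagnetism forces $\phi'<0$ so $\mu<0$, hence $|\mu/(1-\mu)|=|\mu|/(1+|\mu|)<1$ unconditionally; thus $\widetilde{x}$ lies strictly closer to $x^*$ than $R^{(n)}$ by a uniform factor.

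To realize $\Tc^{(n+1)}$ with effective field close to $\widetilde{x}$, I would iterate the attachment: set $S_0:=\Tc^{(n)}$ and let $S_{j+1}$ be formed by combining $S_j$ and $\Tc^{(n)}$ as the two children of a new internal vertex, so that the effective field of $S_j$ is $\Phi_n^{\circ j}(R^{(n)})$. When $|\Phi_n'(\widetilde{x})|<1$ the iteration converges geometrically to $\widetilde{x}$ and one takes $\Tc^{(n+1)}:=S_j$ for sufficiently large $j$. The main technical obstacle I anticipate is that $\Phi_n'(\widetilde{x})\to\mu$ as $R^{(n)}\to x^*$, while non-uniqueness only guarantees $|\mu|>\tfrac{1}{2}$, so $|\mu|\ge 1$ is possible deep in non-uniqueness and the naive forward iteration is then expanding. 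This obstruction should be handled by replacing single-step attachment with a nested chain of $\Tc^{(n)}$-subgadgets that effectively implements a conjugate map with derivative $<1$ at its fixed point, or by seeding the inner iteration with a gadget produced at an earlier stage that already lies inside the basin of attraction of $\widetilde{x}$. Iterating the refinement then yields the desired geometrically convergent sequence.
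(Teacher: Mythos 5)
Your proposal takes a genuinely different route from the paper, and it leaves a gap that you yourself flag but do not close.

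The paper never needs a contraction estimate for this lemma. Its proof maintains a bracketing pair $L_i \le x^* \le U_i$ (with $L_i\neq U_i$), always built from trees already in hand, and shows by monotonicity of $x\mapsto (1+\gamma\lambda x)/(\beta+\lambda x)$ that one of the three candidates $\phi(L_i^2),\phi(L_iU_i),\phi(U_i^2)$ gives two consecutive values straddling $x^*$. It then proves the ratio $U_i/L_i$ is strictly decreasing and, via an accumulation-point argument, must tend to $1$. No inner fixed-point iteration and no derivative bound are required. Your approach instead tries to converge to $x^*$ by iterating the M\"obius map $\Phi_n(y)=\phi(yR^{(n)})$ toward its own fixed point $\widetilde{x}$, which is a Newton-like refinement whose correctness genuinely depends on $|\Phi_n'(\widetilde{x})|<1$.

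The gap is exactly the contraction step. You say ``non-uniqueness only guarantees $|\mu|>\tfrac12$, so $|\mu|\ge 1$ is possible deep in non-uniqueness,'' and then propose only vague workarounds (nested chains, pre-seeding a basin) without carrying any of them out. That worry is actually a misidentification of which derivative non-uniqueness constrains. Definition~\ref{def:nonuniq} bounds $|f'(\cdot)|$ for $f(x)=\lambda\big((\beta x+1)/(x+\gamma)\big)^{\Delta-1}$, which is the degree-$(\Delta-1)$ ratio recursion; that says nothing about $\mu:=x^*\phi'((x^*)^2)$, the one-variable derivative of the two-child effective-field map at its own fixed point. The latter quantity is precisely (up to sign) the $\omega^*=\omega(x^*)$ of~\eqref{ode}, and Lemma~\ref{lem1} already establishes $0<\omega(R)<1$ for all $R\in(\gamma,1/\beta)$ whenever $(\beta,\gamma)$ is antiferromagnetic, for every $\lambda>0$ and with no reference to $\Delta$. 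So $|\Phi_n'(\widetilde{x})|\to |\omega^*|<1$ and the iteration does contract. With that fact in hand your argument would go through (your base case and linearization $\widetilde x - x^* = \tfrac{\mu}{1-\mu}\eta_n + O(\eta_n^2)$ are correct), but as written the proposal asserts a possible failure, suggests no concrete remedy, and therefore does not constitute a proof.
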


\begin{proof}
We will inductively construct two sequences of field gadgets ${\cal T}^L_1,{\cal T}^L_2,\dots$ and
${\cal T}^U_1,{\cal T}^U_2,\dots$ with effective fields $L_1,L_2,\dots$
and $U_1,U_2,\dots$. Throughout the construction we will maintain $1/\beta < L_i\leq x^*\leq U_i < \gamma$
and $L_i\neq U_i$. We will use the following inequalities
\begin{equation}\label{eew}
x<x^*\implies \frac{1+\gamma\lambda x^2}{\beta+\lambda x^2}>x^*\quad\mbox{and}\quad
x>x^*\implies \frac{1+\gamma\lambda x^2}{\beta+\lambda x^2}<x^*;
\end{equation}
these follow from the fact that $\frac{1+\gamma\lambda x^2}{\beta+\lambda x^2}$ is decreasing
and that $x^*$ is the fixpoint.

If $1<x^*$ then we let ${\cal T}^L_1$ to be the degenerate tree (see Example~\ref{eee})
and ${\cal T}^U_1$ be the tree with one edge. Note that ${\cal T}^U_1$ is obtained
from the degenerate tree using the operation of Lemma~\ref{lem1} and hence $L_1 = 1$
and $U_1 = (1+\gamma\lambda)/(\beta+\lambda)$ which by~\eqref{eew} is greater than $x^*$.
In the case $x^*<1$ by the same argument we can take ${\cal T}^L_1$ the tree with one
edge and ${\cal T}^U_1$ the degenerate tree and we again have $L_1 < x^* < U_1$. In the case where 
$x^*=1$ we must have that $\lambda=\frac{1-\beta}{1-\gamma}$ for some $\beta,\gamma\in (0,1)$ with $\beta\neq \gamma$; then, we can use the field gadget $\Tc$ of Lemma~\ref{lem:triangle}: if $R<1$ then we let  ${\cal T}^L_1$ be $\Tc$ and we take ${\cal T}^U_1$ to be the single edge tree, and if $R>1$, vice versa.

Now assume $L_i\leq x^* \leq U_i$ and $L_i\neq U_i$. Using~\eqref{eew} we have that
\begin{equation}\label{iii}
\frac{1+\gamma\lambda U_i^2}{\beta+\lambda U_i^2} \leq  x^* \leq \frac{1+\gamma\lambda L_i^2}{\beta+\lambda L_i^2}.
\end{equation}
From the monotonicity of the map $x\mapsto (1+\gamma\lambda x)/(\beta+\lambda x)$ and~\eqref{iii}
we have that there is $k\in\{0,1\}$ such that
\begin{equation}\label{zzh}
L_{i+1}:=\frac{1+\gamma\lambda U_i^k L_i^{2-k}}{\beta+\lambda U_i^k L_i^{2-k}} \leq  x^* \leq \frac{1+\gamma\lambda U_i^{k+1}L_i^{1-k}}{\beta+\lambda U_i^{k+1} L_i^{1-k}}=:U_{i+1}.
\end{equation}
Equation~\eqref{zzh} corresponds to 1) creating ${\cal T}^L_{i+1}$ using $k$ copies of
${\cal T}^U_{i}$ and $2-k$ copies of ${\cal T}^L_{i}$ and 2) creating ${\cal T}^U_{i+1}$
using $k+1$ copies of ${\cal T}^U_{i}$ and $1-k$ copies of ${\cal T}^L_{i}$.

Note that $U_i\neq L_i$ implies $U_{i+1}\neq L_{i+1}$. We will now argue
that $\lim_{n\rightarrow\infty} L_n = \lim_{n\rightarrow\infty} U_n = x^*$. From~\eqref{zzh} we obtain
\begin{equation}\label{dac2}
\frac{U_i}{L_i} = \frac{(L_{i+1}\beta - 1)(\gamma-U_{i+1})}{(U_{i+1}\beta - 1)(\gamma-L_{i+1})}.
\end{equation}
We will now argue
\begin{equation}\label{dac}
\frac{(L_{i+1}\beta - 1)(\gamma-U_{i+1})}{(U_{i+1}\beta - 1)(\gamma-L_{i+1})} > \frac{U_{i+1}}{L_{i+1}}.
\end{equation}
We can rewrite~\eqref{dac} as follows
$$
(U_{i+1}-L_{i+1})( \gamma(1-\beta L_{i+1}) + \beta L (U_{i+1} - \gamma) ) > 0,
$$
which follows from $R\in (\gamma,1/\beta)$ (see Lemma~\ref{lem1}). Note that $U_i/L_i$
is a decreasing sequence, $U_i/L_i\geq 1$ and hence it has a limit. We are going to
argue that the limit is $1$. Suppose the limit is $\alpha\geq 1$. Note that the values
of $L_i$ are restricted to $(\gamma,1/\beta)$ and hence the set $\{L_i\,|\,i\geq 1\}$ has
an accumulation point $x$. At the accumulation point we have
\begin{equation}\label{eeeex}
\alpha = \frac{(x\beta-1)(\gamma-\alpha x)}{(\alpha x \beta -1)(\gamma-x)},
\end{equation}
since for every $\eps>0$ there exist $i$ such that $|L_{i+1}-x|\leq\eps$ and $|U_{i+1} - \alpha x|\leq \eps$
and $|U_{i}/L_{i}-\alpha|\leq\eps$ (which together with~\eqref{dac2} implies~\eqref{eeeex}).

Equation~\eqref{eeeex} simplifies to
\begin{equation}\label{eeeex2}
(1-\alpha)( \alpha\beta\gamma x -\alpha\beta x^2 +\beta\gamma x - \gamma ) =0.
\end{equation}
We will show that for $x\in (\gamma,1/\beta)$ we have
\begin{equation}\label{part}
\alpha\beta\gamma x -\alpha\beta x^2 +\beta\gamma x - \gamma < 0.
\end{equation}
For $x=\gamma$ the left-hand side of~\eqref{part} simplifies to $(\beta\gamma-1)\gamma<1$.
The derivative of the left-hand side of~\eqref{part} with respect to $x$ is
\begin{equation}\label{part2}
\beta (\alpha\gamma-2\alpha x + \gamma).
\end{equation}
We will show that for $x\in (\gamma,1/\beta)$ the expression~\eqref{part2} is negative;
because of linearity we only need to argue at the endpoints of the interval.
For $x=\gamma$ the value is $\gamma-\alpha\gamma<0$. For $x=1/\beta$ the
value is $(1+\alpha)\beta \gamma - 2\alpha<0$. This proves~\eqref{part2} and
hence~\eqref{eeeex2} implies $\alpha=1$, that is, the limit of $U_i/L_i$ goes
to $1$ as $i\rightarrow\infty$.

To construct ${\cal T}_1$ we iterate the process until $U_i/L_i\leq 1+\eps_1/x^*$;
this implies $|U_i-x^*|\leq\eps_1$ and $|L_i-x^*|\leq\eps_1$. Since $U_i\neq L_i$
at least one of them is different from $x^*$; this will be the effective
field of our field gadget ${\cal T}_1$. To construct ${\cal T}_2$ we iterate the
process until $U_j/L_j\leq 1+\eps_2|R_1-x^*|/x^*$ and take either the field gadget
for $L_j$ or $U_j$.
\end{proof}
\begin{proof}[Proof of Lemma~\ref{led}]
We will use the Taylor approximation of the map of Lemma~\ref{lem1} for $R_1,R_2$ close to $x^*$. More precisely, there exists $C>0$ and $\tau_0>0$ such that for any $\tau\in(0,\tau_0)$, $|y_1|\leq\tau$, and $|y_2|\leq\tau$ for $R_1=x^*+y_1$ and $R_2=x^*+y_2$ we have
\begin{equation}\label{oooz}
\Big|\frac{1+\gamma \lambda R_1 R_2}{\beta +\lambda R_1 R_2}
- R\Big|\leq C \tau^2,
\end{equation}
where $R=x^* - \omega^*(y_1 + y_2)$ (see equation~\eqref{bou1}).

For $i\in\{0,\dots,Q\}$ (where $Q=\lceil 2\log_2(1/\delta) + 2\log_2(1/\omega^*)\rceil$)
we will construct a sequence of field gadgets ${\cal T}_{i,0}, \dots, {\cal T}_{i,2^i}$ with effective
fields $R_{i,0},\dots,R_{i,2^i}$ that are close
to an arithmetic sequence. More precisely for each $i\in\{0,\dots,Q\}$
there will be a value $\alpha_i$ and $\xi_i>0$ such that
for all $j\in\{0,\dots,2^i\}$ we have
\begin{equation}\label{suho}
|R_{i,j} - (x^* + j\alpha_i)|\leq\xi_i |\alpha_i|.
\end{equation}
We will aim to have $\xi_i$ small; the exact bound will be established inductively. To simplify 
the notation we assume $C\geq 1$ (increasing $C$ preserves~\eqref{oooz}) and $\tau_0<1/100$
(decreasing $\tau_0$ preserves~\eqref{oooz}).

We start with ${\cal T}_{0,0}={\cal T}_2$ and ${\cal T}_{0,1}={\cal T}_1$ where ${\cal T}_1,{\cal T}_2$
are the field gadgets guaranteed by Lemma~\ref{lem:rr44f} with parameters
$\eps_2=\tau_0(\omega^*)^{2Q}/(C^2 9^Q)$, $\eps_1=\eps_2^2$. Note that
we satisfy~\eqref{suho} with $\alpha_0=R_{0,1}$ and $\xi_0 = \eps_2$.

Suppose we constructed ${\cal T}_{i,0}, \dots, {\cal T}_{i,2^i}$ and $\xi_i\leq 3^i\xi_0<1/100$. For $j\in\{0,\dots,2^{i+1}\}$ we
let ${\cal T}_{i+1,j}$ be obtained from ${\cal T}_{i,k}$ and ${\cal T}_{i,\ell}$ where $k+\ell=j$ (any choice 
of $k$ and $\ell$ is fine).
We let $\alpha_{i+1}=-\omega^*\alpha_i$. From~\eqref{oooz} we have
$$
\big| R_{i+1,j} - \big(x^* - \omega^*\big( (R_{i,k}-x^*) + (R_{i,\ell}-x^*) \big)\big) \big| \leq C (2^{i+1} |\alpha_i|) ^2,
$$
which combined with~\eqref{suho} for $R_{i,k}$ and $R_{i,\ell}$ implies
\begin{equation}\label{suho2}
|R_{i+1,j} - (x^* + j\alpha_{i+1})|\leq C 2^{2i+2} |\alpha_i| \xi_0 \xi_i + 2 \xi_i \omega^* |\alpha_i|
\leq 3 \xi_i |\alpha_{i+1}|,
\end{equation}
where in the first inequality we used $|\alpha_i|\leq |\alpha_0|\leq\eps_1 = (\xi_0)^2\leq \xi_0 \xi_i$.
Hence we have ${\cal T}_{i+1,0}, \dots, {\cal T}_{i+1,2^{i+1}}$ that satisfy~\eqref{suho} with $\xi_{i+1}\leq 3^{i+1}\xi_0<1/100$.

Our final family of field gadgets is ${\cal T}_{Q,i},i\in\{0,\dots,2^Q\}$ and ${\cal T}_{Q-1,i}, i\in\{0,\dots,2^{Q-1}\}$
(the reason we take field gadgets for two values of the first index is to cover both sides of $x^*$; note that
$\alpha_i$'s alternate sign). We take $\tau = 2^{Q-1}\omega^*|\alpha_{Q}|$. 
Let $x\in [x^*-\tau,x^*+\tau]$. If the sign of $x-x^*$ agrees with the sign of $\alpha_Q$ we
let $k = \lfloor (x-x^*)/\alpha_Q\rfloor$ (note $0\leq k\leq 2^{Q-1}$) and take the tree ${\cal T}_{Q,k}$.
We have 
\begin{equation}\label{uuuup}
| R_{Q,k} - (x^* + \alpha_Q k) | \leq (1/100) |\alpha_Q|
\end{equation}
and hence 
\begin{equation}\label{uuuuq}
| R_{Q,k} - x | \leq 2 |\alpha_Q| \leq 2^{-Q+2} \tau \leq\delta\tau.
\end{equation}
If the sign of $x-x^*$ agrees with the sign of $\alpha_{Q-1}$ we
let $k = \lfloor (x-x^*)/\alpha_{Q-1}\rfloor$ (note $0\leq k\leq 2^{Q-1}$) and take 
the tree ${\cal T}_{Q-1,k}$. The argument is analogous to~\eqref{uuuup} and~\eqref{uuuuq}.
\end{proof}

\subsection{Proof of Lemmas~\ref{contr} and~\ref{lc}}\label{sec:contr}

\begin{proof}[Proof of Lemma~\ref{contr}]
We will consider the functions $\phi_i$ for $R$ and $R_i$ close to $x^*$.  The gradient of $\phi_i(R)$ (viewed as a function of $R$ and $R_i$) is
\begin{equation}\label{deri}
\left(\frac{\lambda R_i (\beta\gamma-1)}{(\beta +\lambda R R_i)^2}, \frac{\lambda R (\beta\gamma-1)}{(\beta +\lambda R R_i)^2}\right),
\end{equation}
so we let $R_i=R=x^*$ in~\eqref{deri} then we obtain that the gradient of $\phi_i$ at $(x^*,x^*)$ is
$(\omega^*,\omega^*)$, since
\begin{equation*}
\frac{\lambda x^* (\beta\gamma-1)}{(\beta +\lambda (x^*)^2)^2} = \omega^*.
\end{equation*}
We therefore obtain the following linear approximation (using Taylor's theorem) of $\phi_i$ around $x^*$. There exists $C$ such that for any sufficiently small $\tau>0$ the following is true. Suppose $R_i = x^*+y_i$ and $R = x^* + y$
where $|y_i|\leq\tau$ and $|y|\leq \tau$. Then
\begin{equation}\label{bou1}
| \phi_i(R) - (x^* + \omega^*(y_i + y) )| \leq C \tau^2
\end{equation}
and
\begin{equation}\label{bou2}
| \phi_i'(R) - \omega^* | \leq C |\tau|.
\end{equation}
We will choose $\tau>0$ small enough so that
\begin{equation}\label{tab}
\tau < \frac{|\omega^*|(1-|\omega^*|)}{100 C}.
\end{equation}
Recall that $|\omega^*|<1$ and hence~\eqref{bou2} implies that for every $\tau>0$ small enough
we have that for $R_i\in [x^*-\tau,x^*+\tau]$ we have that the map $\phi_i$ on the interval 
\begin{equation}\tag{\ref{idef2}}
I' = \Big[x^* - 2 \tau \frac{|\omega^*|}{1-|\omega^*|}, x^* + 2 \tau \frac{|\omega^*|}{1-|\omega^*|}\Big]
\end{equation} 
is uniformly contracting, using also Lemma~\ref{lem1}.  The bound on $\omega(R)$ follows from $\omega(x^*)<1$ (see Lemma~\ref{lem1}). This proves the first two inequalities in the lemma.

For the third inequality, consider $R_i\in [x^*-\tau,x^*+\tau]$ and let $R_i = x^* + y_i$, so that $|y_i|\leq\tau$.
Let also $R= x^* + y$; then, since $R\in I'$, we have $|y|\leq 2 \tau \frac{|\omega^*|}{1-|\omega^*|}$.

From~\eqref{bou1} we have
$$
|\phi_i(R) - x^*| \leq C \tau^2 + \left(\tau + 2 \tau \frac{|\omega^*|}{1-|\omega^*|}\right)|\omega^*|.
$$
To prove the lemma it is enough to show
\begin{equation}\label{mapr}
C \tau^2 + \left(\tau + 2 \tau \frac{|\omega^*|}{1-|\omega^*|}\right)|\omega^*| \leq 2 \tau \frac{|\omega^*|}{1-|\omega^*|}.
\end{equation}
Equation~\eqref{mapr} is equivalent to $C\tau^2 \leq 2\tau |\omega^*|$, which follows from our choice of $\tau$ in~\eqref{tab}.
\end{proof}

\begin{proof}[Proof of Lemma~\ref{lc}]
Let $x=(x_1+x_2)/(2\omega)$. Note that $x\in [x^*-\tau,x^*+\tau]$. Let $R_i$ be the effective field of the field gadget  guaranteed by Lemma~\ref{led},
that is, $|R_i - x|\leq \tau\delta$. We will show
\begin{equation}\label{eop}
\phi_i( x^* + |\omega|\tau/2) < x_1 \quad\mbox{and}\quad x_2 < \phi_i( x^* - |\omega|\tau/2).
\end{equation}
Using~\eqref{bou1} we have
\begin{align}
\phi_i (x^* + |\omega|\tau/2) - x_1 \leq x^* + \omega( (R_i-x^*) + |\omega|\tau/2 ) - x_1 + C|\tau|^2 \leq \\
x^* + |\omega| \tau\delta + \omega( (x-x^*) + |\omega|\tau/2 ) - x_1 + C|\tau|^2 \leq 2 |\omega| \tau\delta +
\omega |\omega|\tau/2 + C|\tau|^2 <0,
\end{align}
the last inequality is true for $\tau$ satisfying~\eqref{tab} and $\delta<|\omega|/100$  (the negative middle term ``overwhelms'' the remaining positive terms). The second inequality in~\eqref{eop} follows the same argument yielding
\[\phi_i (x^* - |\omega|\tau/2) - x_2 \geq - 2 |\omega| \tau\delta -
\omega |\omega|\tau/2 - C|\tau|^2 > 0.\qedhere\]
\end{proof}

\subsection{Proof of Lemma~\ref{case2}}\label{sec:case2}
In this section, we consider Case 2 and prove Lemma~\ref{case2}. We first argue that the maps ${\cal F}_{x,t}$ are getting ``flat'' as $t$ goes to infinity.

\begin{lemma}\label{shrink}
There exist constants $C<1,D<1$ and $C'>0,D'>0$ such that for any $x\in I$ any $t\geq 0$ and any $(f,g)\in{\cal F}_{x,t}$
we have
\begin{equation}\label{ee1}
\max_{R,R'\in I'} | f(R) - f(R') | \leq C' C^t.
\end{equation}
and
\begin{equation}\label{ee2}
\max_{R,R'\in I', M,M'\in J} | g(R,M) - g(R',M') |\leq D' D^t.
\end{equation}
\end{lemma}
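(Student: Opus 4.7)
}
The plan is to prove both inequalities by induction on $t$, using the structure of the recursive definition of ${\cal F}_{x,t}$ together with the contraction bounds of Lemma~\ref{contr} and the fact that $g(R,M)$ remains in $J$ throughout (Lemma~\ref{eq:TDEF}). Recall that an element $(f,g)\in\mathcal{F}_{x,t+1}$ is built from some $(f_0,g_0)\in\mathcal{F}_{\phi_i^{-1}(x),t}$ by
$f(R) = \phi_i(f_0(R))$ and $g(R,M) = \psi_i(f_0(R), g_0(R,M)) = 1 - \omega(\phi_i(f_0(R)))(g_0(R,M) + M_i - 1)$, with base case $f_0(R)=R$, $g_0(R,M)=M$.

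For \eqref{ee1}, the key point is that, unwinding the recursion, every $f\in\mathcal{F}_{x,t}$ is a composition of $t$ maps of the form $\phi_i$. Since each $\phi_i$ maps $I'$ into $I'$ and satisfies $|\phi_i'|\leq C_{\mathrm{max}}$ on $I'$ (Lemma~\ref{contr}), an easy induction shows $f(I')\subseteq I'$ and $|f(R)-f(R')|\leq C_{\mathrm{max}}^{\,t}|R-R'|\leq C_{\mathrm{max}}^{\,t}|I'|$. So \eqref{ee1} holds with $C = C_{\mathrm{max}}$ and $C' = |I'|$.

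For \eqref{ee2}, let $B_t$ denote the supremum in \eqref{ee2} over all $(f,g)\in\mathcal{F}_{x,t}$ and all $x\in I$. Using the expression for $g$ above, write
\begin{equation*}
g(R,M) - g(R',M') = -\bigl(\omega_R - \omega_{R'}\bigr)\bigl(g_0(R,M)+M_i-1\bigr) - \omega_{R'}\bigl(g_0(R,M) - g_0(R',M')\bigr),
\end{equation*}
where $\omega_R := \omega(\phi_i(f_0(R)))$. The second term is bounded in absolute value by $C_{\mathrm{max}} B_t$ using Lemma~\ref{contr}. For the first term, note that $|g_0(R,M)+M_i-1|$ is bounded by an absolute constant $K_1$ since $g_0(R,M)\in J$ (Lemma~\ref{eq:TDEF}) and the $M_i$ are bounded; moreover $\omega$ is smooth hence Lipschitz on $I'$ with some constant $L_\omega$, so by \eqref{ee1} applied to $f_0$ we get $|\omega_R-\omega_{R'}|\leq L_\omega C_{\mathrm{max}}|f_0(R)-f_0(R')|\leq L_\omega C_{\mathrm{max}}^{\,t+1}|I'|$. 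This yields the recurrence
\begin{equation*}
B_{t+1} \leq K\, C_{\mathrm{max}}^{\,t+1} + C_{\mathrm{max}}\, B_t, \qquad K := L_\omega |I'| K_1,
\end{equation*}
which together with the base case $B_0 \leq 2T$ unrolls to $B_t \leq C_{\mathrm{max}}^{\,t}(tK + 2T)$.

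Finally, picking any $D\in(C_{\mathrm{max}},1)$ (e.g.\ $D=\sqrt{C_{\mathrm{max}}}$), the quantity $t\cdot(C_{\mathrm{max}}/D)^{t}$ is bounded uniformly in $t$, so $B_t \leq D' D^t$ for a suitable constant $D' = D'(C_{\mathrm{max}},D,K,T)$, giving \eqref{ee2}. The only non-routine point is controlling the extra error term $(\omega_R-\omega_{R'})(g_0+M_i-1)$ produced at each step; this is absorbed by choosing the geometric rate $D$ slightly larger than $C_{\mathrm{max}}$, which turns the linear-in-$t$ prefactor into a harmless constant.
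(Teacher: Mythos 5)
Your proof is correct and follows essentially the same route as the paper: induct on $t$ via the recursion $(f,g)\mapsto(\phi_i\circ f_0,\ \psi_i(f_0,g_0))$, use the contraction bound on $\phi_i$ and the containment $\phi_i(I')\subseteq I'$ from Lemma~\ref{contr} to get the geometric decay for $f$, then split $g(R,M)-g(R',M')$ into a term controlled by the Lipschitz bound on $\omega$ and the boundedness of $J$ and $M_i$, plus a term $\omega\cdot(\hat g-\hat g)$ handled by induction. The only cosmetic difference is that you unroll the recurrence to $B_t\leq C_{\max}^{t}(tK+2T)$ and then absorb the linear prefactor into a slightly larger geometric rate $D$, whereas the paper builds $D>C_{\max}$ directly into the inductive hypothesis by choosing $D'=KC'D/(D-C_{\max})$; these are equivalent.
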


\begin{proof}
We will prove the lemma by induction on $t$. Consider $(f,g)\in {\cal F}_{x,t}$. Let $i$ be such that
$x\in\phi_i^{-1}(I)$ and $(\hat{f},\hat{g})\in {\cal F}_{\phi_i^{-1}(x),t-1}$ be such that
$$f(R) = \phi_i(\hat{f}(R))\quad\mbox{and}\quad g(R,M)=\psi_i(\hat{f}(R),\hat{g}(R,M)).$$
Note that $\hat{f}(R)\in I'$ and $\hat{f}(R')$ (see Lemma~\ref{contr}).
Recall that~$\phi_i$ is contracting on $I'$ (see Lemma~\ref{contr}), hence
\begin{equation}\label{ttty}
| f(R) - f(R') | = |\phi_i(\hat{f}(R)) - \phi_i(\hat{f}(R'))| \leq C | \hat{f}(R) - \hat{f}(R') | \leq C' C^{t}.
\end{equation}
We have
$$
g(R,M) = 1-\omega(f(R))\Big( \hat{g}(f(R),M) + M_i - 1\Big)\quad\mbox{and}\quad
g(R',M') = 1-\omega(f(R'))\Big( \hat{g}(f(R'),M') + M_i - 1\Big).
$$
Hence
\begin{equation}\label{uuu7}
\begin{aligned}
g(R,M)- g(R',M') &= \big(\omega(f(R'))-\omega(f(R))\big)\big(M_i-1 + \hat{g}(f(R'),M')\big)\\
&\hskip 3.5cm  + \omega(f(R)) \Big(\hat{g}(f(R'),M')-\hat{g}(f(R),M)\Big).
\end{aligned}
\end{equation}
We will use~\eqref{ttty} to bound the first term in the final expression of~\eqref{uuu7}; the second
term will be bounded by induction. Note that the function $\omega(x)$ on $I'$ has a bounded derivative. Further note that
$M_i$ and $\hat{g}(f(R'),M')$ are in $J'$. Hence there exists a constant $K$ such that
\begin{equation}\label{ooooq}
\|(\omega(f(R'))-\omega(f(R)))(M_i-1 + \hat{g}(f(R'),M'))\| \leq K |f(R')-f(R) \leq K C' C^t.
\end{equation}
We can now take $D>\max\{C,C_{\mathrm{max}}\}$ and $D' = K C' D/(D-C_{\mathrm{max}})$. Then we have
\[| g(R,M)- g(R',M') | \leq K C' C^t + Q D' D^{t-1} \leq D' D^t.\qedhere\]
\end{proof}

We can now prove Lemma~\ref{yyyqw}.

\begin{proof}[Proof of Lemma~\ref{yyyqw}]
The lemma follows from~\eqref{ee1} and the fact that $x\in f(I)$. 
\end{proof}

\begin{proof}[Proof of Lemma~\ref{case2}]
Let $L_x$ be the intersection of $g(I'\times J)$ for all $(f,g)\in {\cal F}_{x,t}$ for all $t\geq 0$.
By Helly's theorem $L_x$ is non-empty. By Lemma~\ref{shrink} we have that $L$ consists of a single point
$\ell$. Let $F(x)=\ell$. Lemma~\ref{shrink} also implies~\eqref{close} for every $R\in I'$ and every $M\in J$
and $t$ large enough.  We have the following observation
about the function $F$. If $\phi_i(y)=x$ and $x,y\in I$ then
\begin{equation}\label{eq4}
F(x) = 1 - \omega(x) ( F(y) + M_i - 1).
\end{equation}
To prove~\eqref{eq4} note that if we take $(f,g)\in{\cal F}_{y,t}$ then we have
$(f_1,g_1)\in{\cal F}_{x,t+1}$ where
$$g_1(R,M) = \psi_i(f(R), g(R,M)) = 1 -  \omega(\phi_i(f(R)))(g(R,M) + M_i - 1).$$
Note that as $t$ goes to infinity we have that $f(R)$ goes to $y$, $\phi_i(f(R))$
goes to $x$, $g(R,M)$ goes to $F(y)$ and $g_1(R,M)$ goes to $F(x)$ (for any $R\in I'$
and $M\in J$) and hence~\eqref{eq4} holds.

Now we argue continuity of $F$. We will show that for $x_1,x_2\in I$ we have for some $c\in (0,1)$
\begin{equation}\label{equ2}
|F(x_1) - F(x_2)|\leq  C |x_1 - x_2|^{c}.
\end{equation}
We will prove~\eqref{equ2} by induction on $\lfloor\log_{C_{\mathrm{max}}} |x_1-x_2| \rfloor$,
where $C_{\mathrm{max}}$ is the constant from Lemma~\ref{contr}. The base case is when $|x_1-x_2|\geq |\omega|\tau\delta/2$. Since $F(x_1),F(x_2)\in J$
we have
$$|F(x_1)-F(x_2)|\leq 2 T \leq C |\omega|\tau\delta/2 \leq C (|\omega|\tau\delta/2)^c \leq C |x_1-x_2|^c,$$
where we choose $C\geq 4T/(|\omega|\tau\delta)$.

Now we proceed to the induction step. Let $x_1,x_2$ be such that $|x_1-x_2|\leq |\omega|\tau\delta/2$.
By Lemma~\ref{lc} there exists $\phi_i$ such that both $x_1,x_2$ are in $\phi_i(I)$. Let $y_1=\phi_i^{-1}(x_1)$
and $y_2=\phi_i^{-1}(x_2)$. By Lemma~\ref{contr} we have
$$(1/C_{\mathrm{max}})|x_1 - x_2| \leq |y_1-y_2|\leq (1/C_{\mathrm{min}})|x_1 - x_2|.$$
Note that
$$\lfloor\log_{C_{\mathrm{max}}} |y_1-y_2| \rfloor < \lfloor\log_{C_{\mathrm{max}}} |x_1-x_2| \rfloor.$$
From equation~\eqref{eq4} we have
\begin{equation}\label{eq6}
F(x_1) - F(x_2) = (F(y_1) + M_i - 1 ) (\omega(x_2)-\omega(x_1)) - \omega(x_2) (F(y_1) - F(y_2)).
\end{equation}
Note that the function $\omega(x)$ on $I'$ has a bounded derivative. Further note that
$M_i$ and $F(y_1)$ are in $J'$. Hence there exists a constant $K$ such that
\begin{equation}\label{eq7}
| (F(y_1) + M_i - 1 ) (\omega(x_2)-\omega(x_1)) | \leq K |x_2 - x_1|.
\end{equation}
Hence
\begin{equation}\label{ttyyy}
|F(x_1) - F(x_2)|\leq K |x_2 - x_1| + C_{\mathrm{max}} C |y_2 - y_1|^c \leq
\left( K  + C \frac{C_{\mathrm{max}}}{C_{\mathrm{min}}^c}\right)  |x_2 - x_1|^c.
\end{equation}
We will choose $c\in (0,1)$ such that $C_{\mathrm{min}}^c > C_{\mathrm{max}}$ (such $c\in (0,1)$
exists since $C_{\mathrm{min}}, C_{\mathrm{max}} \in (0,1)$. We then choose $C$
such that $C (\frac{C_{\mathrm{max}}}{C_{\mathrm{min}}^c}-1) + K \leq 0$. For
our choice of $c\in (0,1)$ and $C$ we can bound~\eqref{ttyyy} by $C |x_2 - x_1|^c$
completing the induction step.
\end{proof}

\subsection{Obtaining an infinite sequence of pairs of field gadgets}\label{sec:vanilla}
In this section, we prove Lemma~\ref{lem:vanilla}.
\begin{proof}[Proof of Lemma~\ref{lem:vanilla}]
For $j=0$, we set $\Tc_{1,0}=\Tc_1,\Tc_{2,0}=\Tc_2$. For $j\geq 1$, we proceed inductively. For $i\in \{1,2\}$, we construct $\Tc_{i,j+1}$ by just combining $\Tc_{i,j}$ with the degenerate tree of  Example~\ref{eee} according to Lemma~\ref{lem1}; we then have that
\[R_{i,{j+1}}=\frac{1+\gamma\lambda R_{i,j}}{\beta+\lambda R_{i,j}}, \qquad M_{i,j+1}  = 1 - \omega(R_{i,j}) M_{i,j}.\]
Since $R_{1,j}=R_{2,j}$ and $M_{1,j}\neq M_{2,j}$, we therefore have that $R_{1,{j+1}}=R_{2,{j+1}}$ and $M_{1,j+1}\neq M_{2,j+1}$ as well. The function $f(x)=\frac{1+\gamma\lambda x}{\beta+\lambda x}$ has a unique fixpoint $x^*$ in the interval $(\beta,1/\gamma)$ and  for any starting point $x_0\in (\beta,1/\gamma)$ with $x_0\neq x^*$, repeated application of $f$ produces $x_i$'s that are getting strictly closer to $x^*$. It follows that the sequence $\{R_{1,j}\}$ constructed above, and hence $\{R_{2,j}\}$ as well, has pairwise distinct elements except for the case that $R_{1}=R_{2}=x^*$.  In this case, we let $\Tc_{i,0}$ to be the field gadget obtained  by combining two copies of  $\Tc_i$ according to Lemma~\ref{lem1}  and proceed as before. Note that whenever $x^*\neq 1$, we also have that $R_{i,0}\neq x^*$ as needed since \[R_{i,0} =\frac{1+\gamma\lambda (x^*)^2}{\beta+\lambda (x^*)^2 }\neq f(x^*)=x^*,\]
where the disequality holds because $x^*\neq 1$. We also have that $R_{1,0}=R_{2,0}$ and $M_{1,0}\neq M_{2,0}$ since, for $i\in \{1,2\}$, $M_{i,0}=1-\omega(R_{i,0})(2M_{i}-1)$ and $M_1\neq M_2$.   Finally, in the case that $x^*=1$, which corresponds to $\lambda=\frac{1-\beta}{1-\gamma}$ for $\beta,\gamma\in(0,1)$ and $\beta\neq \gamma$, we let $\Tc_{i,0}$ to be the field gadget obtained  by combining,  according to Lemma~\ref{lem1}, $\Tc_i$ and the field gadget $\Tc$ of Lemma~\ref{lem:triangle}. Then, we have that  $R_{i,0}\neq x^*$ as needed since \[R_{i,0} =\frac{1+\gamma\lambda R}{\beta+\lambda R }\neq 1,\]
since the effective field $R$ of $\Tc$ satisfies $R\neq 1$ by Lemma~\ref{lem:triangle}.  We also have that $R_{1,0}=R_{2,0}$ and $M_{1,0}\neq M_{2,0}$ since, for $i\in \{1,2\}$, $M_{i,0}=1-\omega(R_{i,0})(M_{i}+M-1)$ and $M_1\neq M_2$, where $M$ is the magnetization gap of $\Tc$. 

This finishes the proof of Lemma~\ref{lem:vanilla}.
\end{proof} 

\section{Proofs of the remaining inapproximability results}
With all the field-gadget constructions in place, we now give/finish the proofs of the remaining inapproximability results.

\subsection{The bipartite ``phase-gadget'' of Sly and Sun}\label{sec:SlySun}
Here, we state more precisely the properties of the bipartite phase-gadgets that we described in Section~\ref{sec:phase-gadget}. 
Recall that a graph $G\in \Gc^{\ell}_{n}$ is an almost-$\Delta$-regular bipartite graph with $n$ vertices on each side and $\ell$ ``ports'' of degree $\Delta-1$, 
which will be used to connect distinct copies of gadgets. Recall also that for $\sigma: U^\pl\cup U^\mi\rightarrow \{0,1\}$, we define the \emph{phase} $\Yc(\sigma)$ of the configuration $\sigma$ as the side which has the most occupied vertices under $\sigma$, i.e., 
\[\Yc(\sigma)=\begin{cases} \pl & \mbox{ if } |\sigma_{U^\pl}|\geq |\sigma_{U^\mi}|,\\ \mi& \mbox{otherwise}. \end{cases}\]

\begin{lemma}[{\cite[Section 4.1]{SlySun}}]\label{lem:SlySun}
Let $\Delta\geq 3$ and $(\beta,\gamma,\lambda)\in \Uc_{\Delta}$. There exist real numbers $q^\pl, q^\mi\in (0,1)$ with $q^\mi<q^\pl$ such that the following hold for every integer $\ell\geq 0$, real $\eps>0$, and all sufficiently large integers $n$. There exists a graph $G\in \Gc^{\ell}_{n}$ such that the Gibbs distribution $\mu=\mu_{G;\beta,\gamma, \lambda}$ satisfies the following:
\begin{enumerate}
\item \label{it:balancedphases} $\tfrac{1-\eps}{2}\leq\mu(\Yc(\sigma)=\plm)\leq \tfrac{1+\eps}{2}$.
\item \label{it:approxindep}for every $\tau: W\rightarrow \{0,1\}$, we have that $(1-\eps)Q_W^{\plm}(\tau)\leq \mu(\sigma_W=\tau\mid \Yc(\sigma)=\plm)\leq (1+ \eps)Q_W^{\plm}(\tau)$, where $Q^\pl_W(\cdot), Q^\mi_W(\cdot)$ are product distributiond defined by 
\begin{equation}\label{eq:product}
Q^\plm_W(\tau)= (q^\plm)^{|\tau^{-1}(1) \cap W^\pl|}(1-q^\plm)^{|\tau^{-1}(0) \cap W^\pl|}(q^\mip)^{|\tau^{-1}(1) \cap W^\mi|}(1-q^\mip)^{|\tau^{-1}(0) \cap W^\mi|}.
\end{equation}
\end{enumerate}
\end{lemma}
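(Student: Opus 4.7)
The plan is to invoke the second-moment machinery of Sly and Sun~\cite{SlySun} (see also~\cite{GSV:colorings,GSVIsing}) essentially verbatim, verifying only that it accommodates the added port structure. I would sample $G$ from the configuration model on $\Delta$-regular $(2n)$-vertex bipartite graphs and then delete a uniformly random matching of size $\ell$ to obtain a graph in $\Gc^\ell_n$; since $\ell$ is fixed as $n\to\infty$, this affects only a vanishing fraction of edges and changes the partition function by a factor bounded independently of $n$, so the first- and second-moment calculations of~\cite{SlySun} apply essentially without change.

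The condition $(\beta,\gamma,\lambda)\in \Uc_\Delta$ is exactly the statement that the symmetric tree-recursion fixpoint $x^*$ is unstable, which by standard arguments yields a pair of asymmetric period-two fixpoints $(x^\pl,x^\mi)$ with $x^\pl\neq x^\mi$. I would define $q^\plm$ as the occupation marginal at the root of the $\Delta$-regular tree conditioned to lie in the $\plm$ semi-translation-invariant Gibbs measure; the asymmetry $x^\pl\neq x^\mi$ translates to $q^\pl\neq q^\mi$, and we relabel so that $q^\mi<q^\pl$. The Sly--Sun first-moment analysis then shows that $\Eb[Z_G]$ is dominated by two phase-class configurations corresponding to these fixpoints, and by the side-swapping symmetry of the bipartite ensemble $\Eb[Z^\pl]=\Eb[Z^\mi]=(1\pm o(1))\tfrac{1}{2}\Eb[Z_G]$. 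Their second-moment bound shows $Z^\pl/\Eb[Z^\pl]$ and $Z^\mi/\Eb[Z^\mi]$ concentrate around $1$, and Markov's inequality then extracts a deterministic $G\in\Gc^\ell_n$ satisfying item~\ref{it:balancedphases}.

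For item~\ref{it:approxindep}, I would exploit that in a random $\Delta$-regular bipartite graph the $\ell=|W|$ port vertices have pairwise disjoint tree-like neighbourhoods of radius $\Omega(\log n)$ with high probability. Conditional on the phase $\Yc(\sigma)=\plm$, the restricted Gibbs measure inherits exponential correlation decay from the contracting second iterate of the tree recursion near the $(x^\pl,x^\mi)$ fixpoint pair, so the port marginals approximately decouple, and on each port the local recursion — which terminates at a vertex of degree $\Delta-1$ — yields marginal occupation probability $q^\pl$ on the $W^\pl$ side and $q^\mi$ on the $W^\mi$ side under phase $\pl$ (and the reflected values under phase $\mi$). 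This is precisely the product form $Q^\plm_W$ in~\eqref{eq:product}, up to the claimed $(1\pm\eps)$ multiplicative error.

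The main technical obstacle is the second-moment estimate itself: one must show $\mathrm{Var}(Z^\plm)\ll (\Eb Z^\plm)^2$ by ruling out interference from ``mixed'' configuration pairs that are neither in the $\pl$ nor the $\mi$ phase. This is the principal contribution of~\cite{SlySun} and requires no modification here. The only routine adaptation is to check that conditioning on the reduced-degree ports is compatible with phase correlation decay, which follows from the standard fact that truncating the tree recursion at depth one (producing a boundary node of effective degree $\Delta-1$) remains in the contracting basin of the corresponding phase fixpoint.
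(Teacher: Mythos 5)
The paper does not actually prove this lemma; it imports it verbatim from Sly--Sun~\cite{SlySun} (with \cite{GSV:colorings,GSVIsing} as related references), and your sketch is a faithful high-level reconstruction of that external second-moment argument, including the routine adaptation needed to accommodate the $\ell$ ports of reduced degree. You are following exactly the route the paper defers to, so no further comparison is warranted.
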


\begin{remark}\label{rem:gadgetcompute}
We remark that for any fixed $\eps>0$ and $\ell\geq 1$, we can find $n$ and $G\in \Gc^{\ell}_{n}$ satisfying Items~\ref{it:balancedphases} and~\ref{it:approxindep} by brute force in constant time. Namely, the  values of $q^\pl,q^\mi$ can be obtained by solving the system of equations $x=\frac{1}{\lambda}\left(\frac{ \beta y+1}{y+\gamma}\right)^{\Delta-1},y=\frac{1}{\lambda}\left(\frac{ \beta x+1}{x+\gamma}\right)^{\Delta-1}$ for $x,y>0$; it can be shown that for $(\beta,\gamma,\lambda)\in \Uc_{\Delta}$ this system has a unique solution with $x<y$ (see, e.g., \cite[Lemma 7]{GSVIsing}, or \cite[Section 6.2]{MSW07}). Using these values, for any constants $n,\ell$ and any graph $G\in \Gc^{\ell}_{n}$ we can check whether Items~\ref{it:balancedphases} and~\ref{it:approxindep} hold in constant time, and therefore find $G$ satisfying Lemma~\ref{lem:SlySun}.
\end{remark}

\subsection{Preliminaries}
Let $G=(V,E)$ be a graph and $(\beta,\gamma)$ be antiferromagnetic. It will sometimes be convenient to consider models where instead of a uniform field $\lambda$ over vertices of $G$, we allow vertices to have their ``own'' field. More precisely, for a field vector $\lambdab=\{\lambda_v\}_{v\in V}$, we define 
\begin{equation}\label{eq:multi}
\mu_{G;\beta,\gamma,\lambdab}(\sigma)=\frac{w_{G;\beta,\gamma,\lambdab}(\sigma)}{Z_{G;\beta,\gamma,\lambdab}}, \mbox{ where \ \  }w_{G;\beta,\gamma,\lambdab}(\sigma):=\beta^{m_{0}(\sigma)}\gamma^{m_{1}(\sigma)}\prod_{v\in V;\, \sigma(v)=1} \lambda_v.
\end{equation}

\begin{lemma}\label{lem:perturb}
Let $(\beta,\gamma)$ be antiferromagnetic and $\lambda,\lambda_1,\lambda_2>0$. Let $G=(V,E)$ be a graph and $S\subseteq V$. For $i\in \{1,2\}$, let $\lambdab_i$ be the field vector on $V$, where every $v\in S$ has field $\lambda_i$, whereas every $v\in V\backslash S$ has field $\lambda$. Let $\mu_i$ be the Gibbs distributions on $G$, respectively,  with parameters $\beta,\gamma, \lambdab_i$. Then, for every $v\in V$,  it holds that 
\[\big|\Eb_{\sigma\sim \mu_2}[\sigma(v)]-\Eb_{\sigma\sim \mu_1}[\sigma(v)]\big|\leq 2|S|\, \Big|\frac{\lambda_2}{\lambda_1}-1\Big|.\]
\end{lemma}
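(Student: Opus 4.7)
My plan is to reduce to the case of a single-vertex field change via a telescoping argument. I would enumerate $S=\{s_1,\ldots,s_K\}$ with $K=|S|$, and for $j=0,1,\ldots,K$ let $\lambdab^{(j)}$ be the field vector that assigns $\lambda_2$ at $s_1,\ldots,s_j$, $\lambda_1$ at $s_{j+1},\ldots,s_K$, and $\lambda$ at every vertex of $V\setminus S$, so that $\lambdab^{(0)}=\lambdab_1$ and $\lambdab^{(K)}=\lambdab_2$; write $\mu^{(j)}$ for the corresponding Gibbs distribution. The triangle inequality gives
\[
\big|\Eb_{\mu_2}[\sigma(v)]-\Eb_{\mu_1}[\sigma(v)]\big|\leq \sum_{j=1}^K \big|\Eb_{\mu^{(j)}}[\sigma(v)]-\Eb_{\mu^{(j-1)}}[\sigma(v)]\big|,
\]
so it will be enough to bound each summand by $2|r-1|$, where $r:=\lambda_2/\lambda_1$.

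For the single-vertex comparison, I would fix $j$ and set $s:=s_j$, $\mu:=\mu^{(j-1)}$, $\mu':=\mu^{(j)}$; the two measures differ only in the field at $s$, which passes from $\lambda_1$ to $\lambda_2$. The key observation is that the conditional distribution of $\sigma$ given $\sigma(s)$ is the same under $\mu$ and $\mu'$, since the weight of any configuration factors as $\lambda_s^{\sigma(s)}$ times a quantity not involving $\lambda_s$, and this factor is absorbed into the conditional normalization. Writing $p, p'$ for $\mu(\sigma(s)=1), \mu'(\sigma(s)=1)$ and $q, q'$ for the common conditional expectations $\Eb[\sigma(v)\mid \sigma(s)=1], \Eb[\sigma(v)\mid \sigma(s)=0]$, this gives
\[
\Eb_{\mu'}[\sigma(v)]-\Eb_{\mu}[\sigma(v)]=(p'-p)(q-q'),
\]
so since $|q-q'|\leq 1$ it remains to establish $|p'-p|\leq 2|r-1|$.

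The final step will be to bound $|p'-p|$ explicitly. Because only the field at $s$ is altered, the odds $\mu(\sigma(s)=1)/\mu(\sigma(s)=0)$ are scaled by exactly $r$ when passing to $\mu'$; solving for $p'$ yields
\[
p'-p=\frac{(r-1)\,p(1-p)}{1+(r-1)p}.
\]
I would finish with a three-case analysis: if $|r-1|\geq 1/2$ then the trivial bound $|p'-p|\leq 1\leq 2|r-1|$ suffices; if $r\in[1,3/2]$ the denominator is $\geq 1$, giving $|p'-p|\leq r-1$; and if $r\in[1/2,1)$ the denominator equals $1-(1-r)p\geq r\geq 1/2$, giving $|p'-p|\leq 2(1-r)$. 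Summing the $K$ single-vertex bounds proves the lemma.

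I do not expect a genuine obstacle here. The only point requiring care is the final case analysis, which must treat $r>1$ and $r<1$ asymmetrically: the naive bound $|p'-p|\leq|r-1|$ breaks down for small $r$, and it is precisely this asymmetry that forces the constant $2$ in the claimed inequality.
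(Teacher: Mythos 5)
Your proof is correct, and it takes a genuinely different route from the paper's. The paper argues directly at the level of partition sums: assuming WLOG $\lambda_1\geq\lambda_2$, it writes $\Eb_{\mu_i}[\sigma(v)]=\Zin_{G,v}(\lambdab_i)/Z_G(\lambdab_i)$, expands the difference as
\[
\Eb_{\mu_1}[\sigma(v)]-\Eb_{\mu_2}[\sigma(v)]=\frac{\Zin_{G,v}(\lambdab_1)-\Zin_{G,v}(\lambdab_2)}{Z_{G}(\lambdab_1)}+\frac{\Zin_{G,v}(\lambdab_2)}{Z_{G}(\lambdab_2)}\cdot\frac{Z_G(\lambdab_2)-Z_{G}(\lambdab_1)}{Z_{G}(\lambdab_1)},
\]
and bounds each term via the pointwise inequality $w_{\lambdab_1}(\sigma)-w_{\lambdab_2}(\sigma)\leq |S|\,(1-\lambda_2/\lambda_1)\,w_{\lambdab_1}(\sigma)$, which in turn comes from $1-(\lambda_2/\lambda_1)^{k}\leq k(1-\lambda_2/\lambda_1)$ for $k=|\sigma_S|\leq |S|$. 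Your argument instead telescopes over single-site field changes and, at each step, uses the exact probabilistic identity $\Eb_{\mu'}[\sigma(v)]-\Eb_\mu[\sigma(v)]=(p'-p)(q-q')$ (valid because conditioning on $\sigma(s)$ cancels the field at $s$) together with the closed form $p'-p=\tfrac{(r-1)p(1-p)}{1+(r-1)p}$. This makes the constant $2$ conceptually transparent --- it comes from the asymmetry between $r>1$ and $r<1$ in the last case analysis --- whereas in the paper's proof the factor $2$ arises less suggestively from the triangle inequality across the two-term decomposition. Both proofs are complete and of comparable length; yours is perhaps more illuminating about where each piece of the bound comes from, while the paper's avoids the telescoping bookkeeping.
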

\begin{proof}
Assume first that $\lambda_1\geq \lambda_2$. For $v\in V$, let 
\[\Zin_{G,v}(\lambdab_i):= \sum_{\sigma:V\rightarrow \{0,1\}; \, \sigma(v)=1} w_{G;\beta,\gamma,\lambdab_i}(\sigma),\quad Z_{G}(\lambdab_i):= \Zin_{G;\beta,\gamma,\lambdab_i},\]
so that $\Eb_{\sigma\sim \mu_i}[\sigma(v)]=\frac{\Zin_{G,v}(\lambdab_i)}{Z_{G}(\lambdab_i)}$. Therefore,
\begin{align}\label{eq:4tg4ggteewe}
\Eb_{\sigma\sim \mu_1}[\sigma(v)]-\Eb_{\sigma\sim \mu_2}[\sigma(v)]=\frac{\Zin_{G,v}(\lambdab_1)-\Zin_{G,v}(\lambdab_2)}{Z_{G}(\lambdab_1)}+\frac{\Zin_{G,v}(\lambdab_2)}{Z_{G}(\lambdab_2)}\frac{Z_G(\lambdab_2)-Z_{G}(\lambdab_1)}{Z_{G}(\lambdab_1)}
\end{align}
For arbitrary $\sigma: V\rightarrow \{0,1\}$, we have that
\[\frac{w_{G;\beta,\gamma,\lambdab_1}(\sigma)-w_{G;\beta,\gamma,\lambdab_2}(\sigma)}{w_{G;\beta,\gamma,\lambdab_1}(\sigma)}=\frac{\lambda_1^{|\sigma_S|}- \lambda_2^{|\sigma_S|}}{\lambda_1^{|\sigma_S|}}\leq |S|\Big(1-\frac{\lambda_2}{\lambda_1}\Big).\]
From this, it follows that  
\[\Big|\frac{\Zin_{G,v}(\lambdab_1)-\Zin_{G,v}(\lambdab_2)}{Z_{G}(\lambdab_1)}\Big|\leq |S|\Big(1-\frac{\lambda_2}{\lambda_1}\Big), \quad \Big|\frac{Z_G(\lambdab_2)-Z_{G}(\lambdab_1)}{Z_{G}(\lambdab_1)}\Big|\leq |S|\Big(1-\frac{\lambda_2}{\lambda_1}\Big).\]
Hence, by taking absolute values in \eqref{eq:4tg4ggteewe} and using $\frac{\Zin_{G,v}(\lambdab_2)}{Z_{G}(\lambdab_2)}\leq 1$ and the triangle inequality, we obtain
\[\big|\Eb_{\sigma\sim \mu_1}[\sigma(v)]-\Eb_{\sigma\sim \mu_2}[\sigma(v)]\big|\leq 2|S| \Big(1-\frac{\lambda_2}{\lambda_1}\Big),\]
finishing the proof for $\lambda_1\geq \lambda_2$. The case $\lambda_1<\lambda_2$ follows from the previous case since
\[\big|\Eb_{\sigma\sim \mu_1}[\sigma(v)]-\Eb_{\sigma\sim \mu_2}[\sigma(v)]\big|\leq 2|S| \Big(1-\frac{\lambda_1}{\lambda_2}\Big)\leq 2|S| \Big(\frac{\lambda_2}{\lambda_1}-1\Big).\qedhere\]
\end{proof}

\subsection{Proof of Lemma~\ref{lem:maingadget}}\label{sec:maingadget}
We start with the proof of Lemma~\ref{lem:maingadget}.
\begin{proof}[Proof of Lemma~\ref{lem:maingadget}]
We first prove the bounds on $\Mc_{\beta,\gamma,\lambda}(H^{k}_{G,\Tc})$. 

For an edge $e=(u,v)\in E_H$ and $i\in [k]$, let $P^{i,\pl}_{e},P^{i,\mi}_{e}$ be the $i$-th paths connecting the $\pl/\pl$ and $\mi/\mi$ sides of the gadgets $G_u,G_v$, respectively. Let the (two) internal vertices of the path $P^{i,\plm}_{e}$ be $t^{i,\plm}_{e,1}$ and $t^{i,\plm}_{e,2}$. For $j=1,2$ denote by $\Tc^{i,\plm}_{e,j}$ the copy of $\Tc$ that was appended  to $t^{i,\plm}_{e,j}$. 

Note that the vertex set of $H^{k}_{G,\Tc}$ is the disjoint union of $V(\widehat{H}^{k}_{G})$  and $V(\Tc^{i,s}_{e,j})$ for $e\in E(H), s\in\{\pl,\mi\}, i\in [k], j\in \{1,2\}$, so we have 
\begin{equation}\label{eq:MHkGT}
\Mc_{\beta,\gamma,\lambda}(H^{k}_{G,\Tc})=\Eb_{\sigma\sim \mu}\big[\big|\sigma_{V(\widehat{H}^{k}_{G})}\big|\big]+\sum_{e\in E(H)}\sum_{s\in\{\pl,\mi\}}\sum_{i\in [k]}\sum_{j\in \{1,2\}}\Eb_{\sigma\sim \mu}\Big[\big|\sigma_{V(\Tc^{i,s}_{e,j})}\big|\Big].
\end{equation}
Fix arbitrary $e\in E(H), s\in\{\pl,\mi\}, i\in [k], j\in \{1,2\}$, and for convenience let $\mu^{i,s}_{e,j}$ denote the Gibbs distribution on $\Tc^{i,s}_{e,j}$ with parameters $\beta,\gamma,\lambda$. Note that,  conditioned on the spin of $t^{i,s}_{e,j}$, the distribution $\mu$ factorizes; more precisely, for $\tau: V(\Tc^{i,s}_{e,j})\rightarrow \{0,1\}$ and $w\in \{0,1\}$, we have
\[\mu\Big(\sigma_{V(\Tc^{i,s}_{e,j})}=\tau\mid \sigma(t^{i,s}_{e,j})=w\Big)=\mu^{i,s}_{e,j}(\tau\mid \tau(t^{i,s}_{e,j})=w).\] 
Using that the magnetization gap of $\Tc^{i,s}_{e,j}$ is $M$, we therefore obtain that 
\begin{align*}
\Eb_{\sigma\sim \mu}\Big[\big|\sigma_{V(\Tc^{i,s}_{e,j})}\big|\Big]&=\sum_{w\in\{0,1\}}\mu\big(\sigma(t^{i,s}_{e,j}\big)=w)\, \Eb_{\tau\sim \mu^{i,s}_{e,j}}\big[\,|\tau|\mid \tau\big(t^{i,s}_{e,j}\big)=w\big]\\
&=M\,\Eb_{\sigma\sim \mu}[\sigma(t^{i,s}_{e,j})]+\Ac',
\end{align*}
where the last equality follows by writing $\mu\big(\sigma(t^{i,s}_{e,j})=0\big)=1-\mu\big(\sigma(t^{i,s}_{e,j})=1\big)=1-\Eb_{\sigma\sim \mu}[\sigma(t^{i,s}_{e,j})]$ and noting that $\Ac'=\Eb_{\tau\sim \mu^{i,s}_{e,j}}[\,|\tau|\mid \tau\big(t^{i,s}_{e,j}\big)=0]$.
Plugging into \eqref{eq:MHkGT}, we obtain
\begin{equation}\label{eq:4g4t4g56yttere5}
\Mc_{\beta,\gamma,\lambda}(H^{k}_{G,\Tc})=4k\Ac'|E(H)|+\Eb_{\sigma\sim \mu}\big[\big|\sigma_{V(\widehat{H}^{k}_{G})}\big|\big]+M \sum_{e\in E(H)}\sum_{s\in\{\pl,\mi\}}\sum_{i\in [k]}M^{i,s}_{e},
\end{equation}
where, for $e\in E(H)$, $i\in [k]$ and $s\in \{\pl,\mi\}$,
\[M^{i,s}_{e}:=\Eb_{\sigma\sim \mu}[\sigma(t^{i,s}_{e,1})+\sigma(t^{i,s}_{e,2})].\]
We can expand $M^{i,s}_{e}$ according to the configuration  on the ports $W^{i,s}_e:= \{w^{i,s}_{u},w^{i,s}_{v}\}$,   i.e, 
\begin{equation}\label{eq:4tg5tb5yhyhnh}
M^{i,s}_{e}=\sum_{\tau: W^{i,s}_e\rightarrow \{0,1\}} \mu\big(\sigma_{W^{i,s}_e}=\tau\big) \Eb_{\sigma\sim \mu}[\sigma(t^{i,s}_{e,1})+\sigma(t^{i,s}_{e,2})\mid \sigma_{W^{i,s}_e}=\tau].
\end{equation}
For convenience, let $\overline{W}^{i,s}_e$ be the set $(W_u\cup W_v)\backslash W^{i,s}_e$, i.e., the set of all ports in the gadgets $G_u,G_v$ other than those in $W^{i,s}_e$. We have 
\begin{equation}\label{eq:rtb5b5hh5}
\mu\big(\sigma_{W^{i,s}_e}=\tau\big)=\sum_{\substack{Y: V(H)\rightarrow \{\pl,\mi\};\\\eta:\overline{W}^{i,s}_e\rightarrow \{0,1\}}}\mu\big(\sigma_{W^{i,s}_e}=\tau\mid \widehat{\Yc}(\sigma)=Y,\sigma_{\overline{W}^{i,s}_e}=\eta\big)\mu\big(\widehat{\Yc}(\sigma)=Y,\sigma_{\overline{W}^{i,s}_e}=\eta\big).
\end{equation}
Now, observe that $\overline{W}^{i,s}_e$ disconnects the vertices in $U_u\cup U_v$ from the rest of the graph $H^{k}_{G,\Tc}$, so, conditioned on $\eta:\overline{W}^{i,s}_e\rightarrow \{0,1\}$, the configuration on $W^{i,s}_e$ is distributed according to $\mu_e$, where $\mu_e$ is the Gibbs distribution with parameters $(\beta,\gamma,\lambda)$ on the subgraph of $H^{k}_{G,\Tc}$ induced by $U_u\cup U_v$. Therefore, for any $Y: V(H)\rightarrow \{\pl,\mi\}$ and $\eta:\overline{W}^{i,s}_e\rightarrow \{0,1\}$, we have that
\begin{equation}\label{eq:6N6J8R}
\mu\big(\sigma_{W^{i,s}_e}=\tau\mid \widehat{\Yc}(\sigma)=Y,\sigma_{\overline{W}^{i,s}_e}=\eta\big)=\mu_e\big(\sigma_{W^{i,s}_e}=\tau\mid \Yc(\sigma_{U_u})=Y_u, \Yc(\sigma_{U_v})=Y_v,\sigma_{\overline{W}^{i,s}_e}=\eta).
\end{equation}
Since $G_u,G_v$ are copies of $G$, and $G$ satisfies Lemma~\ref{lem:SlySun}, by Item~\ref{it:approxindep} we obtain that $\mu_e\big(\sigma_{W^{i,\plm}_e}=\tau\mid \Yc(\sigma_{U_u})=Y_u, \Yc(\sigma_{U_v})=Y_v,\sigma_{\overline{W}^{i,s}_e}=\eta)$ is approximately independent of $\eta$ and  within a factor of $(1\pm 8\epsilon)$ from the Gibbs distribution on $P^{i,\plm}_e$ with field vector $\lambdab_{\plm}(Y_u,Y_v)$ given by 
\[\lambda_{w^{i,\plm}_u}=\tfrac{q^{\plm Y_u}}{1-q^{\plm Y_u}},\quad \lambda_{t^{i,\plm}_{e,1}}=\lambda_{t^{i,\plm}_{e,2}}=\lambda R,\quad \lambda_{w^{i,\plm}_u}=\tfrac{q^{\plm Y_v}}{1-q^{\plm Y_v}}, \mbox{ with the notation  $\plm \plm\equiv\plm$ and $\plm \mip\equiv\mip$}.\] 
More precisely, we have 
\begin{equation}\label{eq:tb5b5h5646}
\mu_e\big(\sigma_{W^{i,s}_e}=\tau\mid \Yc(\sigma_{U_u})=Y_u, \Yc(\sigma_{U_v})=Y_v,\sigma_{\overline{W}^{i,s}_e}=\eta)=(1\pm 8\epsilon)\mu_{P^{i,s}_e; \beta,\gamma,\lambdab_{s}(Y_u,Y_v)}(\sigma_{W^{i,\plm}_e}=\tau).
\end{equation}
Combining \eqref{eq:6N6J8R} and \eqref{eq:tb5b5h5646}, and plugging back into \eqref{eq:rtb5b5hh5}, we obtain
\[\mu\big(\sigma_{W^{i,s}_e}=\tau\big)=(1\pm 8\epsilon)\sum_{Y: V(H)\rightarrow \{\pl,\mi\}}\mu\big(\widehat{\Yc}(\sigma)=Y\big)\, \mu_{P^{i,s}_e; \beta,\gamma,\lambdab_{s}(Y_u,Y_v)}(\sigma_{W^{i,s}_e}=\tau)\]
In turn, plugging this back into \eqref{eq:4tg5tb5yhyhnh} yields that
\begin{equation}\label{eq:45g54trexxa}
M^{i,s}_{e}=(1\pm 8\epsilon)\sum_{Y: V(H)\rightarrow \{\pl,\mi\}}\mu\big(\widehat{\Yc}(\sigma)=Y\big)\Eb_{\sigma\sim \mu_{P^{i,s}_e; \beta,\gamma,\lambdab_{s}(Y_u,Y_v)}}[\sigma(t^{i,s}_{e,1})+\sigma(t^{i,s}_{e,2})].
\end{equation}

For $s_1,s_2,s_3\in\{\pl,\mi\}$, let
\[A^{s_1}_{s_2s_3}:=\Eb_{\sigma\sim \mu_{P^{i,s_1}_e; \beta,\gamma,\lambdab_{s}(s_2,s_3)}}[\sigma(t^{i,s_1}_{e,1})+\sigma(t^{i,s_1}_{e,2})].\]
To give expressions for $A^{s_1}_{s_2s_3}$, it will be convenient to consider the following vectors/matrices:
\begin{equation}\label{eq:4t4vt4ggyh}
\qb^{\plm}:=\Big[\begin{smallmatrix} 1-q^\plm\\ q^\plm  \end{smallmatrix}\Big], \quad \Lb(x):=\Big[\begin{smallmatrix} \beta^3+2\beta x+\gamma x^2 & \beta^2 + x(1 + \beta \gamma) + \gamma^2 x^2 \\
\beta^2 + x(1 + \beta \gamma) + \gamma^2 x^2  & \beta + 2\gamma x+ \gamma^3 x^2 \end{smallmatrix}\Big], \quad \begin{array}{c} f_{\plm \plm}(x):=(\qb^{\plm})^{\T}\, \Lb(x) \qb^{\plm}\\ f_{\plm \mip}(x):=(\qb^{\plm})^{\T}\, \Lb(x) \qb^{\mip}\end{array}.
\end{equation}
where $q^{\pl},q^{\mi}$ are the constants in Lemma~\ref{lem:SlySun}, and recall that $q^{\pl}\neq q^{\mi}$. Roughly, the matrix $\Lb(x)$ corresponds to the interaction matrix between the endpoints of a path of length 4, whose vertices have vertex fields $1,x,x,1$ respectively. Then, the functions $f_{\plm \plm}(x)$ capture the total weight of configurations on the path, when the endpoints are occupied with probabilities $q^{\plm},q^{\plm}$, respectively; similarly for the functions  $f_{\plm \mip}(x)$. The magnetization of the two middle vertices is given by the log-derivative with respect to $x$ and multiplying by $x$ (this formula can be verified using the definition of magnetization), yielding the following expressions for the $A$'s: 
\begin{gather*}
A^{\pl}_{\pl\pl}=A^{\mi}_{\mi\mi}=\lambda R\cdot \tfrac{f_{\pl \pl}'(\lambda R)}{f_{\pl \pl}(\lambda R)},\quad  A^{\mi}_{\pl\pl}=A^{\pl}_{\mi\mi}=\lambda R\cdot \tfrac{f_{\mi \mi}'(\lambda R)}{f_{\mi \mi}(\lambda R)}, \\ 
A^{\pl}_{\pl\mi}=A^{\pl}_{\mi\pl}=A^{\mi}_{\mi\pl}= A^{\mi}_{\pl\mi}=\lambda R\cdot\tfrac{f_{\pl \mi}'(\lambda R)}{f_{\pl \mi}(\lambda R)}=\lambda R\cdot \tfrac{f_{\mi \pl}'(\lambda R)}{f_{\mi\pl}(\lambda R)}.
\end{gather*}
Now, let's fix a phase vector $Y:V(H)\rightarrow \{\pl,\mi\}$, and compute 
\begin{equation}\label{eq:34f5646y}
\sum_{e=(u,v)\in E(H)}\sum_{s\in\{\pl,\mi\}}\sum_{i\in [k]}\Eb_{\sigma\sim \mu_{P^{i,s}_e; \beta,\gamma,\lambdab_{s}(Y_u,Y_v)}}[\sigma(t^{i,s}_{e,1})+\sigma(t^{i,s}_{e,2})].
\end{equation}
where the functions $f_{\plm\plm},f_{\plm \mip}$ are as in \eqref{eq:4t4vt4ggyh}.
We have that the contribution from an edge $e=(u,v)$ with $Y_u\neq Y_v$ is $kB$, while the contribution from an edge $e=(u,v)$ with $Y_u= Y_v$ is $kC$, where 
\begin{align*}B&:=A^{\pl}_{\pl\mi}+A^{\mi}_{\pl\mi}=A^{\pl}_{\mi\pl}+A^{\mi}_{\mi\pl}=\lambda R\Big(\tfrac{f_{\pl \mi}'(\lambda R)}{f_{\pl \mi}(\lambda R)}+\tfrac{f_{\mi \pl}'(\lambda R)}{f_{\mi\pl}(\lambda R)}\Big),\\
 C&:=A^{\pl}_{\pl\pl}+A^{\mi}_{\pl\pl}=A^{\mi}_{\mi\mi}+A^{\pl}_{\mi\mi}=\lambda R\Big(\tfrac{f_{\pl \pl}'(\lambda R)}{f_{\pl \pl}(\lambda R)}+\tfrac{f_{\mi \mi}'(\lambda R)}{f_{\mi \mi}(\lambda R)}\Big).
\end{align*}
It follows that the value of the sum in \eqref{eq:34f5646y} is
\begin{equation}\label{eq:4tt6gtgrrerrv}
k B\, \Cut_H(Y)+k C\big(|E(H)|- \Cut_H(Y)\big)=k (B-C)\Cut_H(Y)+kC |E(H)|.
\end{equation}
Combining \eqref{eq:45g54trexxa} with \eqref{eq:4g4t4g56yttere5} and using the value of the sum in \eqref{eq:34f5646y} as obtained in \eqref{eq:4tt6gtgrrerrv},  finishes the proof of the bounds on $\Mc_{\beta,\gamma,\lambda}(H^{k}_{G,\Tc})$.

We next prove the bounds on $\AvgCut_{\mu}(H)$. The upper bound is trivial, so we focus on the lower bound, which is along the same lines as in \cite[Lemma 4.2]{SlySun}. Let $\hat{\mu}$ be the Gibbs distibution on $\widehat{H}^k_G$ and note that this is a product distribution over the gadgets. For convenience, let $\Lb=\{L_{ij}\}_{i,j\in\{0,1\}}$ denote the (entries of the) matrix $\Lb(\lambda R)$. For phase configurations $Y:V(H)\rightarrow \{\pl,\mi\}$, we have that 
\[\mu(\hat{\Yc}(\sigma)=Y)\propto \hat{\mu}(\hat{\Yc}(\sigma)=Y)\sum_{\tau:W\rightarrow \{0,1\}}\hat{\mu}(\sigma_W=\tau\mid \hat{\Yc}(\sigma)=Y)\prod_{e=(u,v)\in E(H)}\prod_{s\in \{\pl, \mi\}}\prod_{i\in [k]}L_{\tau(w^{i,s}_u),\tau(w^{i,s}_v)}\]
By Items~\ref{it:balancedphases} and~\ref{it:approxindep} of Lemma~\ref{lem:SlySun}, for all $Y:V(H)\rightarrow \{\pl,\mi\}$ and $\tau:W\rightarrow \{0,1\}$, it holds that
\[\hat{\mu}\big(\hat{\Yc}(\sigma)=Y\big)=\Big(\frac{1\pm \epsilon}{2}\Big)^{|V(H)|}, \quad \hat{\mu}(\sigma_W=\tau\mid \hat{\Yc}(\sigma)=Y)=(1\pm \epsilon)^{|V(H)|} \prod_{v\in V}Q^{Y_v}_{W_v}(\tau_{W_v}),\]
where, for $v\in V(H)$, $Q^{\plm}_{W_v}$ is the product distribution defined in \eqref{eq:product} for the gadget $G_v$. Using that 
\[\mu(\hat{\Yc}(\sigma)=Y)\propto \big(\tfrac{1}{2}\pm\epsilon\big)^{|V(H)|}\prod_{e=(u,v)\in E(H)}\prod_{s\in \{\pl,\mi\}}\prod_{i\in [k]}\sum_{\tau:W^{i,s}_{e}\rightarrow \{0,1\}}Q^{Y_u}(\tau(w^{i,s}_u))L_{\tau(w^{i,s}_u),\tau(w^{i,s}_v)}Q^{Y_v}_{W_v}(\tau(w^{i,s}_v)).\]
Observe now that $\sum_{\tau:W^{i,s}_e\rightarrow \{0,1\}}Q^{Y_u}_{W_v}(\tau(w^{i,s}_u))L_{\tau(w^{i,s}_u),\tau(w^{i,s}_v)}Q^{Y_v}(\tau(w^{i,s}_v))=(\qb^{Y_u})^{\T} \Lb\, \qb^{Y_v}$, so
\[\mu(\hat{\Yc}(\sigma)=Y)\propto \big(\tfrac{1}{2}\pm\epsilon\big)^{|V(H)|}\prod_{(u,v)\in E(H); Y_u\neq Y_v}\big(f_{\pl \mi}(\lambda R)f_{\mi \pl}(\lambda R)\big)^{k}\prod_{(u,v)\in E(H); Y_u=Y_v}\big(f_{\pl \pl}(\lambda R)f_{\mi \mi}(\lambda R)\big)^{k}.\]
It follows that for 
\[A:=A(R)=\frac{f_{\pl \mi}(\lambda R)\cdot f_{\mi \pl}(\lambda R)}{f_{\mi \mi}(\lambda R)\cdot f_{\pl \pl}(\lambda R)}, \]
we have that  
\[\mu(\hat{\Yc}(\sigma)=Y)\propto \big(\tfrac{1}{2}\pm\epsilon\big)^{|V(H)|}A^{k\Cut_H(Y)}.\]
Now note that for $R>0$, we have $A>1$ since $q^{\pl}\neq q^{\mi}$ and for all $x$ we have that
\[f_{\pl \mi}(x)\cdot f_{\mi \pl}(x)-f_{\pl \pl}(x)\cdot f_{\mi \mi}(x)=(1-\beta \gamma)^3x^2 (q^{\pl}-q^{\mi})^2.\]
Note also that $A(0)=1$, while $B(R)-C(R)=(\log A(R))'$ thus proving that $A(R),B(R),C(R)$ are rational functions of $R$ satisfying \eqref{eq:vv6gvgg}.

Let $\hat{K}>1$ be such that $1-\frac{1}{\hat{K}}\leq \frac{4}{k\log A}$. From the estimates above, we have
\begin{equation}\label{eq:45gg4fcfr}
\mu\Big(\frac{\Cut_H\big(\hat{\Yc}(\sigma)\big)}{\MaxCut(H)}\leq \frac{1}{\hat{K}}\Big)\leq \frac{\big(\tfrac{1}{2}+\epsilon\big)^{|V(H)|} 2^{|V(H)|} A^{ \frac{k}{\hat{K}}\MaxCut(H)}}{\big(\tfrac{1}{2}-\epsilon\big)^{|V(H)|}A^{k \MaxCut(H)}}\leq \Big(\frac{1}{2}\Big)^{|V(H)|},
\end{equation}
where in the last inequality we used that 
\[A,\hat{K}>1,\quad \MaxCut(H)\geq \tfrac{1}{2}|E(H)|=\tfrac{3}{4}|V(H)|,\mbox{  and  } 2\big(\tfrac{1}{2}+\epsilon\big)/\big(\tfrac{1}{2}-\epsilon\big)\leq 4.\]
Since $\AvgCut_{\mu}(H):=\sum_{Y:V(H)\rightarrow \{\pl,\mi\}}\mu\big(\widehat{\Yc}(\sigma)=Y\big)|\Cut_H(Y)|$, from \eqref{eq:45gg4fcfr}, and using that $k\geq 10/\log A$, for  $|V(H)|\geq 2kA$, we obtain
\[\AvgCut_\mu(H)\geq \big(1-(1/2)^{|V(H)|}\big)\frac{\MaxCut(H)}{\hat{K}}\geq \frac{\MaxCut(H)}{K},\]
where $K=1+\frac{6}{k\log A}$ is as in the statement of the lemma, as wanted.

This finishes the proof of Lemma~\ref{lem:maingadget}.
\end{proof}

\subsection{Constant-factor inapproximability --- Proof of Theorems~\ref{thm:mainind3},~\ref{thm:mainIs3},~\ref{thm:maingen3}}\label{sec:f4343c}
We now give the relatively straightforward proofs of  Theorems~\ref{thm:mainind3},~\ref{thm:mainIs3},~\ref{thm:maingen3}.

\begin{proof}[Proof of Theorems~\ref{thm:mainind3},~\ref{thm:mainIs3},~\ref{thm:maingen3}]
Analogously to the proof of Theorem~\ref{thm:mainind1}, these theorems can  be obtained as immediate corollaries of Theorem~\ref{thm:maingen3} using the field-gadget constructions of Theorem~\ref{thm:const2}.  The only extra fact that we need to note is that in the case of the hard-core model ($\beta=1,\gamma=0$), we have that $(\beta,\gamma,\lambda)\in \Uc^*_\Delta$ iff $\lambda>\lambda_c(\Delta)$, for all $\Delta\geq 3$. Similarly in the case of the antiferromagnetic Ising model ($0<\beta=\gamma<1$), we have that $(\beta,\gamma,\lambda)\in \Uc^*_\Delta$ iff $\lambda\in  (\tfrac{1}{\lambda_c},\lambda_c)$, where $\lambda_c=\lambda^{\Ising}_c(\Delta,\beta)$, for all $\Delta\geq 3$. 
\end{proof}

\subsection{Proof of Theorems~\ref{thm:mainind2},~\ref{thm:mainIs2}, and~\ref{thm:maingen2}}\label{sec:maingen2}
We next give the proof of Theorem~\ref{thm:maingen2}, using Theorem~\ref{thm:occgadget}.
\begin{proof}[Proof of Theorem~\ref{thm:maingen2}]
Suppose for the sake of contradiction that, for arbitrarily small $\kappa>0$, there is a polynomial-time algorithm that, on input a graph $G$ of maximum degree $\Delta$, produces a $(1+\frac{\kappa}{\log |V(G)|})$-approximation of the magnetization $\Mc_{G;\beta,\gamma,\lambda}$. We will show that we can approximate $\MaxCut$ on 3-regular graphs within a constant factor arbitrarily close to 1, contradicting the inapproximability result of \cite{maxcut}.

 Let $A(R),B(R), C(R)$ be the rational functions in Lemma~\ref{lem:maingadget} and set $D(R)=B(R)-C(R)$. From Footnote~\ref{ft:4f343}, we have that for all but finitely many values of $R$ we have that $D(R)\neq 0$. Using Theorem~\ref{thm:occgadget}, we therefore conclude that there are constants $R^*,M, \Xi, L>0$ such that $D(R^*)\neq 0$ and an algorithm, which, on input a rational $r\in (0,1/2)$, outputs in time  $poly(\bit(r))$ a pair of field gadgets  $\mathcal{T}_1,\mathcal{T}_2$, each of maximum degree $3$ and size $\leq L|\log r|$, such that 
\begin{equation}\label{eq:fieldgadgets}
|R_1 - R^*|,|R_2 - R^*|\leq r, \quad |M_1|,|M_2|\leq \Xi, \quad |M_1-M_2|\geq \hat{M}.
\end{equation}
where, for $i\in\{1,2\}$, $R_i:=R_{\Tc_i}$ is the effective field of $\Tc_i$ and $M_i:=M_{\Tc_i}$ is the magnetization gap of $\Tc_i$. For convenience, for $i\in\{1,2\}$ we let $\Ac_i',A_i,B_i,C_i,D_i$ denote $\Ac_{\Tc_i}',A(R_i),B(R_i),C(R_i), D(R_i)$, respectively.  Since the functions $A,B,C,D$ are continuously differentiable with respect to $R$, $D(R^*)\neq 0$ and $A(R^*)>1$, there are small constants $r_0,\zeta>0$ and large constants $\eta, D^*>1$ so that, for all $r<r_0$,   \eqref{eq:fieldgadgets} further implies that
\begin{equation}\label{eq:conti123}
\begin{gathered}
A_1,A_2\geq 1+\zeta, \quad D_1,D_2\neq 0 \mbox{ and } |D_1|,|D_2|\leq D^*,\\
\Big|\frac{R_1}{R_2}-1\Big|\leq \eta r, \quad |C_1-C_2|\leq \eta r, \quad |D_1-D_2|\leq \eta r, \quad \Big|\frac{1}{\log  A_1}-\frac{1}{\log  A_2}\Big|\leq \eta r.
\end{gathered}
\end{equation}
Let $k$ be a large integer satisfying $k>1+10/\log (1+\zeta)$ and $\epsilon>0$ be a small constant (both to be specified later). Let $G\in \Gc^{3k}_n$ be a graph satisfying Lemma~\ref{lem:SlySun}.

Let $H$ be a 3-regular graph, an instance of the $\MaxCut$ problem. We will use the algorithm of Theorem~\ref{thm:occgadget} for $r=\min\{\tfrac{\epsilon}{50\eta k\Xi LD^*|V(G)||V(H)|^3},r_0\}$, so we obtain field gadgets $\Tc_1,\Tc_2$ that satisfy \eqref{eq:fieldgadgets} and therefore \eqref{eq:conti123} as well.  For $i\in \{1,2\}$, consider the graph $H^{k}_{G,\Tc_i}$ as in Lemma~\ref{lem:maingadget}. For later use, note that the number of vertices in $H^{k}_{G,\Tc_i}$ is at most 
\[|V(H)|\, |V(G)|+2k|E(H)|\, |V(\Tc_i)|\leq 10kL|E(H)|\log |V(H)|,\] 
using that $|V(\Tc_i)|\leq 4L\log |V(H)|$ and $|V(G)|\leq \log |V(H)|$ for all sufficiently large $|V(H)|$.

Let $\mu_i$ denote the Gibbs distribution on $H^{k}_{G,\Tc_i}$ with parameters $\beta,\gamma,\lambda$. By Lemma~\ref{lem:maingadget}, the average magnetization of the graph $H^{k}_{G,\Tc_i}$ satisfies 
\begin{equation}\label{eq:magsi}
\begin{aligned}
\Mc_{\beta,\gamma,\lambda}(H^{k}_{G,\Tc_i})&=4k\mathcal{A}_i'|E(H)|+\Eb_{\sigma\sim \mu_i}\Big[\big|\sigma_{V(\widehat{H}^{k}_{G})}\big|\Big]+(1\pm 8\epsilon)kM_{i} Q_i.
\end{aligned}
\end{equation}
where $Q_i=D_i\AvgCut_{\mu_i}(H)+C_i |E(H)|$ and $\AvgCut_{\mu_i}(H)$ satisfies
\begin{equation}\label{eq:avgcutib}
1/K_i\leq \displaystyle \frac{\AvgCut_{\mu_i}(H)}{\MaxCut(H)}\leq 1\mbox{ for }K_i:=1+\displaystyle\frac{6}{k\log A_i}.
\end{equation}
Let 
\begin{gather*}
\Dc:=\Mc_{\beta,\gamma,\lambda}(H^{k}_{G,\Tc_1})- \Mc_{\beta,\gamma,\lambda}(H^{k}_{G,\Tc_2}),\quad  \Dc':=\Eb_{\sigma\sim \mu_1}\big[\big|\sigma_{V(\widehat{H}^{k}_{G})}\big|\big]-\Eb_{\sigma\sim \mu_2}\big[\big|\sigma_{V(\widehat{H}^{k}_{G})}\big|\big],
\end{gather*}
so that from \eqref{eq:magsi}
\begin{equation}\label{eq:r455f35f3bbba}
\Dc=4k(\Ac_1'-\Ac_2')|E(H)|+\Dc'+(1\pm 8\epsilon)k \big((M_1-M_2)Q_1+M_{2} (Q_1-Q_2)).
\end{equation}
We next show that for all sufficiently large $|V(H)|$ it holds that
\begin{equation}\label{eq:r455f35f3bbb}
\begin{gathered}
\Dc'\leq |V(H)|\, |V(G)| \cdot 2k|E(H)|\big(|V(\Tc_1)|+|V(\Tc_2)|\big)\eta r\leq \epsilon,\\ |M_2(Q_1-Q_2)|\leq \Xi(D^*+2)|E(H)|\eta r\leq \epsilon/(10k).
\end{gathered}
\end{equation}
The first inequality follows from applying Lemma~\ref{lem:perturb} to the graph $H^{k}_G$ (without the field gadgets). Namely, for $i\in \{1,2\}$, denote by $\nu_i$ the Gibbs distribution on $H^{k}_G$ where every vertex in $\widehat{H}^k_G$ has field $\lambda$ whereas every vertex in $V(H^{k}_G)\backslash V(\widehat{H}^k_G)$ has field $\lambda R_i$. Then, we have that 
\begin{equation}\label{eq:3dcevtvt5hyhuhu}
\Eb_{\sigma\sim \mu_i}\big[\big|\sigma_{V(\widehat{H}^{k}_{G})}\big|\big]=\Eb_{\sigma\sim \nu_i}\big[\big|\sigma_{V(\widehat{H}^{k}_{G})}\big|\big]
\end{equation}
and hence the inequality follows by applying Lemma~\ref{lem:perturb} to each of the vertices in $V(\widehat{H}^{k}_{G})$ and using the bound on $\big|\tfrac{R_1}{R_2}-1\big|\leq \eta r$ from \eqref{eq:conti123}. The second inequality in \eqref{eq:r455f35f3bbb}  follows from the fact that $|M_2|$ is bounded by $\Xi$, the bounds on $|C_1-C_2|, |D_1-D_2|, |D_1|, |D_2|$ from \eqref{eq:conti123} and the bound $\AvgCut_{\mu_i}(H)\leq \MaxCut(H)$ from \eqref{eq:avgcutib}.

Using the purported algorithm for the magnetizations on $H^{k}_{G,\Tc_i}$, and since the latter graph has at most $10kL|E(H)|\log |V(H)|$ vertices for all sufficiently large $|V(H)|$, we can compute an estimate of $\Mc_{\beta,\gamma,\lambda}(H^{k}_{G,\Tc_i})$ that is off by at most an additive 
\[\frac{\kappa}{\log |V(H^{k}_{G,\Tc_i})|}10kL|E(H)|\log |V(H)|\leq 10 k L |E(H)|\kappa.\] By subtracting these estimates for $i\in \{1,2\}$, we therefore compute $\widehat{\Dc}$ which satisfies
\begin{equation}\label{eq:r455f35f3bbbc}
|\widehat{\Dc}-\Dc|\leq 20 k L |E(H)|\kappa.
\end{equation}
Then, since the magnetization gaps $M_1,M_2$ satisfy $|M_1-M_2|\geq \hat{M}$ and $D_1\neq 0$ (see \eqref{eq:fieldgadgets} and \eqref{eq:conti123}), we can compute $\widehat{\textsc{MC}}=\frac{\widehat{\Dc}-4k(\Ac_1'-\Ac_2')|E(H)|}{k(M_1-M_2)D_1}-\frac{C_1}{D_1} |E(H)|$ which using  \eqref{eq:r455f35f3bbba}, \eqref{eq:r455f35f3bbb} and \eqref{eq:r455f35f3bbbc} satisfies 
\begin{equation}\label{eq:4g566g6b}
\begin{aligned}
|\widehat{\textsc{MC}}-\AvgCut_{\mu_1}(H)|&\leq \frac{20 k L |E(H)|\kappa+8 k |M_1-M_2|\, |Q_1|\epsilon+2\epsilon}{k\,|M_1-M_2|\, |D_1|}\\
&\leq \Big(\tfrac{20 L}{\hat{M}\, |D_1|}\kappa +8(1+\tfrac{|C_1|}{|D_1|})\epsilon+\tfrac{2}{\hat{M}\, |D_1|}\epsilon\Big)|E(H)|.
\end{aligned}
\end{equation}

By choosing $k$ sufficiently large, we have from \eqref{eq:avgcutib} that $\AvgCut_{\mu_1}(H)$ is within a factor  arbitrarily close to 1 from  $\MaxCut(H)$. By choosing $\epsilon, \kappa$ to be sufficiently small positive constants we can further ensure from \eqref{eq:4g566g6b} that our approximation $\widehat{\textsc{MC}}$ is within a factor  arbitrarily close to 1 from  $\AvgCut_{\mu_1}(H)$, and hence from $\MaxCut(H)$ as well. This finishes the contradiction argument and completes the proof of Theorem~\ref{thm:maingen2}.
\end{proof}

\begin{proof}[Proof of Theorems~\ref{thm:mainind2} and~\ref{thm:mainIs2}]
The theorems can  be obtained as immediate corollaries of Theorem~\ref{thm:maingen2} using the field-gadget constructions of Theorem~\ref{thm:const1}. As in Section~\ref{sec:f4343c}, we again note  that in the case of the hard-core model ($\beta=1,\gamma=0$), we have that $(\beta,\gamma,\lambda)\in \Uc^*_\Delta$ iff $\lambda>\lambda_c(\Delta)$, for all $\Delta\geq 3$. Similarly in the case of the antiferromagnetic Ising model ($0<\beta=\gamma<1$), we have that $(\beta,\gamma,\lambda)\in \Uc^*_\Delta$ iff $\lambda\in  (\tfrac{1}{\lambda_c},\lambda_c)$, where $\lambda_c=\lambda^{\Ising}_c(\Delta,\beta)$, for all $\Delta\geq 3$.
\end{proof}

\bibliographystyle{plain}
\bibliography{\jobname}

\end{document}